\newtheorem{theorem}{Theorem}
\newtheorem{lemma}[theorem]{Lemma}
\newtheorem{definition}[theorem]{Definition}
\newtheorem{estimate}[theorem]{Estimate}
\def\RR{\mathbbm{R}}
\def\NN{\mathbbm{N}}
\def\EE{\mathbbm{E}}
\def\Id{\mathbbm{1}}
\def\O{\mathcal{O}}
\def\N{\mathcal{N}}
\newcommand{\norm}[1]{\left\lVert#1\right\rVert}
\newcommand{\trnorm}[1]{\norm{#1}_\mathrm{tr}}
\DeclareMathOperator{\tr}{tr}
\DeclareMathOperator{\Tr}{Tr}
\DeclareMathOperator{\sgn}{sign}
\DeclareMathOperator{\diag}{diag}
\DeclareMathOperator{\Pois}{Pois}
\begin{document}

\title{Solving quadratic binary optimization problems using quantum SDP methods: \\
Non-asymptotic running time analysis}

\author[1]{Fabian~Henze\footnote{fhenze2@thp.uni-koeln.de}}
\author[2]{Viet~Tran}
\author[3]{Birte~Ostermann}
\author[2]{Richard~Kueng}
\author[3]{Timo~de~Wolff}
\author[1]{David~Gross}

\renewcommand\Authands{ and }

\affil[1]{\small Institute for Theoretical Physics, University of Cologne, Germany}
\affil[2]{\small Institute for Integrated Circuits and Quantum Computing, JKU Linz, Austria}
\affil[3]{\small Institute for Analysis and Algebra, TU Braunschweig, Germany}

\date{}

\maketitle

\begin{abstract} 
Quantum computers can solve semidefinite programs (SDPs) using resources that scale better than state-of-the-art classical methods as a function of the problem dimension. 
At the same time, the known quantum algorithms scale very unfavorably in the precision,
which makes it non-trivial to find applications for which the quantum methods are well-suited.
Arguably, precision is less crucial for SDP relaxations of combinatorial optimization problems (such as the Goemans-Williamson algorithm), 
because these include a final rounding step that maps SDP solutions to binary variables.
With this in mind, 
Brandão, França, and Kueng 
have proposed to use quantum SDP solvers in order to achieve an \emph{end-to-end} speed-up for obtaining approximate solutions to combinatorial optimization problems.
They did indeed succeed in identifying an algorithm that realizes a 
polynomial quantum advantage in terms of its asymptotic running time.
However, asymptotic results say little about the problem sizes for which advantages manifest.
Here, we present an analysis of the non-asymptotic resource requirements of this algorithm. 
The work consists of two parts. 
First, we optimize the original algorithm with a particular emphasis on performance for realistic problem instances.
In particular, we formulate a version with adaptive step-sizes, an improved detection criterion for infeasible instances, and a more efficient rounding procedure. 
In a second step, we benchmark both the classical and the quantum version of the algorithm. 
The benchmarks did not identify a regime where even the optimized quantum algorithm would beat standard classical approaches for input sizes that can be realistically solved at all.
In the absence of further significant improvements, these algorithms therefore fall into a category sometimes called \emph{galactic}: 
Unbeaten in their asymptotic scaling behavior, but not practical for realistic problems.
\end{abstract}
\newpage
\tableofcontents
\newpage

\section{Introduction}

\subsection{Non-asymptotic analysis of algorithms}

Quantum algorithms are typically compared to classical approaches based on their asymptotic behavior.
Indeed, the lack of large-scale, fault-tolerant quantum hardware makes a direct comparison of practical performance difficult for the time being.

The danger of this approach lies in the fact that an asymptotic assessment hides constant factors, which can be substantial. 
Even if an advantageous scaling behavior has been rigorously established, such an analysis does not directly give information about the minimal size of instances for which a practical speed-up actually manifests.
One therefore encounters the risk of designing what is sometimes referred to as \emph{galactic algorithms}:
Computational methods that do outperform all others, but only at instance sizes that are so enormous that there is no hope that they will ever be practically relevant.
It is therefore a timely and important task to find non-asymptotic estimates for the resource use of quantum algorithms \cite{Abbas_2024, ammann2023realisticruntimeanalysisquantum, Dalzell_2023, dalzell2023quantumalgorithmssurveyapplications}.

In doing so, one faces the problem that algorithms published in the academic literature are often not optimized for practical performance. 
Up until recently, such optimizations were not seen as a priority:
The description of \emph{any} novel quantum algorithm beyond the well-known classes discovered by Shor~\cite{Shor94}, Grover~\cite{Grover96}, as well as Hassidim Harrow and Lloyd~\cite{HHL09}, 
displaying some form of quantum advantage,
is considered a significant achievement.
This, coupled with the fact that scaled-up quantum hardware remains out of reach, means that there has been little incentive for researchers to optimize implementations.

In particular if one expects to find negative results about the practical usability of published quantum algorithms, 
it is therefore incumbent upon those performing benchmarks, to first expend reasonable efforts to find an optimized implementation.
For this reason, a large part of the present work is spent on improving existing algorithms (Secs.~\ref{sec:class_imp} -- \ref{sec:quant_imp}).

\subsection{Combinatorial optimization}

We are concerned here with quadratic unconstrained binary optimization problems, often abbreviated as \emph{QUBOs}:
\begin{align}\label{eqn:quadratic_opt}
	\underset{x \in \mathbb{R}^n}{\text{maximize}} & \quad x^T C x = \sum_{i,j=1}^n C_{ij} x_i x_j \\
	\text{subject to}& \quad x_i = \pm 1 \quad \text{for $i=1,\ldots,n$}. \nonumber
\end{align}
for a symmetric cost matrix $C\in\RR^{n\times n}$.
Finding the optimizer is NP-hard, as can be seen e.g.\ by reduction from the problem of finding a maximum cut in a graph (\textsc{MaxCut}) -- one of Karp's 21 NP-complete problems~\cite{Karp1972}.

The seminal work by Goemans-Williamson \cite{goemansWilliamson} provides a randomized rounding procedure with rigorous approximation guarantees for \textsc{MaxCut}, using the semidefinite programming (SDP) relaxation of the original problem.
This SDP relaxation of Problem \eqref{eqn:quadratic_opt} is given by
\begin{align}
	\begin{split}\label{eqn:relaxed}
		\underset{X \in \mathbb{R}^{n \times n}}{\text{maximize}} & \quad \tr \left( C^T X \right) = \sum_{i,j=1}^n C_{ij} X_{ij} \\
		\text{subject to} & \quad \mathrm{diag}(X) = 1, X \succeq 0, 
	\end{split}
\end{align}
where the final constraint demands that $X$ is symmetric ($X^T=X$) and positive semidefinite, or psd for short. 
The quantitative relation between the original problem (\ref{eqn:quadratic_opt}) and the relaxed version (\ref{eqn:relaxed}) depends on properties of the coefficient matrix $C$ 
(e.g.\ whether is it positive in the SDP- or element-wise sense; see, e.g.\ Refs.~\cite{RoundingGrothendieck, friedland2020symmetricgrothendieckinequality, Briet_2014}).  
In this paper, we restrict attention to matrices of the form
\begin{align} \label{eq:C_block}
    C =
        \begin{pmatrix}
            0 & B \\ B^T & 0
        \end{pmatrix}, 
\end{align}
where $B\in \RR^{n/2\times n/2}$. 
In order to fix a consistent normalization, we will throughout assume that the coefficient matrix is normalized in operator norm (or spectral norm, i.e.\ the largest singular value), $\|C\|=1$.
For such instances, the value obtained from applying a \emph{randomized rounding} procedure to the solution of (\ref{eqn:relaxed}), that gives a feasible solution for the original problem, is at most a factor of 
$(\frac{4}{\pi}-1)\simeq.273$ 
smaller than the optimal value of the original problem~\cite[Sec.~4.1]{RoundingGrothendieck}. 
Remarkably, the relation does not depend on the dimension $n$.

The relaxed problem~\eqref{eqn:relaxed} can be represented as a semidefinite program (SDP).
As such, it can be tackled, e.g.\ via \textit{interior point methods} or the \textit{ellipsoid method}. 
These methods, in particular, interior point methods, are effectively solvable, and ellipsoid methods are, moreover, solvable in polynomial time if proper starting criteria are met. 
While it the polynomial time solvability of SDPs is not fully settled in general yet due to an observation by O'Donnell \cite{ODonnell17}, the polynomial time solvability for SDPs relaxed from QUBOs via Sums of Squares follows from Raghavendra and Weitz \cite{RaghavendraW17}. 
We refer to standard textbooks, such as~\cite{Bar02,BD04}, for further details. 
In practice, though, SDP solvers perform poorly in the problem size, both in terms of running time and memory requirements, thus even a polynomial advantage might therefore have significant practical impact.

To simplify the comparison between various methods, we state their performance for the particular case where
the matrix $B$ is chosen to be a matrix with column sparsity $s$, where the non-zero entries are i.i.d.\ Gaussian random variables.
We also assume that the exponent of matrix multiplication equals 3,
reflecting the scaling in practical applications.
Let $\mu$ be the desired precision of the SDP solution.
Then, the two prevalent approaches to solve the SDP in Eq.~\eqref{eqn:relaxed} have the following asymptotic running times:
\begin{enumerate}
\item[(i)] \emph{Interior point methods} require $\O(n^{4}\log(\mu^{-1}))$ floating-point operations; see, e.g.~\cite{InteriorPoint}.
\item[(ii)] \emph{Matrix multiplicative weight methods} require $\tilde\O(\min\{n^{2.5}s\mu^{-2.5}, n^{2.5}s^{0.5}\mu^{-3.5}\})$
floating-point operations; see, e.g.~\cite{MMW}.
\end{enumerate}

We note that $\mu$ refers to the approximation of the optimal value of the relaxed problem (\ref{eqn:relaxed}), not the original one (\ref{eqn:quadratic_opt}).
The latter, being NP-hard to approximate by the PCP theorem,
does of course not admit any polynomial-time approximation under standard complexity-theoretic assumptions.

With this goal in mind, Ref.~\cite{Brandao2022fasterquantum} developed 
a classical and a quantum 
\emph{Hamiltonian Update algorithm} for solving the SDP relaxation (\ref{eqn:relaxed}) using resources that scale better than off-the-shelf solvers in the problem size.
On the flip side, their algorithm converges very slowly. 
That is, the proven running time bounds are extremely unfavorable in the precision $\mu$.
More precisely, their main theorem is:

\begin{theorem}[{\cite[Thm.~1]{Brandao2022fasterquantum}}]
    Let $C$ be a (real-valued) symmetric $n \times n$ matrix with column sparsity $s$. 
		Then the 
		problem \eqref{eqn:relaxed} can be solved up to additive accuracy $n\norm{C}\mu$ in running time
    \begin{align*}
        \tilde\O(n^{1.5} (\sqrt{s})^{1+o(1)} \mu^{-28+o(1)}\exp(1.6\sqrt{12\log(\mu^{-1})}))
    \end{align*} 
    on a quantum computer and in running time
    \begin{align*}
        \tilde \O (\min\{n^2 s \mu^{-12}, n^3 \mu^{-8} \}) 
    \end{align*}
    on a classical computer.
\end{theorem}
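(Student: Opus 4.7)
The plan is to reformulate the SDP~\eqref{eqn:relaxed} as a sequence of feasibility problems accessible to a matrix multiplicative weights / Hamiltonian Update framework, and then to analyze the cost of implementing one update step on a quantum and on a classical computer, respectively.

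First, I would reduce optimization to feasibility via binary search on the target value $\alpha \in [-n\norm{C}, n\norm{C}]$. For each $\alpha$, after dividing by $n$, one asks whether there exists a density matrix $\rho = X/n$ with $\rho \succeq 0$, $\tr(\rho)=1$, $\tr(C\rho) \ge \alpha/n$, and $\rho_{ii} = 1/n$ for every $i$. The binary search contributes only a $\log(\mu^{-1})$ factor; the core task is solving this feasibility problem to precision $\mu$.

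Next, I would apply the Hamiltonian Update (matrix multiplicative weights) meta-algorithm to each feasibility instance. Maintain a Hamiltonian $H_t$, form the Gibbs state $\rho_t = \exp(-H_t)/\tr\exp(-H_t)$, search for a constraint that is violated by at least $\mu$ (either the objective constraint, or one of the $n$ diagonal constraints $\rho_{ii} = 1/n$), and update $H_{t+1} = H_t + \eta M_t$ where $M_t$ encodes the violated constraint and $\eta$ is a small step size. The standard trace-exponential potential argument bounds the number of rounds needed by a polynomial in $\mu^{-1}$; if none is found in a round, the current $\rho_t$ certifies $\mu$-feasibility, and if feasibility is reported for every $\alpha$ above some threshold a dual separator is produced. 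This part of the argument is classical and dimension-free in its iteration count.

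The bulk of the work is then estimating the cost of a single iteration. On a quantum computer, prepare $\rho_t$ by block-encoding $H_t$ (a sum of sparse $\eta M_\tau$ terms) and applying a polynomial approximation of $\exp(-H_t)$ via quantum singular value transformation, followed by amplitude amplification to purify it; expectation values $\tr(M\rho_t)$ of sparse observables are then obtained by amplitude estimation, which is where the $\sqrt{n}$ and $\sqrt{s}$ factors originate. The search for a violated diagonal constraint among $n$ candidates is sped up via a Grover-type maximum finding routine, giving another $\sqrt{n}$. Multiplying the per-iteration cost by the iteration count produces the stated quantum running time. On a classical computer, I would instead compute $\rho_t$ directly, either via eigendecomposition (cost $n^3$ independent of sparsity, giving the $n^3 \mu^{-8}$ branch) or via truncated Taylor/Chebyshev expansion using sparse matrix--vector multiplications (cost $n^2 s$, with a worse $\mu^{-12}$ exponent), and take the minimum.

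The main obstacle is the precision accounting for the quantum branch. The inverse temperature $\beta_t \sim \eta t$ grows through the iterations, and the degree of the polynomial needed to approximate $\exp(-H_t)$ to additive error $\varepsilon$ scales like $\sqrt{\beta_t \log(1/\varepsilon)}$; this must be composed with the subnormalization losses of the block encoding, the $\mathrm{polylog}(1/\varepsilon)$ cost of amplitude estimation, and the failure probabilities of the Grover step, all carried through a number of iterations that itself grows polynomially in $\mu^{-1}$. Balancing these contributions and propagating the resulting errors through the matrix multiplicative weights analysis is exactly what produces the exotic factor $\mu^{-28+o(1)}\exp(1.6\sqrt{12\log(\mu^{-1})})$; getting the constants and $o(1)$ terms right is the delicate part, whereas the $n^{1.5}\sqrt{s}$ dimension scaling follows from a relatively routine combination of block-encoding, amplitude estimation, and Grover search.
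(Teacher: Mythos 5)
This statement is quoted verbatim from Ref.~\cite{Brandao2022fasterquantum} and is not proved in the present paper; the closest thing to a proof here is the description of the Hamiltonian Updates algorithm in Secs.~\ref{sec:HU} and \ref{sec:quant_imp}. Measured against that, your overall architecture (binary search to feasibility, Gibbs-state iterates, matrix-multiplicative-weights potential argument, block-encoding plus amplitude amplification for the quantum subroutines, eigendecomposition versus truncated series for the classical ones) is the right one. However, two of your key mechanisms do not match the algorithm whose running time you are trying to establish. First, you treat the $n$ diagonal constraints $\rho_{ii}=1/n$ individually and propose finding a single violated one by Grover-type maximum finding. The actual algorithm aggregates \emph{all} diagonal violations into one update matrix $P_d$ with $\tr(P_d\rho)=\sum_i|\rho_{ii}-1/n|$, and the feasibility criterion is on this $\ell_1$ sum, not on individual entries; this aggregation is what makes the $\O(\epsilon^{-2}\ln n)$ iteration bound compatible with the $\ell_1$-feasibility needed for the rounding guarantee. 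A per-constraint trigger at threshold $\mu$ would leave up to $n\mu$ of total $\ell_1$ violation undetected, and lowering the threshold to $\mu/n$ would destroy the dimension-independence of the iteration count. Correspondingly, the $n^{1.5}$ factor does not come from two $\sqrt{n}$'s (amplitude estimation plus Grover); it is $n$ computational-basis samples of the Gibbs state (each of the $\O(n\epsilon^{-2})$ preparations costing $\tilde\O(\sqrt{n}\,\sqrt{s}\,\|H\|)$ gates) needed to estimate the entire diagonal to $\ell_1$ error $\epsilon$.

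Second, you attribute the exponent $\mu^{-28+o(1)}$ to error propagation through the iterations, but its actual origin is a clean product of three factors: $\tilde\O(\epsilon^{-2})$ iterations, a per-iteration quantum cost of $\tilde\O(\epsilon^{-5+o(1)})$ (driven by $\|H\|=\O(\log(n)\epsilon^{-1})$ in the Hamiltonian simulation and the $\epsilon^{-2}$ sampling overhead), and, crucially, the relation $\epsilon=\Theta(\mu^{4})$ forced by the $\O(\epsilon^{1/4})$ stability bound for correcting an approximately-diagonal solution (the quantity this paper improves to $\O(\epsilon^{1/3})$ in Thm.~\ref{thm:muScaling}). Without identifying this quartic conversion between the feasibility precision $\epsilon$ and the SDP precision $\mu$, neither the $-28$ nor the classical exponents $-12$ and $-8$ (which arise as $(\epsilon^{-2}\cdot\epsilon^{-1})|_{\epsilon=\mu^4}$ and $(\epsilon^{-2})|_{\epsilon=\mu^4}$ respectively) can be recovered. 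The $\exp(1.6\sqrt{12\log(\mu^{-1})})$ factor likewise comes from the specific Gibbs-state preparation routine used in Ref.~\cite{Brandao2022fasterquantum}, not from the multiplicative-weights analysis.
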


We are thus faced with the following situation:
The algorithm of Ref.~\cite{Brandao2022fasterquantum} shows that, for fixed precision $\mu$, quantum computers can, in principle, solve the SDP relaxation (\ref{eqn:relaxed}) 
with a running time that scales more favorably in the problem size than any known classical approach with rigorous performance guarantees.
At the same time, the proven dependency on $\mu$ suggests that the theoretical advantage will only manifest for enormously large problem instances, which cannot be practically executed at all.

The purpose of the present paper is to optimize the results of Ref.~\cite{Brandao2022fasterquantum}, derive sharper bounds, and finally benchmark the performance of the optimized version, in order to estimate its performance for realistic scenarios.

Another recent follow-up to Ref.~\cite{Brandao2022fasterquantum} is Ref.~\cite{augustino2023solvingsemidefiniterelaxationqubos}, which explores the use of 
\emph{iterative refinement} techniques to speed up convergence.
We have not included a quantitative comparison between these two approaches in this manuscript,
because no reference implementation is available,
and we are still in the process of clarifying some questions about algorithmic details with the authors \cite{brandonPrivate}.

We refer to Refs.~\cite{BrandaoInequality, brandao2019quantumsdpsolverslarge, vanapeldoorn_improvements, vanApeldoorn2020quantumsdpsolvers} for more details on quantum algorithms for solving SDPs. 

\subsection{Organization of the paper}
The structure of the paper is as follows:

In Sec.~\ref{sec:HU}, we provide an overview of the Hamilton Updates (HU) algorithm introduced in Ref.~\cite{Brandao2022fasterquantum}. 
Sec.~\ref{sec:class_imp} presents several non-asymptotic improvements to HU that enhance both its classical and quantum performance. 
In particular:
Secs.~\ref{sec:adaptive_step_size}-\ref{sec:entropy} detail methods for reducing the number of iterations needed in practice to find a feasible solution or to certify infeasibility. 
In Sec.~\ref{sec:numerics_improvements}, we numerically compare the performance of the improved algorithm to the original one.
Sec.~\ref{sec:rounding} provides an asymptotically tighter bound on the precision of the objective value after rounding, improving upon the results in Ref.~\cite{Brandao2022fasterquantum}. Complementing this, Sec.~\ref{sec:eps_scaling} numerically examines the behavior of the precision after rounding.
Sec.~\ref{sec:quant_imp} presents an improved quantum subroutine that uses algorithms with better dependence on the precision for Gibbs state preparation and Hamiltonian simulation, offering improvements over Ref.~\cite{Brandao2022fasterquantum}. 
Sec.~\ref{sec:asymptotic_improvements} provides an overview of the analytical and numerical results from Secs.~\ref{sec:rounding_section} and \ref{sec:quant_imp}. 
Finally, in Sec.~\ref{sec:benchmarking}, we benchmark the improved algorithm by running it classically and estimating the minimum number of gates required for a quantum implementation based on the constructions in Refs.~\cite{Low2019Qubitization, vanApeldoorn2020quantumsdpsolvers}.

\subsection{Notation}

The following table summarizes the notational conventions we use throughout the manuscript:

\begin{table}[H]
    \centering
    \small
    \begin{tabular}{lll}
        \hline
        \textbf{Symbol} & \textbf{Description} & \textbf{Definition }\\
        \hline
				$A \preceq B$ & positive semidefinite (psd) order & $A\preceq B \Leftrightarrow$ 
				$A, B$ symmetric, and  
				\\
									&&
				$x^T A x \leq x^T B x$
				$\; \forall x\in \RR^n$ \\
        $\norm{x}_{\ell_1}$ &  $\ell_1$ norm & $\norm{x}_{\ell_1}=\sum_i|x_i|$ \\
        $\norm{A}$ & operator 
				(or Schatten-$\infty$) 
				norm & $\norm{A} = \sup_{x\in\RR^n: \norm{x}=1}\{\norm{Ax}\}$ \\
        $\trnorm{A}$ & trace (or Schatten-1) norm &  $\trnorm{A}= \tr(|A|),\quad |A|=\sqrt{A^TA}$ \\
        $\norm{A}_\mathrm{max}$ & max (or $\ell_\infty$) norm & $\norm{A}_\mathrm{max} = \max_{ij} |A_{ij}|$ \\
        $\norm{A}_{\ell_1\rightarrow\ell_2}$ & $\ell_1\rightarrow \ell_2$ norm & $\norm{A}_{\ell_1\rightarrow\ell_2} = \sup_{x\in\RR^n: \norm{x}_{\ell_1}=1}\{\norm{Ax}\}$ \\
        $\norm{A}_{\tr\rightarrow\tr}$ & trace-to-trace norm & 
        $\norm{A}_{\tr\rightarrow\tr} = \sup_{B\in\RR^{n\times n}: \trnorm{B}=1}\{\trnorm{A(X)}\}$ \\
        $\norm{A}_{\infty\rightarrow\infty}$ & infinity-to-infinity norm & 
        $\norm{A}_{\infty\rightarrow\infty} = \sup_{B\in\RR^{n\times n}: \norm{B}=1}\{\norm{A(X)}\}$ \\
        $[n]$ & index set & $[n]=\{i\in\NN | 1\leq i\leq n\}$ \\
        $R(\rho \| \sigma )$ & quantum relative entropy & $S(\rho \| \sigma) = \tr \left( \rho (\ln\rho - \ln\sigma) \right)$ \\
        & & with $\rho,\sigma\succeq 0, \tr(\rho)=\tr(\sigma)=1$ \\
        $C$ & normalized cost matrix & $C\in\RR^{n\times n}, C^T=C, \|C\|=1$ \\
        $x$ & solution for QUBO \eqref{eqn:quadratic_opt} & $x\in \{-1,1\}^n$ \\
        $X$ & solution for SDP \eqref{eqn:relaxed} & $X\in \RR^{n\times n}$, $X$ is psd, $\diag(X)=\Id$\\
        $H$ & Hamiltonian and (potential) solution for \eqref{eq:approx_feasibility_constraint} & $H\in\RR^{n\times n}, H^T=H$ \\
        $\rho$ or $\rho_H$ & Gibbs state and (potential) solution for \eqref{eqn:epsilon_exact_program}
        & $\rho_H=\exp(-H)/\tr(\exp(-H))$ \\
        $\gamma$ & threshold for objective value & Defined in \eqref{eq:approx_feasibility_constraint}. \\
        $n$ & dimension of cost matrix & \\
        $s$ & (column) sparsity & maximum number of non-zero entries \\
             && per column in a matrix \\
        $\mu$ & SDP precision & Let $X^\star$ be an optimal solution to \eqref{eqn:relaxed}.  \\
        && A solution $X$ to \eqref{eqn:relaxed} has precision $\mu$ if \\
        && $\tr(X^\star C)-\tr(XC)\leq n\mu$. \\
        $\epsilon$ & precision of HU constraints & Defined in \eqref{eq:approx_feasibility_constraint}. \\
        $\nu$ & precision after randomized rounding & Defined in \eqref{eq:nu}. \\
        $\Delta H$ & matrix for HU (cost/diagonal) update & $H\mapsto H+\lambda \Delta H$, defined in \eqref{eq:Delta_H} \\  
        $\lambda$ or $\lambda_c,\lambda_d$ & step size/length for HU (cost/diagonal) update & \\
        $P_c$ & matrix for cost update & $P_c=\gamma\Id-C$ \\
        $\begin{aligned}P_d^{\ell_1}\\ \  \end{aligned}$ 
        & $\begin{aligned}
            \text{matrix for diagonal update with } \ell_1 \text{ norm} \\ \ 
        \end{aligned}$
        & $\begin{aligned}
            P_d^{\ell_1}&=\sgn(\diag(\rho)-\Id/n)  \\ &- \tr(\sgn((\diag\rho)-\Id/n))/n \Id
        \end{aligned}$ \\
        $P_d^{\ell_2}$ & matrix for diagonal update with $\ell_2$ norm 
        & $P_d^{\ell_2}=(\diag(\rho)-\Id/n) / \max_i|\rho_{ii} - 1/n|$ \\
        $M^{(k)}$ & momentum in $k^\mathrm{th}$ iteration & $M^{(k)}=\lambda_{c/d}(\Delta H)^{(k)}$ \\
        $\beta$ & momentum hyperparameter & $0\leq\beta<1$ \\
        $F$ & free energy & $F(H)=-\ln( \tr( \exp(-H)))$ \\
        $b$ & number of qubits used to store $H$ & \\
        \hline 
    \end{tabular}
    \caption{Summary of notation used in this paper.}
    \label{tab:notation}
\end{table}

\section{The Hamiltonian Update algorithm} \label{sec:HU}

Now, we give a summary of the Hamiltonian Updates algorithm for solving the
SDP in Eq.~\eqref{eqn:relaxed}. 
Compared to the original presentation of Ref.~\cite{Brandao2022fasterquantum}, we slightly modify the notation involving the update step, in a way that facilitates stating the improvements in Sec.~\ref{sec:class_imp}. We give the high-level algorithm in Alg.~\ref{alg:HU_simple} and an illustration in Fig.~\ref{fig:HU}.

\subsection{Reduction to feasibility problems} \label{sec:refProblem}

We express the HU algorithm as an optimization over the renormalized psd matrix
\begin{align}
	\rho = \frac1n X.
\end{align}

The two representations are obviously equivalent -- but the convention adopted here will later allow us to formulate a quantum version, where $\rho$ will be a \emph{density matrix} (i.e.\ $\rho \succeq 0$ and $\tr(\rho)=1$) describing the state of a physical quantum system.
In particular, the constraint on the diagonal now reads
$\diag(\rho)=\frac{1}{n}\Id$.

Using the matrix Hölder inequality, we can then bound the objective function of the SDP relaxation: 
\begin{align*}
\left| \tr \left( C\rho \right) \right| \leq \| C \| \norm{\rho}_{\mathrm{tr}}= \|C\| \tr (\rho) =1,
\end{align*}
because the trace norm of a psd matrix is equal to its trace. This ensures that the optimal value of the re-scaled version of SDP~\eqref{eqn:relaxed} falls into the interval $[-1,1] $.

In a next step, we will 
transform the optimization problem into a series of feasibility problems.
Choose a precision parameter $\epsilon_b$ and perform a binary search over the interval $[-1,1]$, to find a value $\gamma^\star$ such that the program 
\begin{align}
	\begin{split}\label{eqn:exact_program}
    \underset{\rho \in \mathbb{R}^{n \times n}}{\text{find}} \quad 
    &\rho \succeq 0,\quad \tr(\rho)=1 \\
    \text{subject to} \quad 
    &\gamma - \tr(C \rho)  \leq 0 \\ 
    \text{and} \quad
    &\sum_i \Big|\rho_{ii} - \frac 1 n\Big| = 0 
	\end{split}
\end{align}
is feasible for $\gamma=\gamma^\star$, but not for $\gamma=\gamma^\star+\epsilon_b$.
A binary search finds such a value in $\O(\log_2(\epsilon^{-1}_b))$ iterations.
Therefore, the feasibility problems (\ref{eqn:exact_program}) can find the optimal value of (\ref{eqn:relaxed}) 
with an overhead that is logarithmic in the desired precision.

However,
even the feasibility problem cannot be decided by a practical algorithm for exact constraints.
Let $\epsilon_f$ be another precision parameter.
We say that $\rho$ is \emph{$\epsilon_f$-feasible} for the program (\ref{eqn:exact_program}) if it satisfies
\begin{align}
	\begin{split}\label{eqn:epsilon_exact_program}
    \underset{\rho \in \mathbb{R}^{n \times n}}{\text{find}} \quad 
    &\rho \succeq 0,\quad \tr(\rho)=1 \\
    \text{subject to} \quad 
    &\gamma - \tr(C \rho)   < \epsilon_f \\
    \text{and} \quad
    &\sum_i \Big|\rho_{ii} - \frac 1 n\Big|  < \epsilon_f.
	\end{split}
\end{align}
The program (\ref{eqn:exact_program}) is \emph{$\epsilon_f$-feasible} if an $\epsilon_f$-feasible $\rho$ exists.

In the section below, we will introduce the \emph{Hamiltonian Update} (HU) routine for solving Eq.~(\ref{eqn:epsilon_exact_program}).
It satisfies the following conditions:
\begin{enumerate}
    \item If (\ref{eqn:exact_program}) is feasible, the HU routine outputs an $\epsilon_f$-feasible solution $\rho$.
    \item If (\ref{eqn:exact_program}) is not $\epsilon_f$-feasible, the HU routine outputs no solution.
    \item If (\ref{eqn:exact_program}) is $\epsilon_f$-feasible but not feasible, the HU routine outputs either an $\epsilon_f$-feasible solution or no solution.
\end{enumerate}
Thus, conversely,
\begin{enumerate}
	\item
		If the HU algorithm succeeds, it will output an $\epsilon_f$-feasible $\rho^\star$.
		In Sec.~\ref{sec:rounding_section}, we will describe \emph{rounding algorithms}, which construct exact solutions given $\rho^\star$.
	\item
		If the HU algorithm fails, we know that (\ref{eqn:exact_program}) is not strictly feasible.
\end{enumerate}
Therefore, a binary search using the HU algorithm will output an $\epsilon_f$-feasible solution $\rho^\star$ with objective value $\gamma^\star$, such that (\ref{eqn:exact_program}) is not strictly feasible for $\gamma=\gamma^\star+\epsilon_b$.

For simplicity, in what follows, we will restrict to the case where $\epsilon_f = \epsilon_b$, and denote this common precision parameter by $\epsilon$.

\subsection{The Hamiltonian Update step}
\label{sec:the_hu_step}

The idea behind Hamiltonian Updates is to express $\rho$ as a \emph{Gibbs states}, i.e.\ one writes
\begin{align}
	\rho_H = \frac{\exp(-H)}{\tr(\exp(-H))}
\end{align}
for some symmetric matrix $H\in\RR^{n\times n}$. 
We refer to $H$ as the \emph{Hamiltonian}, in accordance with standard physics terminology.
This way, the constraints $\rho \succeq 0$ and  $\tr(\rho)=1$ are automatically satisfied. 
Thus, the problem reduces to finding an $H$ such that the constraints on $\rho_H$ are fulfilled:
\begin{align}
	\begin{split} \label{eq:approx_feasibility_constraint}
    \underset{\substack{H \in \mathbb{R}^{n \times n} \\ H=H^T}}{\text{find}} &\quad 
    H, \\
    \text{subject to} &\quad 
    \gamma - \tr(C \rho_H)  < \epsilon  \\
    \text{and} &\quad
    \sum_i \Big|(\rho_H)_{ii} - \frac 1 n\Big| < \epsilon 
	\end{split}
\end{align}
To find a suitable Hamiltonian, we start with the $n \times n$ zero matrix $H=0$, and refine it in a series of \emph{update steps}, described next.

\begin{figure}[H]
    \centering
    \begin{subfigure}[b]{\textwidth}
        \centering
        \includegraphics[width=0.75\textwidth]{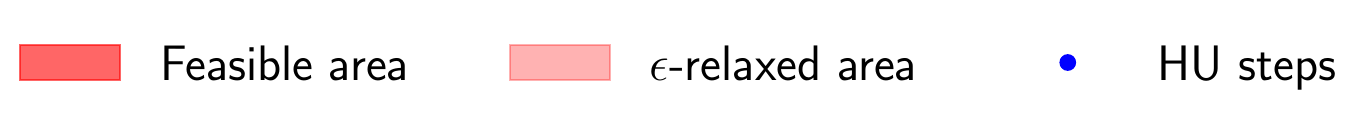}
    \end{subfigure}\\
    \begin{subfigure}[b]{0.32\textwidth}
        \includegraphics[width=\textwidth]{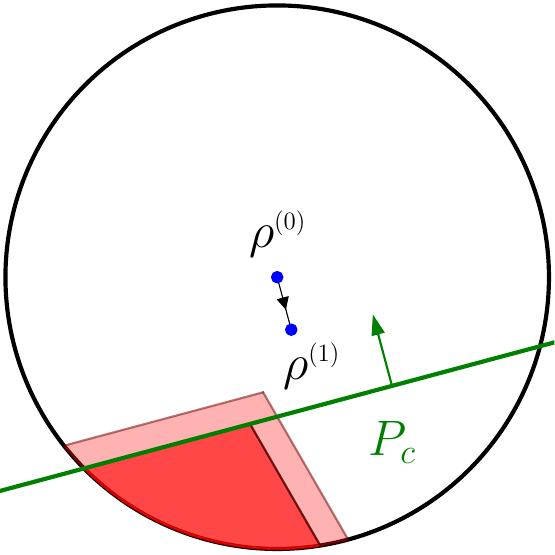}
        \caption{}
    \end{subfigure}
    \begin{subfigure}[b]{0.32\textwidth}
        \includegraphics[width=\textwidth]{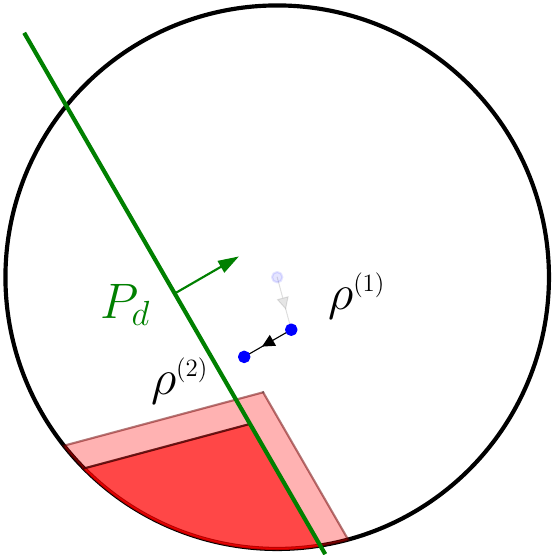}
        \caption{}
    \end{subfigure}
    \begin{subfigure}[b]{0.32\textwidth}
        \includegraphics[width=\textwidth]{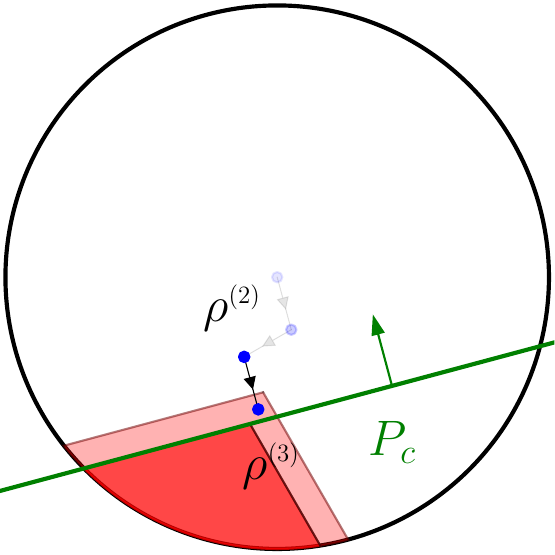}
        \caption{}
    \end{subfigure}
    \hfill
    \caption{
        \textit{Illustration of the Hamiltonian Updates algorithm.}
        The circle depicts the space of trace-one psd matrices, with $\rho_0$ lying in the center. 
				The feasible region is shown in dark red, the $\epsilon$-feasible region is marked light red.
        Each graphic shows a single iteration of HU: At the start of each update, a matrix $P$ is calculated that defines a hyperplane (shown in green) separating the current $\rho$ and the feasible region. By updating $\rho_H\rightarrow\rho_{H+\lambda P}$ (i.e.\ penalizing infeasible directions), $\rho$ moves towards the hyperplane. For simplicity this is depicted by a straight line. This is generally not the case, as $\rho$ depends non-linearly on $H$. The procedure ends when $\rho$ enters the $\epsilon$-feasible region (i.e.\ all constraints are fulfilled up to precision $\epsilon$ as defined in \eqref{eq:approx_feasibility_constraint}).
    }
    \label{fig:HU}
\end{figure}

\begin{algorithmCap}
\begin{algorithm}[H]
    \caption{Simplified Hamiltonian Updates}\label{alg:HU_simple}
    \begin{algorithmic}[1]
        \Require 
            Cost matrix $C$, threshold objective value $\gamma$, precision parameter $\epsilon$ 
            \Statex
        \noindent
        \Ensure \quad \begin{tabular}[t]{l|l}  
              \textbf{Condition} & \textbf{Output} \\
              \hline
              (\ref{eqn:exact_program}) is feasible & an $\epsilon$-feasible $\rho$ \\
              (\ref{eqn:exact_program}) is not $\epsilon$-feasible & false \\
              else & undefined (an $\epsilon$-feasible $\rho$ or false)
        \end{tabular}
		\Statex
                
        \Function{hamiltonian\_updates}{$C, \gamma, \epsilon$}
        \State $H \gets 0_{n\times n}, \rho\gets \Id/n$ and $F\gets -\ln(n)$
        \While{$F\leq0$}
            \Comment{Main loop of HU}
            \If{$\rho$ is $\epsilon$-feasible}
                \State \Return $\rho$ 
            \Else
                \State \textbf{compute} $\Delta H$
                \Comment{Computed from the violations of $\rho$}
                \State $H \gets H + \lambda\Delta H$
                \Comment{Update Hamiltonian}
                \State $\rho \gets \exp(-H)/\tr(\exp(-H))$
                \Comment{Update Gibbs state}
                \State $F\gets -\ln(\tr(\exp(-H)))$
                \Comment{Update free energy}
            \EndIf
        \EndWhile
        \State \Return false
        \Comment{$F>0 \rightarrow$ No feasible solution exists}
    \EndFunction
     \end{algorithmic}
     \end{algorithm}
\end{algorithmCap}

To keep notation succinct, we 
usually make the dependence of $\rho_H$ on $H$ implicit, and write $\rho=\rho_H$ if the Hamiltonian is clear from context.
Let $\tilde{\rho}=\mathrm{diag}(\rho)$ be the diagonal matrix with $\tilde\rho_{ii}=\rho_{ii}$. 
We define the two symmetric matrices:
\begin{align}
    P_c &= \gamma\Id - C, \label{eq:P_c}\\ 
    P_d^{\ell_1} &=\sgn(\tilde\rho-\Id/n) - \tr(\sgn(\tilde\rho-\Id/n))/n \Id,  \label{eq:P_d_1}
\end{align}
where the sign function is applied element-wise to the matrices. 
(The origin of the superscript $\ell_1$ for $P_d$ will become clear in 
Sec.~\ref{sec:Pd}).

The matrices $P_c$ and $P_d^{\ell_1}$ allow us to reformulate the two feasibility constraints \eqref{eq:approx_feasibility_constraint} as 
\begin{align}
    \gamma - \tr(C \rho) &= \tr(P_c \rho) < \epsilon \label{eq:c-constraint} \\
    \sum_i |\rho_{ii} - 1/n| &= \tr(P_d^{\ell_1} \rho) < \epsilon. \label{eq:d-constraint}
\end{align}
While the correctness of \eqref{eq:c-constraint} is easy to see, \eqref{eq:d-constraint} needs a short calculation to confirm:
\begin{align}\begin{split}
    \sum_i |\rho_{ii} - 1/n| 
    &= \sum_i (\rho_{ii} - 1/n) \sgn(\rho_{ii} - 1/n) \\
    &= \sum_i \rho_{ii}\sgn(\rho_{ii} - 1/n) - \sum_i \sgn(\rho_{ii} - 1/n) / n \\
    &= \tr(\rho \sgn(\tilde\rho - \Id/n)) - \tr(\rho) \tr(\sgn(\tilde\rho-\Id/n) / n ) \\
    &= \tr(\rho \sgn(\tilde\rho - \Id/n)) - \tr(\rho \tr(\sgn(\tilde\rho-\Id/n)) / n ) \\
    &= \tr(\rho P_d^{\ell_1}),
\end{split}\end{align}
where we used $\tr(\rho) = 1$ in the third
step. 

The matrices $P_c$ and $P_d^{\ell_1}$ can be interpreted as normal vectors for hyperplanes in the vector space of symmetric $n \times n$ matrices which we also endow with the trace (or Frobenius) inner product $A, B \mapsto \mathrm{tr}(AB)$.
The hyperplanes separate the current iterate $\rho$ from the feasible region defined in the feasibility SDP (\ref{eq:approx_feasibility_constraint}).
The trace inner products $\tr(P_c\rho)$ and $\tr(P_d^{\ell_1}\rho)$ measure how strongly the constraints are violated. If both constraints are violated by less than $\epsilon$, the algorithm stops and outputs the solution $\rho$. 
Otherwise, in order to reduce the violations, HU applies 
\emph{cost updates}
and
\emph{diagonal updates},
defined, respectively, as
\begin{align}
    H \mapsto H + \lambda \Delta H, 
    \quad \text{where} \quad \Delta H = 
     \left\{
			 \begin{array}{ll}
				 P_c & \text{during cost updates,} \\
				 P_d^{\ell_1} & \text{during diagonal updates.} 
			 \end{array}
		 \right. 
   \label{eqn:update-step}
\end{align}
for 
some 
step size $\lambda$.

By adding $\lambda\Delta H$
to the current Hamiltonian $H$, the updated Gibbs state 
\begin{align*}
	\rho_{H+\lambda \Delta H}=\exp(-(H+\lambda \Delta H))/\tr(\exp(-(H+\lambda\Delta H))) 
\end{align*}
will move towards to the separating plane (or even cross it when $\lambda$ is too large). 
 Note that $P_c$ is constant given $\gamma$, while $P_d^{\ell_1}$ depends on $\rho$ and therefore changes with each update. 
We will discuss quantitative bounds on the distance to the feasible region in Sec.~\ref{sec:entropy}.
The algorithm then iterates this Hamiltonian update procedure many times to get closer and closer to the feasible region -- hence the name.

Stated as is (and how it was proposed in Ref.~\cite{Brandao2022fasterquantum}), this meta-algorithm faces two major problems which renders it inefficient in practice: 
Firstly, an analytical running time analysis provided in Sec.~\ref{sec:entropy_apriori} shows that the required number of iterations depends quadratically on the precision $\epsilon$. 
Secondly, the Hamiltonian Updates algorithm detects infeasible instances by checking if an \emph{a priori} upper bound on the number of iterations has been reached.
This turns out to be rather inefficient in practice.
In this work, we provide substantial improvements that address both of these problems. This is the content of the next section.

\section{Improved convergence for Hamiltonian Updates}
\label{sec:class_imp}

In this section, we introduce several improvements to the Hamiltonian Updates meta-algorithm, which benefit both the classical and the quantum version. 
In Secs.~\ref{sec:Pd}-\ref{sec:momentum}, we present heuristics designed to reduce the number of iterations required for the algorithm to converge. 
Sec.~\ref{sec:entropy} gives improved estimation techniques for the change in relative entropy used to prove infeasibilty of a given problem instance. 
The complete enhanced algorithm is outlined as Alg.~\ref{alg:HU} and \ref{alg:update_mom}. 
Finally, in Sec.~\ref{sec:numerics_improvements} we compare these improvements numerically against the original algorithm.
\begin{algorithmCap}
\begin{algorithm}[H]
    \caption{Improved Hamiltonian Updates}\label{alg:HU}
    \begin{algorithmic}[1]
        \Require 
        Normalized cost matrix $C$, threshold objective value $\gamma$, precision parameter $\epsilon$, initial step lengths $\lambda_c$ and $\lambda_d$, momentum hyperparameter $\beta$ 
        \Statex
        \noindent
        \Ensure \quad \begin{tabular}[t]{l|l}  
              \textbf{Condition} & \textbf{Output} \\
              \hline
              (\ref{eqn:exact_program}) is feasible & an $\epsilon$-feasible $\rho$ \\
              (\ref{eqn:exact_program}) is not $\epsilon$-feasible & false \\
              else & undefined (an $\epsilon$-feasible $\rho$ or false)
        \end{tabular}
		\Statex
        
        \Function{hamiltonian\_updates}{$C, \gamma, \epsilon, \lambda_c, \lambda_d, \beta$}
        \State $P_c \gets - C + \gamma \Id$
        \State $H \gets 0_{n\times n}$
        \State $M \gets 0_{n\times n}$
        \State $F=-\ln(n)$ 
        \State $\rho \gets \frac{1}{n}\Id_{n\times n}$
        \\
        \While{$F\leq 0$}
            \Comment{Main loop of HU}
            \If{$\tr(P_c\rho)>\epsilon$}
                \State $\Delta H \gets \tr(P_c \rho) P_c + \frac{\beta}{\lambda_c} M$
                \State $H, \rho, F, \lambda_c 
                \gets \textsc{update}(H, \Delta H, \lambda_c)$
                \Comment{Apply cost update}
                \State $M \gets \lambda_c\Delta H$
                \Comment{Update momentum}
            \\
            \ElsIf{$\sum_i|\rho_{ii}-1/n|>\epsilon$}
                \State $P_d^{\ell_2} \gets (\diag(\rho)-\Id/n) / \max_i|\rho_{ii} - 1/n|$
                \State $\Delta H \gets P_d^{\ell_2} + \frac{\beta}{\lambda_d} M$
                \State $H, \rho, F, \lambda_d
                \gets \textsc{update}(H, \Delta H, \lambda_d)$
                \Comment{Apply diag.\ update}
                \State $M \gets \lambda_d\Delta H$
                \Comment{Update momentum}
                \\
            \Else
                \State \Return $\rho$
                \Comment{$\rho$ is $\epsilon$-feasible}
            \EndIf
            \\
        \EndWhile
        \State \Return false 
        \Comment{$F>0 \rightarrow$ No feasible solution exists}
    \EndFunction
     \end{algorithmic}
     \end{algorithm}
\end{algorithmCap}

\begin{algorithmCap}
     \begin{algorithm}[H]
         \caption{Update function for Hamiltonian Updates}\label{alg:update_mom}
         \begin{algorithmic}[1]
         \Require Hamiltonian $H$, update matrix $\Delta H$, current step length $\lambda_c$ or $\lambda_d$ 
         \Statex
        \Ensure updated Hamiltonian $H_\mathrm{new}$, current Gibbs state $\rho$, current free energy $F$, updated step length $\lambda_c$ or $\lambda_d$
        \Statex
        \Function{update}{$H, \Delta H, \lambda$}
            \State $H_\mathrm{new} \gets H + \lambda \Delta H$ 
            \Comment{Compute new Hamiltonian}
    
            \State \textbf{compute} $\exp(-H_\mathrm{new})$ 
            \Comment{Compute matrix exponential for $\rho$ and $F$}
            \State $\rho_\mathrm{new} \gets \exp(-H_\mathrm{new})/\tr(\exp(-H_\mathrm{new}))$
            \\
            \While{$\tr(\Delta H \rho_\mathrm{new}) < 0$}: 
						\Comment{Check for \emph{overshoots}}
            \State $\lambda \gets 0.5 \lambda$
            \Comment{Reduce step size}
            \State $H_\mathrm{new} \gets H + \lambda\Delta H$ 
            \Comment{Re-compute new Hamiltonian}
            \State \textbf{compute} $\exp(-H_\mathrm{new})$ 
            \State $\rho_\mathrm{new} \gets \exp(-H_\mathrm{new})/\tr(\exp(-H_\mathrm{new}))$
            \EndWhile
            \\
            \State $F \gets -\ln(\tr(\exp(-H_\mathrm{new})))$ 
            \Comment{Compute free energy}
            \State $\lambda \gets 1.3 \lambda$
            \Comment{Increase step size for the next iteration}
            \\
            \State \Return $H_\mathrm{new}, \rho_\mathrm{new}, F, \lambda$
        \EndFunction
    \end{algorithmic}
\end{algorithm}
\end{algorithmCap}

\subsection{Adaptive step length} \label{sec:adaptive_step_size}

Recall from Eq.~\eqref{eqn:update-step} above, that the individual updates of the Hamiltonian $H \in \mathbb{R}^{n \times n}$ take the following form:
\begin{align}
	H
	\mapsto 
	H
	+ \lambda
	P,
\end{align}
where $P$ can be either $P_c$ or $P_d^{\ell_1}$.
As is commonly the case for iterative algorithms, there is a trade-off in choosing the  \emph{step size} or \emph{learning rate} $\lambda$:
Small choices of $\lambda$ mean that violations of the constraints take many iterations to be corrected,
while too large values of $\lambda$ increase the danger of \emph{overshooting}. In our case, overshooting corresponds to moving to the other side of the separating hyperplane, in which case we do not have a guaranteed improvement anymore (c.f.\ Sec.~\ref{sec:entropy_tracking}).

The algorithm of Ref.~\cite{Brandao2022fasterquantum} uses a constant step length $\lambda = \epsilon/16$ that only depends on the desired target accuracy $\epsilon$.
In contrast, here we propose to use two \emph{adaptive step lengths} 
\begin{align*}
	\lambda_c \qquad \text{and} \qquad \lambda_d,
\end{align*}
one for each constraint in \eqref{eq:approx_feasibility_constraint}.
We choose the concrete step lengths
according to the following heuristic:
The algorithm maintains the current step sizes $\lambda_c$ and $\lambda_d$, which are increased by a constant factor after each corresponding update -- we find that multiplying with $1.3$ works well in practice.
We say that the algorithm has \emph{overshot} if, after an update, the sign of the constraint has reversed, i.e.\ if $\tr(P\rho_{H+\lambda_{c/d} P}) < 0$.
This sign is checked after every update.
In case an overshot did occur, $\lambda$ is halved, and $\rho$ is re-computed for the now smaller step size (c.f.\ Alg.~\ref{alg:update_mom}).

This approach means that if an overshoot occurs, we must recompute the Gibbs state, which incurs an overhead in computation time. Hence, the possibility of overshooting manifests itself in a slight increase of the average computation time per iteration, when compared to the constant step length method. 
However, as we demonstrate numerically in Sec.~\ref{sec:numerics_improvements}, this is more than compensated for by an overall faster step-wise progress which significantly reduces the total number of iterations required, especially in the early phase of the HU meta-algorithm.

\subsection{Euclidean-norm based $P_d$} \label{sec:Pd}
In the original Ref.~\cite{Brandao2022fasterquantum}, the authors address the violation of the diagonal constraint in \eqref{eq:approx_feasibility_constraint} 
using a matrix $P_d^{\ell_1}$ that corresponds to the $\ell_1$ norm: $\tr(P_d^{\ell_1} \rho)=\sum_i |\rho_{ii} - 1/n|$ (see Eq.~\eqref{eq:P_d_1}).
This approach is a natural choice, as this norm reflects the feasibility constraint in Eq.~\eqref{eq:approx_feasibility_constraint} which is also formulated in terms of the $\ell_1$ norm.
However, closer inspection reveals that the correction provided by $P_d^{\ell_1}$ may be suboptimal. 
After all, it only considers the sign of the deviations in each entry, not their magnitude. 

We propose a new approach where $P_d$ is proportional to the violation in each component. 
To achieve this, we modify \eqref{eq:P_d_1} by removing the sign function.
Because the trace term in \eqref{eq:P_d_1} becomes zero under this modification, the result is
\begin{align}
    P^{\ell_2}_d=\tilde\rho-\Id/n.
\end{align}
The trace with the modified matrix evaluates to the squared Euclidean or $\ell_2$ norm of the deviation:
\begin{align}\begin{split}
    \tr(\rho P_d^{\ell_2})
		&= \tr(\rho(\tilde\rho -2\Id/n)) + \tr(\rho \Id/n) \\
    &= \tr(\rho(\tilde\rho -2\Id/n)) + 1/n \\
    &= \sum_i \left(\rho^2_{ii} -2\rho_{ii}/n + 1/n^2 \right)\\
		&=\sum_i (\rho_{ii} - 1/n)^2.
\end{split}\end{align}
Note that this is a much more common choice as a loss function in gradient descent algorithms.
To further optimize and to improve numerical stability, one can experiment with different normalizations for $P_d^{\ell_2}$. 
Although the differences are generally minor, because the adaptive step size defined in the previous section adjusts well to different normalizations, we find experimentally that dividing $P_d^{\ell_2}$ by its maximum absolute entry yields the best results.

Additionally, we scale $P_c$ in each update with the corresponding distance $\tr(P_c\rho)$. Thus, the new matrix \begin{align}
\tilde P_c = \tr(P_c\rho) P_c,
\end{align} 
corresponds to  larger corrections in the cost update when $\rho$ is further away from the feasible region. 

In Fig.~\ref{fig:beta_scaling}, we compare the performance of HU using $P_d^{\ell_2}$ instead of $P_d^{\ell_1}$, observing that this modification results in a speedup of approximately a factor of two to three.

\subsection{Adding a momentum term}\label{sec:momentum}

In gradient descent methods, it is common to also add a so-called \emph{momentum term}~\cite{polyak19641}, which often empirically increase the speed of convergence.

In the following, we use a superscript $(\cdot)^{(k)}$ 
to refer to the value of a variable in the $k^{\mathrm{th}}$ iteration.
Define the \emph{momentum term} to be
\begin{align*}
    M^{(k)}
		&=
     \left\{
			 \begin{array}{ll}
		     \lambda_{c}^{(k)} (\Delta H)^{(k)}  & \text{for cost update in $k^\mathrm{th}$ step,} \\
		     \lambda_{d}^{(k)} (\Delta H)^{(k)}  & \text{for diag. update in $k^\mathrm{th}$ step,} 
			 \end{array}
		 \right. 
		 \\
		 M^{(0)}&=0. 
\end{align*}

Next, choose a new  hyperparameter $\beta\in(0,1)$ and modify the update rule (\ref{eqn:update-step}) to read
\begin{align}
	H^{(k+1)} &= 
    \left\{
			 \begin{array}{ll}
		  H^{(k)} + \lambda_c^{(k)} (\Delta H)^{(k)} & \text{for cost update in $k^\mathrm{th}$ step,}
		\vspace{.1cm}\\
		H^{(k)} + \lambda_d^{(k)} (\Delta H)^{(k)} & \text{for diag. update in $k^\mathrm{th}$ step,} 
			 \end{array}
		 \right. \\ 
    \notag \\
	\label{eq:Delta_H}
	(\Delta H)^{(k)} &=
     \left\{
			 \begin{array}{ll}
		  \big(\tilde P_c\big)^{(k)} + \frac{\beta}{\lambda_{c}^{(k-1)}} M^{(k-1)} 
          & \text{for cost update in $k^\mathrm{th}$ step,}
		\vspace{.1cm}\\
		\big(P_d^{\ell_2}\big)^{(k)} + \frac{\beta}{\lambda_{d}^{(k-1)}} M^{(k-1)}
		  & \text{for diag. update in $k^\mathrm{th}$ step.} 
			 \end{array}
		 \right. 
\end{align}

Numerically, we find that a values of $\beta$ between $0.4$ and $0.5$ achieve the best results, with reductions in the number of iterations by roughly 30-40\% (c.f.\ Fig.~\ref{fig:beta_scaling} and Tab.~\ref{tab:cumu_improvements}).

\subsection{Free energy tracking}\label{sec:entropy}

When we cannot guarantee that a particular SDP instance is feasible, it is crucial to find a \emph{termination criterion}; otherwise, the HU routine would run indefinitely. 
This can be achieved by bounding the \emph{quantum relative entropy}
\begin{align} \label{eq:rel_ent_def}
    R(\rho^\star \| \rho) = \tr \left( \rho^\star (\log\rho^\star - \log\rho) \right)
\end{align}
between 
any solution $\rho^\star$
(assuming that one exists)
and the current state $\rho$. 
For properties of the quantum relative entropy, we refer to standard textbooks, e.g.\ Ref.~\cite{Renes2022}.

It is known that the relative entropy distance between the maximally mixed state $\rho_0=\Id/n$ and any other state
is upper-bounded by $\ln(n)$. 
We demonstrate in this section that one can lower-bound the decrease in relative entropy distance between the current Gibbs state $\rho$ and $\rho^\star$ (if it exists) in each update. 
By choosing the maximally mixed state as the initial state $\rho_0$, we are guaranteed that the cumulative change in relative entropy cannot exceed $\ln(n)$, if a solution does indeed exist. 
By the same token, if the estimate of total relative entropy distance reduction exceeds $\ln(n)$, it is certain that a feasible solution does not exist. 

This ``relative entropy tracking'' procedure serves two different purposes:
First, if the algorithm detects that it would have covered a relative entropy distance of $\ln(n)$, but has not yet found a solution, we know that the problem is infeasible.
Using terminology standard in quantum information, we could call this a \emph{heralded} event: 
Meaning that if it occurs, we can draw rigorous conclusions from it, but, at the beginning of the algorithm, it is unclear after how many iterations it will be detected.
Second, we would like to have an \emph{a priori} upper bound on the number of iterations required before infeasibility is detected.

Ref.~\cite{Brandao2022fasterquantum} uses a single bound to serve both these purposes.
In contrast, we report a large gain in practical performance by using different estimates for the two goals.
In Sec.~\ref{sec:entropy_tracking}, we introduce a new method for tracking the entropy change that makes use of another quantity from statistical mechanics called \emph{free energy}.
We observe numerically that this new approach can achieve speedups over the original method of a factor $>10000$ (c.f.\ Tab.~\ref{tab:cumu_improvements}).

Our a priori bound on the number of iterations is presented in Section~\ref{sec:entropy_apriori}.
Compared to 
Ref.~\cite{Brandao2022fasterquantum},
it includes a treatment of the momentum term (c.f.\ Section~\ref{sec:momentum}) 
and it improves the estimate by a constant factor.
The theoretical guarantee does not cover all heuristics we employ, in particular it does not take adaptive step sizes and the improved way of choosing $P_d$ into account.
We re-iterate that these are pessimistic worst-case bounds and that the observed practical performance is much better.

\subsubsection{Termination criterion}\label{sec:entropy_tracking}

On a high level, our improved relative entropy tracking method exploits the observation that the decrease in distance to the feasible set is larger, the further away the current state is from being feasible.
In contrast, \cite{Brandao2022fasterquantum} uses a worst-case bound that does not take into account the faster decrease of the relative entropy distance that happens especially in the early steps of the algorithm. 

We now show how the relative entropy $R$ can be estimated during the HU routine.  
We can rewrite the definition of the relative entropy \eqref{eq:rel_ent_def} as
\begin{align}
        R(\rho^\star||\rho)
		=
		\tr( \rho^\star \ln (\rho^\star))
		+
		\tr (\rho^\star H)
		+
		\ln( \tr( \exp(-H))).
\end{align}
In statistical mechanics, 
\begin{align} \label{eq:free_energy}
 F(H)=-\ln( \tr( \exp(-H)))   
\end{align}
is called the \emph{free energy} (at inverse temperature $1$). 
Initially, for $H_0=0$ we have $F(0)=-\ln(n)$. Then, the total change in relative entropy with respect to the initial Gibbs state $\rho_0=\Id/n$ is given by
\begin{align}
    \Delta R(\rho_0, \rho) 
    =
    R(\rho^\star||\rho) - R(\rho^\star||\rho_0) 
    =
    \tr (\rho^\star H)
    -
    F(H)
    -
    \ln(n).
\end{align}
In the HU algorithm, the Hamiltonian $H$ in the $k^\mathrm{th}$ iteration is of the form 
\begin{align*}
	H^{(k)}=\sum_{k'=1,\dots, k} c_{k'} P^{(k')},
\end{align*}
with coefficients $c_{k'}> 0$.
By construction, the matrices $P^{(k')}$ penalize infeasible directions and thus fulfil $\tr(\rho^\star P^{(k')})\leq 0$. Then, we always have $\tr(\rho^\star H)\leq 0$ and therefore,
\begin{align}
    \Delta R(\rho_0, \rho)
    \leq
    -F(H)
    -
    \ln(n).
\end{align}
Next, we use that the absolute change in relative entropy distance is upper bounded by $\ln(n)$ and therefore
\begin{align}
    -\ln(n)
    \leq
    \Delta R(\rho_0, \rho)
    \leq
    -F(H)
    -
    \ln(n).
\end{align}
Thus, we know that if a feasible solution $\rho^\star$ exists, we always have 
\begin{align} \label{eq:free_energy_condition}
    F(H) 
    \leq
    0
\end{align}
for any $H$ occurring as part of the HU routine. 
Hence, the task of tracking the relative entropy translates into evaluating the free energy. 
Classically, this quantity can be easily computed with no relevant additional computational effort, as one already has to compute $\exp(-H)$ for the Gibbs state in each iteration. 

On a quantum computer, we instead bound this quantity indirectly via its derivative. 
For this, consider a single update. 
Let $H$ be the initial Hamiltonian, and $H+\lambda \Delta H$ the one after the HU step.
The change in free energy is
\begin{align}
    \Delta F
		= F(H+\lambda\Delta H) - F(H),
\end{align}
and its derivative with respect to $\lambda$ is given by the expectation value of the update term $\lambda\Delta H$ with respect to the final state:
\begin{restatable}{lemma}{entropyBoundLinear} \label{thm:entropyBoundLinear}
	For all symmetric $H, \Delta H \in \RR^{n\times n}$ and $\lambda \in \RR$,
		the free energy satisfies
    \begin{align} \label{eq:F_derivative}
			   \partial_\lambda F(H+\lambda\Delta H) = \tr(\rho_{H+\lambda\Delta H} \lambda\Delta H). 
    \end{align}
\end{restatable}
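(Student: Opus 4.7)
The plan is to differentiate the definition of the free energy directly and reduce the statement to differentiating a single matrix exponential. Writing $Z(\lambda) := \tr\bigl(\exp(-(H+\lambda\Delta H))\bigr)$, we have $F(H+\lambda\Delta H) = -\ln Z(\lambda)$, so
\[
\partial_\lambda F(H+\lambda\Delta H) = -\frac{Z'(\lambda)}{Z(\lambda)}.
\]
Since $\rho_{H+\lambda\Delta H} = e^{-(H+\lambda\Delta H)}/Z(\lambda)$, it suffices to establish the intermediate identity $Z'(\lambda) = -\tr\bigl(\Delta H\,e^{-(H+\lambda\Delta H)}\bigr)$. Dividing by $Z(\lambda)$ then yields the trace expression on the right-hand side of \eqref{eq:F_derivative} (up to the scalar factor appearing in the statement, which pulls through the trace).

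The one genuinely nontrivial step is computing $\partial_\lambda e^{-(H+\lambda\Delta H)}$. Because $H$ and $\Delta H$ need not commute, the naive rule $\partial_\lambda e^{A(\lambda)} = A'(\lambda) e^{A(\lambda)}$ fails. The standard remedy is Duhamel's formula,
\[
\partial_\lambda e^{-(H+\lambda\Delta H)} = -\int_0^1 e^{-s(H+\lambda\Delta H)}\,\Delta H\,e^{-(1-s)(H+\lambda\Delta H)}\,ds,
\]
which can be derived either by expanding the exponential into its Dyson series and differentiating term by term, or by checking that both sides satisfy the same first-order ODE in a parameter multiplying the whole exponent.

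The non-commutativity then disappears the moment one takes a trace. By the cyclic property of the trace, for every $s\in[0,1]$,
\[
\tr\bigl(e^{-s(H+\lambda\Delta H)}\,\Delta H\,e^{-(1-s)(H+\lambda\Delta H)}\bigr) = \tr\bigl(\Delta H\,e^{-(H+\lambda\Delta H)}\bigr),
\]
so the integrand is $s$-independent and the integral collapses, giving $Z'(\lambda) = -\tr\bigl(\Delta H\,e^{-(H+\lambda\Delta H)}\bigr)$ as required.

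The main (and really only) obstacle is the non-commutativity addressed by Duhamel; after the trace is applied, the cyclic property eliminates the $s$-dependence and everything is bookkeeping. Since $H$ and $\Delta H$ are symmetric and the ambient dimension is finite, $F$ is real-analytic in $\lambda$, so there are no convergence, differentiability, or domain issues to worry about.
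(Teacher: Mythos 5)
Your proposal is correct and follows essentially the same route as the paper: both differentiate the partition function $Z(\lambda)=\tr\bigl(e^{-(H+\lambda\Delta H)}\bigr)$, handle the non-commutativity of $H$ and $\Delta H$ by cyclicity of the trace (you via Duhamel's integral representation, the paper by differentiating the power series term by term — two phrasings of the same computation), and then conclude via $\partial_\lambda F=-Z'/Z$. One small remark: the derivation actually yields $\tr(\rho_{H+\lambda\Delta H}\,\Delta H)$ with no extra factor of $\lambda$; the $\lambda$ in the stated right-hand side is a typo in the paper (its own proof's penultimate line and its later uses, e.g.\ Lemma~\ref{lem:a_priori_entropy}, agree with the $\lambda$-free form), so your "pulls through the trace" gloss should not be read as justifying that factor.
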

The proof is given in Appendix \ref{sec:EntropyProof}.

Now, by integrating both sides in \eqref{eq:F_derivative}, the change in free energy is given by:
\begin{align} \label{eq:DeltaF_integral}
    \Delta F 
    = 
		\int_0^\lambda \tr(\rho_{H+\lambda'\Delta H} \lambda'\Delta H)\,\mathrm{d}\lambda'.
\end{align}
It is known that the function
\begin{align*}
	\lambda \mapsto
	F(H+\lambda \Delta H)
\end{align*}
is concave (a fact sometimes referred to as \emph{Bogoliubov inequality} \cite[Lem.~2]{entropyInequalities} in quantum statistical mechanics).
Thus, we can bound $\Delta F$ 
by evaluating $\tr(\rho_{H+\lambda\Delta H} \lambda\Delta H)$ for multiple values $0<\lambda'\leq\lambda$ and computing
\begin{align} \label{eq:F_approx}
    \Delta F 
    \geq
    \sum_{\lambda'_i}
    (\lambda'_i-\lambda'_{i-1}) \tr(\rho_{H+\lambda'_i\Delta H} \Delta H).
\end{align}
These expected values can be computed relatively cheaply on a quantum computer (compared to the cost of estimating the diagonal of $\rho$, see\ Sec.~\ref{sec:quant_imp}). 

We have studied the behavior of the improved termination criterion numerically.
The results are shown in Fig.~\ref{fig:gamma_scaling}.

\begin{figure}[H]
    \centering
    \begin{subfigure}[b]{\textwidth}
        \includegraphics[width=\textwidth]{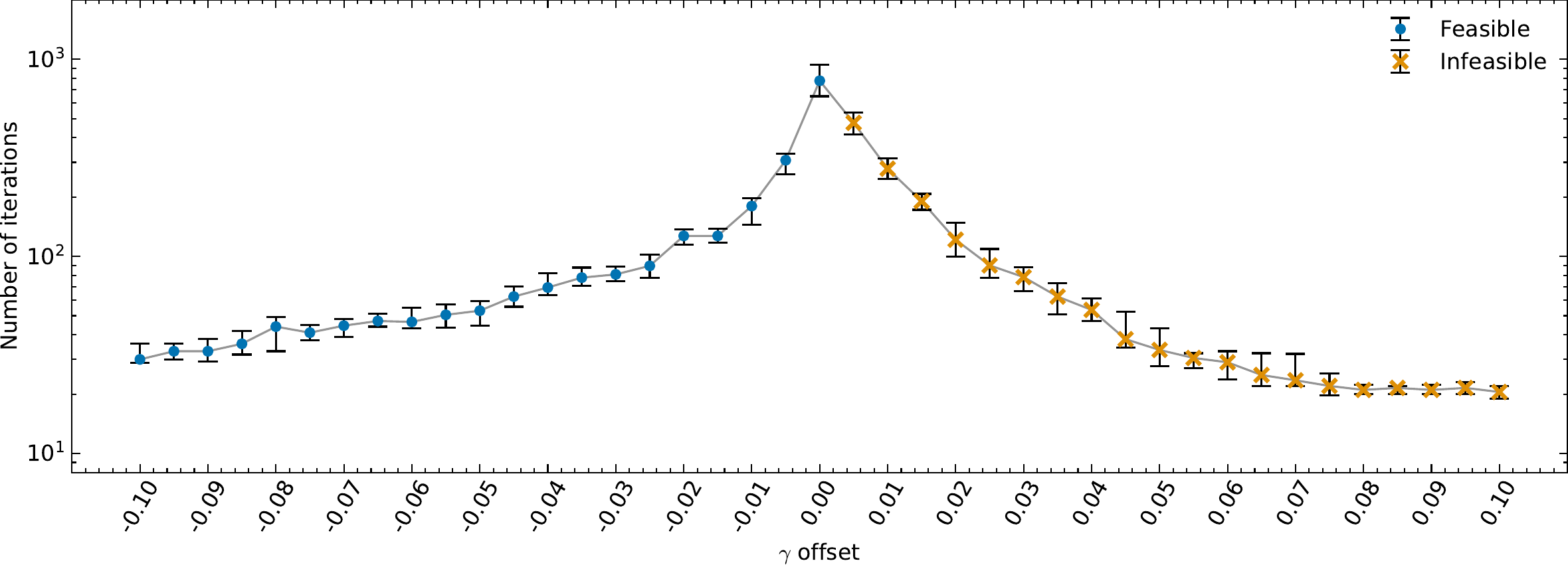}
    \end{subfigure}\\
    \hfill
    \begin{subfigure}[b]{0.987\textwidth} 
        \includegraphics[width=\textwidth]{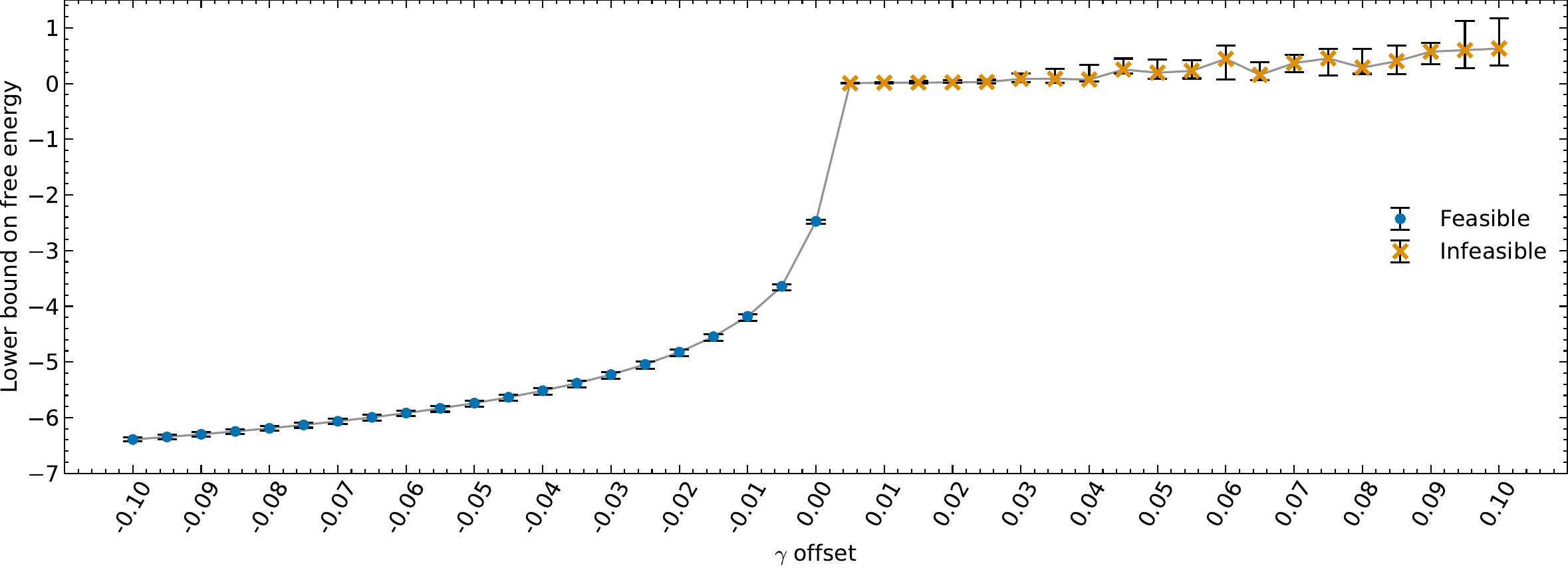}
    \end{subfigure}
		\caption{
			Behavior of the improved HU algorithm on an instance used with parameters $n=1024$, $s=16$ and $\epsilon=0.001$, generated as described in Sec.~\ref{sec:numerics_improvements}. 
			The algorithm terminated after finding an $\epsilon$-feasible solution for
			$\gamma:=\gamma^\star+\mathrm{offset}$, 
			or once the free energy became positive.
			Here, $\gamma^\star$ is the optimal objective value determined by an SDP solver \cite{ODonoghue2016}.
            The error bars show the upper and lower quartiles.
			\\
			\textbf{Upper panel}:
			Number of iterations required until convergence as a function of the $\gamma$-offset.
			The improved termination criterion comes into play on the right tail of the curve.
			We observe that the number of iterations required to certify infeasibility goes down the further the problem specification is from a feasible one.
			(This contrasts to the fixed termination criterion used in \cite{Brandao2022fasterquantum}).
			\\
			\textbf{Lower panel}:
            The lower bound on the free energy at time of termination.
			We observe numerically that the bound increases as $\gamma$ gets closer to the optimal value.
			Whether this effect can be exploited for algorithmic improvements is a question we leave open.
	}
    \label{fig:gamma_scaling}
\end{figure}

\subsubsection{Convergence guarantee} \label{sec:entropy_apriori}

We now provide an \emph{a priori} bound on the maximum number of steps required for the algorithm to find a feasible solution, assuming that one does exist.
The result goes beyond Ref.~\cite{Brandao2022fasterquantum} in two ways:
It includes a treatment of the momentum term (c.f.\ Section~\ref{sec:momentum}) and it improves the estimate by a constant factor.

\begin{restatable}{theorem}{entropyBoundConvergence} \label{thm:entropyBoundConvergence}
    For an HU routine using $P_d^{\ell_1}$ in the diagonal update
     as defined in Eq.~\eqref{eq:P_d_1}, momentum as defined in Sec.~\ref{sec:momentum} and a step length $\lambda=\frac{(1-\beta)^2}{2}\tr(\rho_{H}\Delta H)$, the maximum number of steps needed to find an $\epsilon$-feasible solution to a feasible program \eqref{eqn:exact_program} is upper bounded by
\begin{align}
    T = 16 (1-\beta)^{-6} \epsilon^{-2} \ln(n).
\end{align}
\end{restatable}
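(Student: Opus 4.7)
The strategy is the mirror-descent-with-quantum-relative-entropy argument already sketched in Sec.~\ref{sec:entropy_tracking}. Fix any exact feasible solution $\rho^\star$ of \eqref{eqn:exact_program}, which exists by hypothesis. Since $\rho$ starts at $\rho_0=\Id/n$, we have $R(\rho^\star\|\rho_0)\le\ln n$, and $R(\rho^\star\|\rho)\ge 0$ throughout. Summing the per-iteration \emph{decrease} of $R$ across the whole run therefore gives a global budget of at most $\ln n$; it suffices to lower-bound the decrease per step and divide.

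For a single HU step, I would invoke the standard matrix multiplicative-weights inequality (Golden--Thompson together with $e^X\preceq\Id+X+X^2$ for $\|X\|\le 1$):
\begin{align*}
\ln\tr e^{-H-\lambda\Delta H}\le\ln\tr e^{-H}-\lambda\tr(\rho_H\Delta H)+\lambda^2\|\Delta H\|^2,
\end{align*}
which combined with the linearity of $\tr(\rho^\star H)$ in $H$ gives
\begin{align*}
R(\rho^\star\|\rho_H)-R(\rho^\star\|\rho_{H+\lambda\Delta H})\ge\lambda\bigl(\tr(\Delta H\,\rho_H)-\tr(\Delta H\,\rho^\star)\bigr)-\lambda^2\|\Delta H\|^2.
\end{align*}
Three ingredients then finish the estimate. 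First, unrolling the momentum recursion yields $(\Delta H)^{(k)}=\sum_{j\le k}\beta^{k-j}P^{(j)}$; each $P^{(j)}$ is a separating hyperplane for $\rho^\star$ ($\tr(P^{(j)}\rho^\star)\le 0$), so $\tr(\Delta H\,\rho^\star)\le 0$ and the ``$-\tr(\Delta H\,\rho^\star)$'' term only helps. Second, the same unrolling together with $\|P_c\|,\|P_d^{\ell_1}\|\le 2$ gives $\|\Delta H\|\le 2/(1-\beta)$ by a geometric series. Third, as long as the algorithm has not yet terminated, non-termination forces $\tr(P^{(k)}\rho_H^{(k)})\ge\epsilon$; together with the overshoot safeguard in Alg.~\ref{alg:update_mom}, a short induction should yield $\tr(\Delta H^{(k)}\,\rho_H^{(k)})\gtrsim(1-\beta)\epsilon$. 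Substituting these bounds and the prescribed choice $\lambda=\frac{(1-\beta)^2}{2}\tr(\rho_H\Delta H)$ into the per-step inequality and optimizing the resulting quadratic in $\Lambda:=\tr(\rho_H\Delta H)$ should give a per-step decrease of at least $(1-\beta)^{6}\epsilon^{2}/16$; dividing the global budget $\ln n$ by this produces the claimed $T\le 16(1-\beta)^{-6}\epsilon^{-2}\ln n$. The constant-factor improvement over \cite{Brandao2022fasterquantum} then comes from carefully tracking the $\tfrac12$ factors in the optimization rather than using a looser completed-square estimate.

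The main obstacle I expect is the third ingredient above: lower-bounding $\tr(\Delta H^{(k)}\,\rho_H^{(k)})$ in the presence of momentum. The stale contributions $P^{(j)}$, $j<k$, are evaluated against the \emph{current} Gibbs state $\rho_H^{(k)}$, and there is no a~priori reason for $\tr(P^{(j)}\rho_H^{(k)})$ to be non-negative. The key mini-lemma therefore has to argue that the geometric decay $\beta^{k-j}$, combined with the overshoot safeguard (which rejects any step that flips the sign of $\tr(\Delta H\,\rho_\mathrm{new})$), prevents the accumulated momentum from cancelling the fresh drift $\tr(P^{(k)}\rho_H^{(k)})\ge\epsilon$. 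Once this lemma is in place, the rest --- the MMWU inequality, the geometric-series norm bound, and the separating-hyperplane property of $\rho^\star$ --- is routine bookkeeping, and the loss of $(1-\beta)^{6}$ in the final bound simply records two factors of $(1-\beta)$ from the drift lower bound, two from the step length, and two from the norm of $\Delta H$ entering the quadratic correction.
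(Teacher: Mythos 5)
Your high-level architecture --- a relative-entropy budget of $\ln n$, a per-step decrease bound, and division --- is exactly the paper's (there it is phrased via the free energy $F(H)$, which differs from $-R(\rho^\star\|\rho)$ only by the term $\tr(\rho^\star H)\le0$ and a constant), and replacing the paper's curvature argument (concavity of $\lambda\mapsto F(H+\lambda\Delta H)$ plus a bound on $|\partial_\lambda^2F|$ via the trace-norm Lipschitz continuity of Gibbs states, Lem.~\ref{lem:a_priori_entropy}) by the Golden--Thompson/MMW inequality is a legitimate alternative. However, two of your three ingredients do not close as written. First, the norm bound: with $\norm{\Delta H}\le 2/(1-\beta)$ and the \emph{prescribed} step $\lambda=\frac{(1-\beta)^2}{2}\Lambda$ (which you are not free to re-optimize), your per-step inequality yields a decrease of at least $\lambda\Lambda-\lambda^2\norm{\Delta H}^2\ge\frac{(1-\beta)^2}{2}\Lambda^2-\frac{(1-\beta)^4}{4}\Lambda^2\cdot\frac{4}{(1-\beta)^2}=-\frac{(1-\beta)^2}{2}\Lambda^2$, a vacuous (negative) bound: the quadratic term dominates. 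The paper avoids this by using $\inf_{c\in\RR}\norm{\Delta H-c\Id}\le1/(1-\beta)$ (Lem.~\ref{lem:bound_R}); the shift by $c\Id$ is free, since it changes neither the Gibbs state nor $\tr(\Delta H(\rho_H-\rho^\star))$, and it both restores the hypothesis $\norm{\lambda\Delta H}\le1$ needed for $e^{-X}\preceq\Id-X+X^2$ and gives a positive per-step decrease of at least $\frac{(1-\beta)^2}{4}\Lambda^2$, from which the claimed $T$ follows with room to spare.

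Second, the drift bound, which you correctly flag as the crux but leave as a conjectured ``$\gtrsim(1-\beta)\epsilon$ by induction.'' No induction over the history is needed and there is no $(1-\beta)$ loss. The momentum recursion already aggregates all stale directions into the single matrix $M^{(k-1)}=\lambda^{(k-1)}(\Delta H)^{(k-1)}$, and the overshoot safeguard at the end of iteration $k-1$ enforces $\tr\big((\Delta H)^{(k-1)}\rho_{H^{(k)}}\big)\ge0$ against precisely the state that is current at the start of iteration $k$. Hence $\tr(M^{(k-1)}\rho_{H^{(k)}})\ge0$, and since $(\Delta H)^{(k)}=P^{(k)}+\frac{\beta}{\lambda^{(k-1)}}M^{(k-1)}$ with a nonnegative coefficient, $\tr\big((\Delta H)^{(k)}\rho_{H^{(k)}}\big)\ge\tr\big(P^{(k)}\rho_{H^{(k)}}\big)\ge\epsilon$, the last inequality being exactly the non-termination condition for the $\ell_1$-based updates. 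Your final accounting of the six powers of $(1-\beta)$ is therefore off: in the paper's proof none come from the drift; four come from $\lambda^2$ and two from the curvature bound. Once you adopt the centered norm and the one-line momentum argument, the rest of your write-up goes through.
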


The proof is given in Appendix \ref{sec:EntropyProof2}. 

Without the use of momentum terms (i.e.\ setting $\beta=0)$, this becomes $T=16 \epsilon^{-2} \ln(n)$, thus improving the bound of Ref.~\cite{Brandao2022fasterquantum} by a factor of $4$.
We repeat that the numerically observed speedup of our improvements compared to Ref.~\cite{Brandao2022fasterquantum} is much larger than the improvement of the \emph{a priori} bounds.

We note that the proof of convergence no longer holds when using $P_d^{\ell_2}$ for the diagonal update as proposed in Sec.~\ref{sec:Pd}, while still using $P_d^{\ell_1}$ in the feasibility criteria, because the proof assumes $\tr(\rho_{H}\,P_d)\geq\epsilon$, which is not necessarily the case for mixed $P_d$'s. Additionally, Thm.~\ref{thm:entropyBoundConvergence} does not consider an adaptive step size, as in Sec.~\ref{sec:adaptive_step_size}. However, in practice, we find that these modifications do substantially improve algorithmic performance.

\subsection{Numerical benchmarks of the non-asymptotic improvements} \label{sec:numerics_improvements}

Here, we numerically study the effects of the different improvements made in Secs.~\ref{sec:adaptive_step_size}-\ref{sec:entropy_tracking}.
This section consists of three parts: 
\begin{enumerate}
	\item
		We observe the decrease in the 
		number of iterations when applying the diagonal update $P^{\ell_2}_d$ instead of $P^{\ell_1}_d$ as described in Sec.~\ref{sec:Pd},
		together with the momentum term described in Sec.~\ref{sec:momentum} for different values of $\beta$.
	\item
        We observe the decrease in the required number of iteration when successively applying all the improvements compared to the original algorithm. 
	\item
        We analyze how the number of iterations of the fully improved HU routine scales with $\epsilon$. 
\end{enumerate}
The simulations are performed on 20 sparse QUBO instances of the block form given in Eq.~\eqref{eq:C_block}.
The sparsity pattern is chosen uniformly at random with sparsity $s=16$.
The non-zero elements are sampled from a standard Gaussian distribution, the matrices are normalized.

For the first part, we use instances with dimension $n=1024$ and require a target accuracy $\epsilon=0.001$. We run the complete HU routine using either $P^{\ell_1}_d$ or $P^{\ell_2}_d$ in all diagonal updates and test values for the momentum hyperparameter $\beta$ between 0 and 0.7. To have a direct comparison, we choose to compare feasible instances of \eqref{eqn:relaxed} with target objective values $\gamma$ equal to the optimal objective value $\gamma^\star$, instead of applying full binary searches. 
The optimum $\gamma^\star$ is obtained by an SDP solver beforehand for each instance (specifically, the Splitting Conic Solver (SCS) described in Ref.~\cite{ODonoghue2016}).
The results are displayed in Fig.~\ref{fig:beta_scaling}.

\begin{figure}[H]
    \centering
    \begin{subfigure}[b]{0.99\textwidth}
        \includegraphics[width=\textwidth]{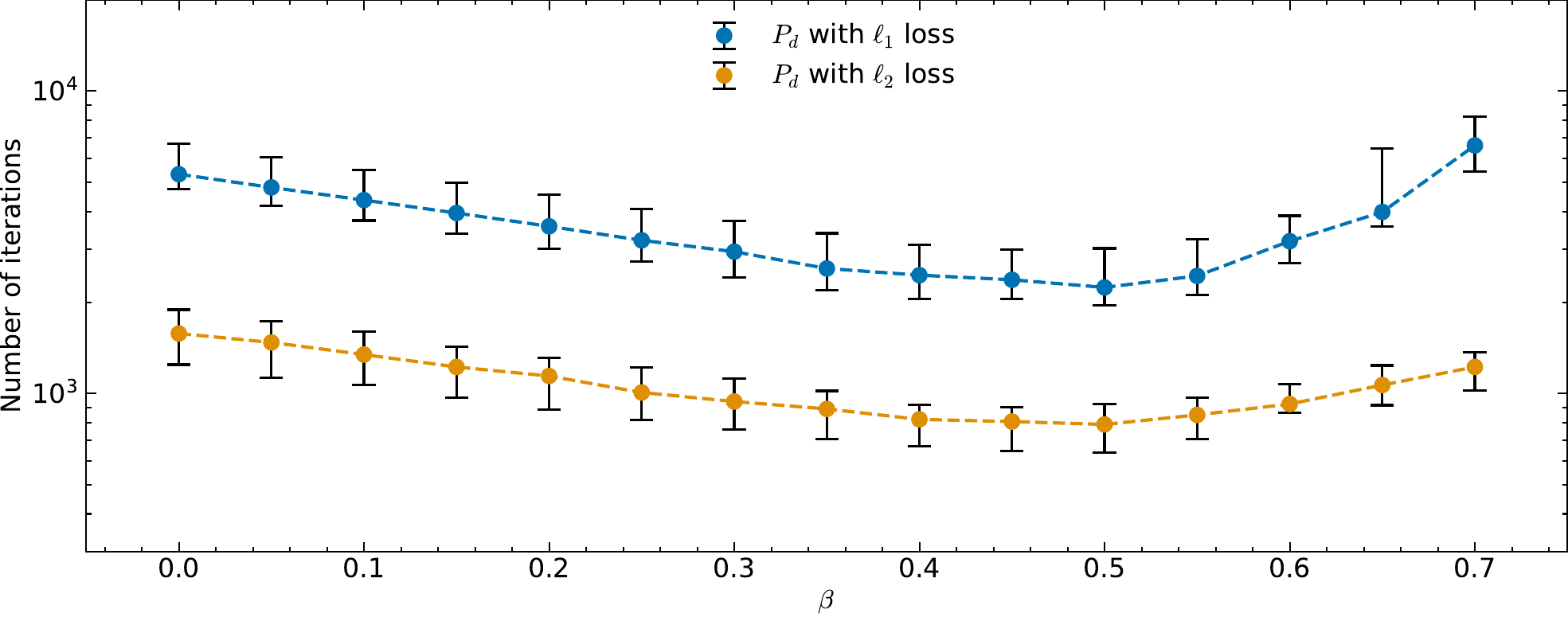}
    \end{subfigure}
    \caption{
		Number of iterations for varying values of the momentum hyperparameter $\beta$. The two different approaches for the diagonal update are compared: The original $\ell_1$ norm based $P_d^{\ell_1}$ (blue) and the new $\ell_2$ norm based $P_d^{\ell_2}$ (orange). The instances have dimension $n=1024$, sparsity $s=16$ and precision $\epsilon=0.001$, and use the optimal SDP solution $\gamma^\star$ as the target objective value. The error bars show the upper and lower quartiles.}
    \label{fig:beta_scaling}
\end{figure}

For the second part, to compare the improved algorithm to the original one, we evaluate three categories: (i) solving a feasible instance, (ii) proving infeasibility of an instance, and (iii) performing a complete binary search.
The numerical simulations were performed on instances with dimension $n=128$ and target accuracy $\epsilon=0.01$. 
For the feasible instances we used an optimal target objective value $\gamma^\star$ obtained from an SDP solver 
(SCS), while infeasible instances used the candidate $\gamma=\gamma^\star+0.02$. 
The comparison is made both in terms of the number of iterations and the number of matrix exponentiations required.
The latter make up the vast majority of computational cost in each iteration, and thus provide a good measure for comparing the running time of the different approaches.
The results are displayed in Table~\ref{tab:cumu_improvements}.

\begin{table}[H] 
    \centering
    \resizebox{1\textwidth}{!}{
    \begin{tabular}{|l||c c |c c |c c|}
    \hline
         \multirow{2}{*}{cumulative improvements}  & \multicolumn{2}{|c|}{feasible instance} & \multicolumn{2}{|c|}{infeasible instance} & \multicolumn{2}{|c|}{binary search} \\
        & iterations & matrix exp. & iterations & matrix exp. & iterations & matrix exp. \\
        \hline
        original algorithm & 88292 & 88292 & 3.11e+06$^\dagger$ & 3.11e+06$^\dagger$ & 1.28e+07$^\dagger$ & 1.28e+07$^\dagger$ \\
        with adaptive step size & 171 & 241 & 3.11e+06$^\dagger$ & 4.27e+06$^*$ & 1.26e+07$^\dagger$ & 1.55e+07$^*$ \\
        with entropy tracking & 171 & 241 & 104 & 144 & 802 & 1116 \\
        with $\ell_2$ norm based $P_d$ & 62 & 86 & 54 & 72 & 323 & 439 \\
        with momentum & 42 & 59 & 38 & 50 & 219 & 296 \\
        \hline
    \end{tabular}}
    \caption{
		Comparison in terms of HU iterations and matrix exponential computations for the original HU algorithm and the new one with the improvements from Secs.~\ref{sec:adaptive_step_size}-\ref{sec:entropy_tracking} applied cumulatively. The number of matrix exponentials is equal to the sum of the number of iterations and the number of overshoots. The results are averaged over 20 instances with dimension $n=128$, sparsity $s=16$ and precision $\epsilon=0.01$.
		Values computed (partially) analytically using the termination criterion of Ref.~\cite[Thm.~2.1]{Brandao2022fasterquantum} are marked with a dagger ($\dagger$).
		Values marked with an asterisk ($*$) are extrapolated: The total number of overshoots is estimated by observing the average number of overshoots per iteration and multiplying this with the analytical number of iterations from the termination criterion.
	We find a speedup by a factor of more than $1400$ for solving the SDP for an optimal candidate $\gamma^\star$, and a speedup of more than $43000$ for the complete binary search. 
	}
    \label{tab:cumu_improvements}
\end{table}

Finally, in the third part, we analyze the scaling of the number of iterations of the 
improved HU algorithm as a function of the precision $\epsilon$.
We have used instances with dimension $n=1024$ and $\epsilon$ values between $10^{-2}$ and $10^{-3.4}$. 
The indicated number of iterations includes a full binary search to find the optimal value up to the desired precision.
A numerical powerlaw extrapolation shows that the number of iterations scales as $0.006\epsilon^{-2.02}$. 
The results are shown in Fig.~\ref{fig:eps_iterations}. 

\begin{figure}[H]
\centering
        \includegraphics[width=0.75\textwidth]{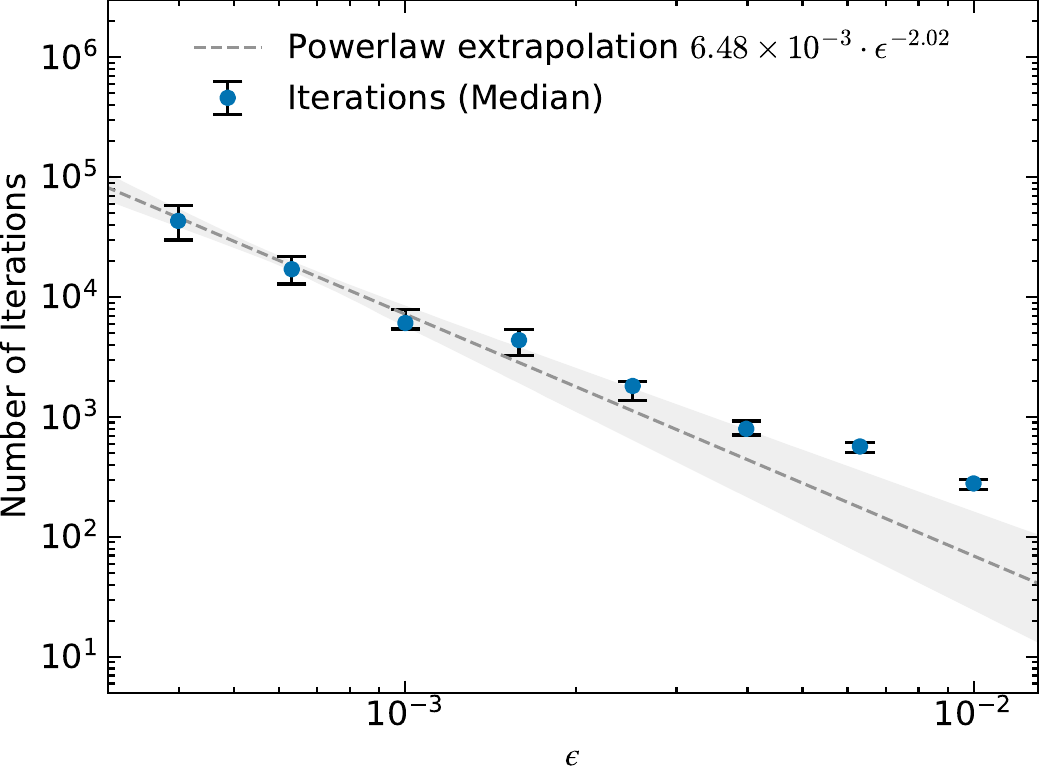}
		\caption{Total number of iterations over a complete binary search as a function of the precision $\epsilon$. The fit is given by $f(\epsilon)=6.48\times 10^{-3} \epsilon^{-2.02}$ with a 95\% confidence interval of $(-2.38, -1.73)$ for the exponent of $\epsilon$. Therefore, the experimentally observed scaling exponent is similar to the theoretical bound of $-2$ (however, the experimentally observed prefactor is significantly better). The error bars show the upper and lower quartiles. The gray area shows the 95\% confidence interval of the fit.}
        \label{fig:eps_iterations}
\end{figure}

\section{Improved randomized rounding}
\label{sec:rounding_section}

\subsection{Analytical results} 
\label{sec:rounding}

After solving the SDP relaxation, we still need to obtain a solution for the original QUBO problem \eqref{eqn:quadratic_opt}. 
The Goemans-Williamson algorithm \cite{goemansWilliamson} provides a randomized rounding procedure for this purpose. 
One first computes the square root of $\rho$ in terms of functional calculus, i.e. $\sqrt{\rho}\in \mathbb{R}^{n\times n}$ is a symmetric matrix with $\sqrt{\rho}\sqrt{\rho}=\rho$. 
Then, the entries of the rounded solution $x\in \{-1,1\}^n$ are given by the sign of the column-wise projection of $\sqrt{\rho}$ onto a Gaussian random vector:
\begin{align}
    & \text{compute} \quad && {\sqrt{\rho}}, \quad & \\
    & \text{sample} \quad && {g_j \overset{\mathrm{iid}}{\sim} \N(0,1),} &  \quad j\in [n], \\
    & \text{compute} \quad && {x_i \gets \sgn\left(\sum_{j} (\sqrt{\rho})_{ij} g_j \right),} &  \quad i\in [n].
\end{align}
The Goemans-Williamson-style bounds on the quality of rounded solutions found in the literature
are not directly applicable if the diagonal entries of $\rho$ are only \emph{approximately} equal to $1/n$. 
To address this issue, Ref.~\cite{Brandao2022fasterquantum} proposes two different solutions:
(1) First map the approximate optimizer $\rho$ to a matrix $\rho^\sharp \in \mathbb{R}^{n\times n}$ that fulfills the diagonal constraints exactly, and then apply the above rounding procedure.
(2) Apply the rounding procedure directly to the approximate optimizer $\rho$.

Both approaches show the same asymptotic scaling behavior.

Ref.~\cite{Brandao2022fasterquantum} shows that if 
$\sum_i|\rho_{ii}-1/n|\leq \epsilon$,
the error of the final result scales as 
$\O(\epsilon^{1/4})$.
Here, we adjust the parameters that go into the correction procedure, and improve the scaling to $\O(\epsilon^{1/3})$.

\subsubsection{Correcting the SDP solution}

The improved performance of the first approach, the one where the SDP solution is mapped to one which satisfies the constraints exactly, results from the theorem below.
 \begin{restatable}{theorem}{muScaling}\label{thm:muScaling}
       There is an efficient procedure which, given an $\epsilon>0$ and a psd matrix $\rho \in \RR^{n\times n}$ such that 
	$\sum_i|\rho_{ii}-1/n|\leq \epsilon$,
	constructs a psd matrix $\rho^\sharp \in \RR^{n\times n}$ where
    \begin{align}
				&\rho^\sharp_{ii}=\frac 1n  \quad &\forall i\in[n] \\
        \mathrm{and} \qquad
				&\norm{\rho^\sharp - \rho}_\mathrm{tr}=\O(\epsilon^{1/3}). \quad &
    \end{align}
\end{restatable}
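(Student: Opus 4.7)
The plan is to construct $\rho^\sharp$ through a three-stage procedure governed by a threshold $\delta>0$ that will be optimized at the end. Let $d_i = \rho_{ii}$, and let $B = \{i : |d_i - 1/n| > \delta\}$ be the set of ``bad'' indices. A Markov-type bound gives $|B|\,\delta \le \sum_{i\in B} |d_i - 1/n| \le \epsilon$, hence $|B| \le \epsilon/\delta$. Denote by $P = \Id - \sum_{i\in B} e_i e_i^T$ the projector onto the good subspace, and $P^\perp = \Id - P$. Define
\begin{enumerate}
    \item \emph{Projection:} $\rho_1 = P\rho P$;
    \item \emph{Rescaling:} $\rho_2 = C\rho_1 C$, with $C = \diag(c_i)$, $c_i = \sqrt{(1/n)/d_i}$ for $i\notin B$ and $c_i = 0$ for $i\in B$;
    \item \emph{Injection:} $\rho^\sharp = \rho_2 + \tfrac{1}{n}\sum_{i\in B} e_i e_i^T$.
\end{enumerate}
Each stage preserves positive semidefiniteness, and by construction $\rho^\sharp_{ii} = 1/n$ for all $i$. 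It remains to bound the trace-norm cost of each stage and to optimize over $\delta$.

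For stage~(1), the decomposition $\rho = P\rho P + P\rho P^\perp + P^\perp\rho P + P^\perp\rho P^\perp$ gives $\trnorm{\rho - \rho_1} \le 2\trnorm{P\rho P^\perp} + \tr(P^\perp\rho P^\perp)$. The off-diagonal term is controlled by a Hilbert--Schmidt Cauchy--Schwarz inequality applied to the factorization $P\rho P^\perp = (P\rho^{1/2})(\rho^{1/2}P^\perp)$, yielding $\trnorm{P\rho P^\perp} \le \sqrt{\tr(P\rho P)\,\tr(P^\perp\rho P^\perp)}$. Using $\tr(P\rho P) \le \tr(\rho) \le 1+\epsilon$ together with the estimate $\sum_{i\in B} d_i \le |B|(1/n + \delta) \le \epsilon/(n\delta) + \epsilon$ shows that stage~(1) contributes $O(\sqrt{\epsilon/(n\delta)})$. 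For stage~(2), write $C = P + \Delta$ with $\Delta$ diagonal and supported on $[n]\setminus B$. A Taylor expansion of $t\mapsto\sqrt{1/(1+t)}$ around $t=0$, applied to $t = nd_i - 1$ with $|t|\le \delta n$, yields $\norm{\Delta} = O(\delta n)$. Expanding $C\rho_1 C - \rho_1 = P\rho\Delta + \Delta\rho P + \Delta\rho\Delta$ and applying Hölder's inequality $\trnorm{AB}\le \norm{A}\,\trnorm{B}$ to each cross term gives a stage-(2) cost of $O(\delta n)$. Stage~(3) contributes exactly $|B|/n \le \epsilon/(n\delta)$.

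Summing the three contributions gives
\begin{align*}
    \trnorm{\rho^\sharp - \rho} = O\!\left(\delta n + \sqrt{\epsilon/(n\delta)} + \epsilon/(n\delta)\right).
\end{align*}
The balance point $\delta n = \sqrt{\epsilon/(n\delta)}$, i.e.\ $\delta = \epsilon^{1/3}/n$, makes the first two terms equal to $\epsilon^{1/3}$ and forces the third to $\epsilon^{2/3} = O(\epsilon^{1/3})$, proving the claim. Efficiency is evident: all three operations amount to elementary diagonal manipulations and at most two matrix multiplications, so the procedure runs in $O(n^2)$ arithmetic operations (dominated by the matrix conjugations).

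The main obstacle is obtaining a sufficiently tight bound on the off-diagonal block $P\rho P^\perp$ in stage~(1). A direct estimate $\trnorm{P\rho P^\perp} \le \trnorm{\rho}$ is useless, while bounding it by $\tr(P^\perp\rho P^\perp)$ alone (rather than its square root) would degrade the scaling and is consistent with the $\O(\epsilon^{1/4})$ scaling of Ref.~\cite{Brandao2022fasterquantum}. The improvement to $\O(\epsilon^{1/3})$ relies on the Cauchy--Schwarz factorization converting $\tr(P^\perp\rho P^\perp)$ into a square root, together with the three-way optimization of $\delta$ against the rescaling and injection costs.
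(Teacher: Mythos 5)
Your construction is correct and follows the same global strategy as the paper's proof: threshold the diagonal deviations at $\epsilon^{1/3}/n$, so that the bad set has size at most $\epsilon^{2/3}n$, deal harshly with the bad rows and columns, and gently repair the remaining $O(\epsilon^{1/3}/n)$ deviations on the good block. The differences are in the two technical ingredients. For the off-diagonal block $P\rho P^{\perp}$, the paper invokes a block-norm inequality of Audenaert--King (the $2\times 2$ matrix of block trace norms has trace norm at most $\trnorm{\rho}$), combined with the lower bound $\trnorm{\rho^{(22)}}\geq 1-O(\epsilon^{2/3})$; you instead use the elementary Cauchy--Schwarz factorization $P\rho P^{\perp}=(P\rho^{1/2})(\rho^{1/2}P^{\perp})$, giving $\trnorm{P\rho P^{\perp}}\leq\sqrt{\tr(P\rho P)\,\tr(P^{\perp}\rho P^{\perp})}=O(\epsilon^{1/3})$ directly. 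Your route is more self-contained and arguably cleaner; both yield the same bound. For fixing the good diagonal entries, the paper adds a diagonal correction $D$, shifts by $\frac{\epsilon^{1/3}}{n}\Id$ and renormalizes, whereas you conjugate by $\diag(\sqrt{(1/n)/\rho_{ii}})$ --- which is exactly the rescaling map the paper itself uses later in Lemma~\ref{lem:sigma_norm_bound}, with the same $O(\epsilon^{1/3})$ cost. One small slip: your intermediate inequality $\sum_{i\in B}\rho_{ii}\leq|B|(1/n+\delta)$ is false as written, since bad indices satisfy $|\rho_{ii}-1/n|>\delta$ with no matching upper bound; the stated conclusion $\sum_{i\in B}\rho_{ii}\leq|B|/n+\sum_{i\in B}|\rho_{ii}-1/n|\leq\epsilon/(n\delta)+\epsilon$ is nevertheless correct and is what you actually need, so this does not affect the result.
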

The proof is given in Appendix \ref{sec:stabilityProof}.

Combining Thm.~\ref{thm:muScaling} with the matrix Hölder inequality allows us to bound the change in the objective value that results from applying this correction:
\begin{align}
    |\tr(C \rho^\sharp)-\tr(C \rho)|
    \leq
    \|C\| \norm{\rho^\sharp - \rho}_\mathrm{tr}
    =
		\O(\epsilon^{1/3} ).
\end{align}

\subsubsection{Rounding directly from the approximate solution}

The improved results for the second approach -- rounding directly -- matches this scaling.

\begin{restatable}{theorem}{rounding}\label{thm:rounding}
    Let $\rho^\star \in \RR^{n\times n}$ be the optimal SDP solution corresponding to a normalized cost matrix $C\in \RR^{n\times n}$ with a block structure as defined in \eqref{eq:C_block}. 
		Let $\rho\in \RR^{n\times n}$ be an approximate solution with 
    \begin{align}
        &\tr(C\rho^\star) - \tr(C\rho)  \leq \epsilon, \quad& \\
        \mathrm{and} \qquad
        &\sum_i |\rho_{ii}-\frac{1}{n}| \leq \epsilon \quad &\forall i\in[n],
    \end{align}
		with $0\leq\epsilon\leq 1/2$.
    Let $x\in \{-1,1\}^{n}$ be the vector obtained by applying the randomized rounding procedure to $\rho$.
		Then
    \begin{align}
        \EE[x^T C x] 
        &\geq 
        \left(\frac{4}{\pi} - 1\right) n \tr(C \rho^\star) - \O(n\epsilon^{1/3}).
    \end{align}
\end{restatable}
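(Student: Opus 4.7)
The plan is to reduce the claim to the classical Goemans--Williamson/Grothendieck analysis by comparing the rounding of $\rho$ to the rounding of the ``repaired'' matrix $\rho^\sharp$ provided by Theorem~\ref{thm:muScaling}. Concretely, I would first invoke Theorem~\ref{thm:muScaling} to construct $\rho^\sharp \succeq 0$ with $\rho^\sharp_{ii}=1/n$ exactly and $\|\rho^\sharp-\rho\|_\mathrm{tr}=\O(\epsilon^{1/3})$. Matrix Hölder with $\|C\|=1$ then yields $|\tr(C\rho^\sharp)-\tr(C\rho)|=\O(\epsilon^{1/3})$, and combining with the hypothesis $\tr(C\rho^\star)-\tr(C\rho)\leq\epsilon$ gives $\tr(C\rho^\sharp)\geq \tr(C\rho^\star)-\O(\epsilon^{1/3})$.

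Next, I would apply the known approximation guarantee for the block-structured $C$ (the Alon--Naor/Grothendieck-style bound cited after Eq.~\eqref{eq:C_block}) to $\rho^\sharp$, which \emph{does} satisfy the exact diagonal constraints. Let $x^\sharp_i = \sgn((\sqrt{\rho^\sharp}g)_i)$. Then $\EE[x_ix_j]=(2/\pi)\arcsin(n\rho^\sharp_{ij})$, so that $\EE[(x^\sharp)^T C x^\sharp] \geq (4/\pi-1)\, n\,\tr(C\rho^\sharp) \geq (4/\pi-1)\,n\,\tr(C\rho^\star)-\O(n\epsilon^{1/3})$. This reduces the problem to comparing the two rounding procedures.

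The hard part is thus the coupling step: we round from $\rho$ rather than from $\rho^\sharp$, and must show that $|\EE[x^T C x]-\EE[(x^\sharp)^T C x^\sharp]|=\O(n\epsilon^{1/3})$. My plan is to couple via the same Gaussian $g$, i.e.\ $x_i=\sgn((\sqrt{\rho}g)_i)$ and $x^\sharp_i=\sgn((\sqrt{\rho^\sharp}g)_i)$. The jointly Gaussian vector $(z,z^\sharp)=(\sqrt{\rho}g,\sqrt{\rho^\sharp}g)$ has per-coordinate covariance controlled by $\|\sqrt{\rho}-\sqrt{\rho^\sharp}\|_F$, which by the Powers--Størmer inequality satisfies $\|\sqrt{\rho}-\sqrt{\rho^\sharp}\|_F^2 \leq \|\rho-\rho^\sharp\|_\mathrm{tr}=\O(\epsilon^{1/3})$. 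From this one extracts the per-site disagreement probabilities $\Pr[x_i\neq x^\sharp_i]=\tfrac{1}{\pi}\arccos(\tau_i)$ for the appropriate correlation $\tau_i$, and then bounds $|\EE[x_ix_j-x^\sharp_ix^\sharp_j]|\leq 2\Pr[x_i\neq x^\sharp_i]+2\Pr[x_j\neq x^\sharp_j]$. Summing against $C$ (and invoking $\|C\|=1$ together with the block bilinear structure to avoid paying an extra factor of $n$) should yield the claimed $\O(n\epsilon^{1/3})$ bound. The main technical obstacle is handling coordinates where $\rho_{ii}$ is very small or the diagonal deviation $|\rho_{ii}-1/n|$ is comparatively large: there the correlation $\tau_i$ can be far from $1$, and one has to separately dispose of the low-variance coordinates, e.g.\ by splitting the index set according to whether $|\rho_{ii}-1/n|$ exceeds some threshold chosen to balance the two error sources and yield the exponent $1/3$. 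Combining the resulting coupling estimate with the bound from the previous paragraph gives the claimed inequality.
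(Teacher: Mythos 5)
There is a genuine gap: the entire argument hinges on the coupling step, which you correctly identify as ``the hard part'' but do not actually carry out, and the route you sketch is unlikely to deliver the exponent $1/3$. Powers--St{\o}rmer gives $\norm{\sqrt{\rho}-\sqrt{\rho^\sharp}}_F^2\leq\trnorm{\rho-\rho^\sharp}=\O(\epsilon^{1/3})$, so the per-coordinate correlation deficits satisfy only $\sum_i(1-\tau_i)=\O(n\epsilon^{1/3})$ (each row of $\sqrt{\rho}$ has squared norm $\approx 1/n$). Since $\Pr[x_i\neq x_i^\sharp]=\arccos(\tau_i)/\pi\sim\sqrt{2(1-\tau_i)}/\pi$, Cauchy--Schwarz yields $\sum_i\Pr[x_i\neq x_i^\sharp]=\O(n\epsilon^{1/6})$ -- the square-root behavior of $\arccos$ near $1$ costs you half the exponent before you even start summing against $C$. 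On top of that, the coordinates with small $\rho_{ii}$ (where $\tau_i$ can be far from $1$) and the conversion of per-site disagreement probabilities into a bound on $|\EE[x^TCx-(x^\sharp)^TCx^\sharp]|$ without an extra factor of $n$ are both left open. As written, the proposal does not prove the theorem.

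The paper avoids the coupling entirely via a simple observation you missed: the rounding procedure outputs $x_i=\sgn\bigl(\sum_j(\sqrt{\rho})_{ij}g_j\bigr)$, and this sign is invariant under rescaling the $i$-th row of $\sqrt{\rho}$ by the positive number $1/\sqrt{\rho_{ii}}$. Hence rounding $\rho$ is \emph{exactly} the same random variable as rounding the correlation matrix $\sigma$ with $\sigma_{ij}=\rho_{ij}/(n\sqrt{\rho_{ii}\rho_{jj}})$, which satisfies the diagonal constraint $\sigma_{ii}=1/n$ exactly. The Grothendieck-type guarantee (Lem.~\ref{lem:AN_rounding}) then applies with no error from the rounding side, and the only loss is in the objective value: Lem.~\ref{lem:sigma_norm_bound} shows $\trnorm{\rho-\sigma}=\O(\epsilon^{1/3})$, so $\tr(C\sigma)\geq\tr(C\rho^\star)-\O(\epsilon^{1/3})$ by H\"older, which gives the claim. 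If you want to salvage your approach, replace $\rho^\sharp$ by $\sigma$ and note that the two roundings coincide pointwise, so no coupling estimate is needed.
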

The proof is given in Appendix \ref{sec:rounding_proof}.

\subsection{Numerical simulations for $\epsilon$ dependence} \label{sec:eps_scaling}

In the previous section, we have provided an analytic worst-case bound for the precision 
of the corrected SDP solution of $\O(\epsilon^{1/3})$.
Now, we study the actual scaling behavior numerically. 
To this end, we use the matrices $\rho$ that have been obtained as part of the numerical simulation described in Fig.~\ref{fig:eps_iterations}, and apply the rounding procedure $10^5$ times to each one. 
We quantify the performance of the rounding procedure in two ways:
(1)
The average objective value after rounding
(because this is the quantity the theoretical guarantees make direct statements about).
(2)
The maximum objective value (because this is the number that would be used as the output of a numerical procedure).
To reduce statistical fluctuations, the plots below show the average of 100 maximal values, computed for batches of size 1000 each.

In a second step, we compare the objective values obtained from rounding HU solutions to the objective values obtained from rounding exact solutions from an SDP solver (Splitting Conic Solver \cite{ODonoghue2016}).
To this end, let $x_\epsilon, x_\mathrm{opt} \in \{-1,1\}^n$ be vectors obtained from applying randomized rounding to the HU solution and the SDP solver solution respectively. 
We then define the precision
\begin{align} \label{eq:nu}
   \nu=(x_\mathrm{opt}^TC x_\mathrm{opt}- x_\epsilon^TC x_\epsilon)/n    
\end{align}
of the rounded solution.

We display the results in Fig.~\ref{fig:epsilon_scaling} and summarize the estimated asymptotic scaling behaviors in Sec.~\ref{sec:asymptotic_improvements}.

\begin{figure}[H]
    \centering
    \begin{subfigure}[b]{0.48\textwidth}
        \includegraphics[width=\textwidth]{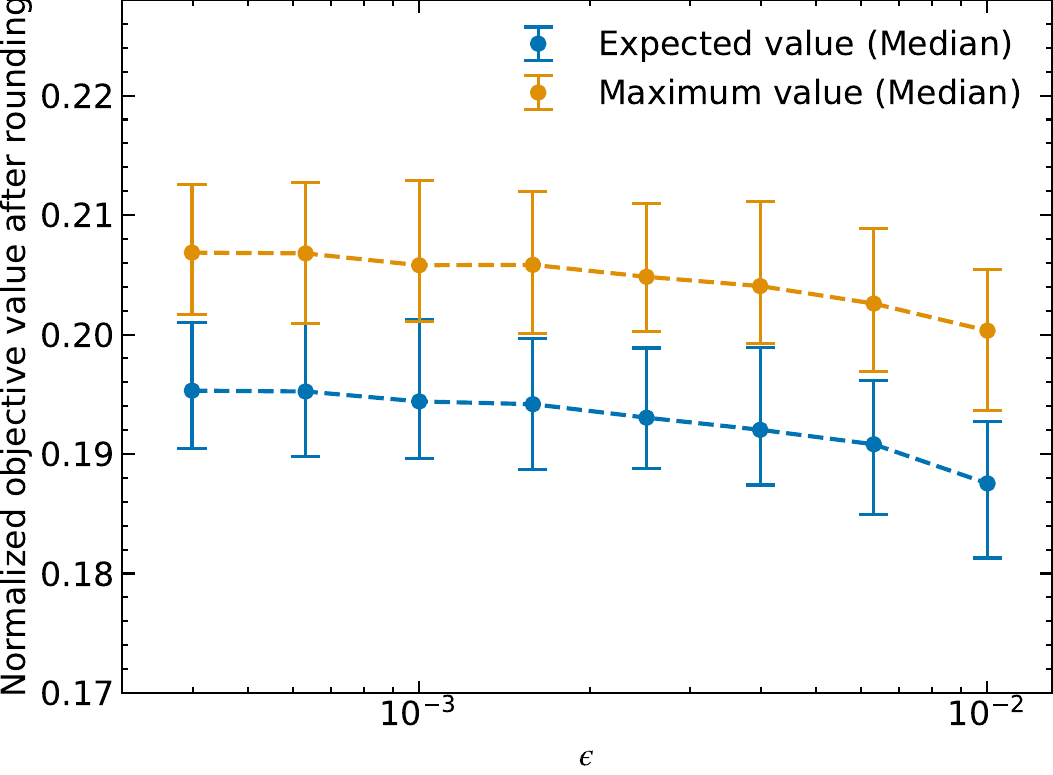}
    \end{subfigure}
    \hfill
    \begin{subfigure}[b]{0.48\textwidth}
        \includegraphics[width=\textwidth]{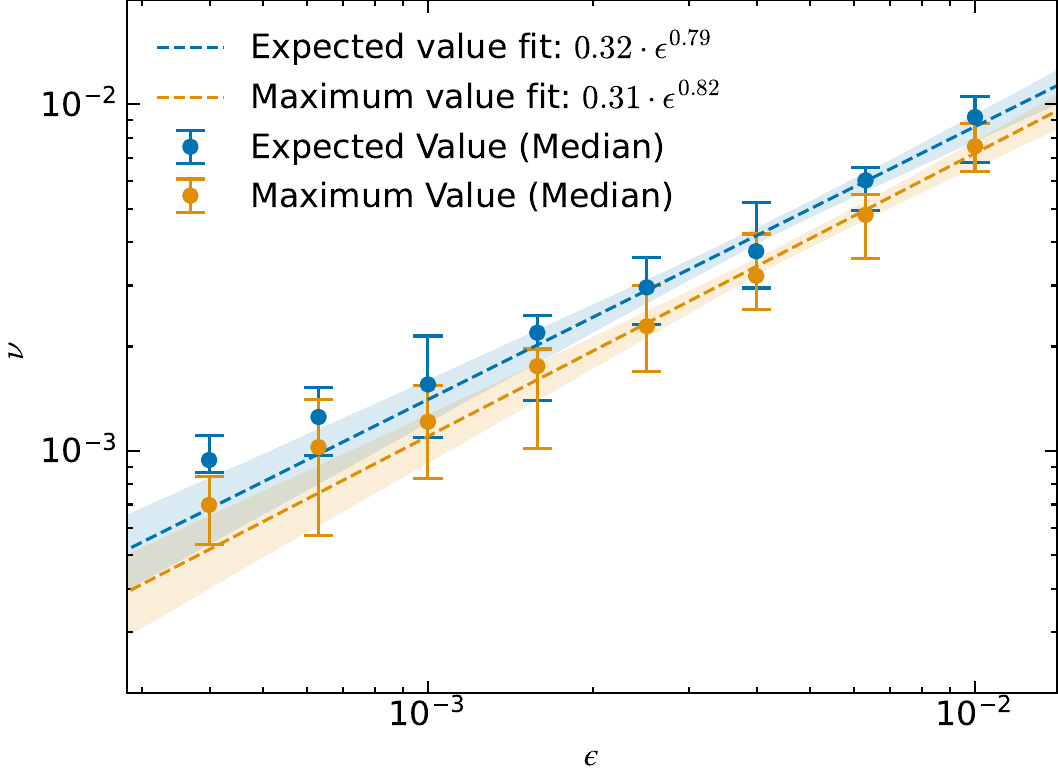}
    \end{subfigure}
    \caption{
    Behavior of the objective values $x^TC x / n$ after rounding. Results are obtained by performing a binary search over HU instances for different precision parameters $\epsilon$ and applying randomized rounding. The results are averaged over 20 cost matrices with dimension $n=1024$ and sparsity $s=16$. The error bars show the upper and lower quartiles.
    \\
    \textbf{Left panel}: 
		Comparison of average objective value (blue) and the maximum value seen in 1000 roundings (orange). 
    \\
    \textbf{Right panel}: 
    Difference $\nu=(x_\mathrm{opt}^TC x_\mathrm{opt}- x_\epsilon^TC x_\epsilon)/n$,
		where 
		$x_\mathrm{opt}$ 
		and
		$x_\epsilon$
		have been obtained, respectively,
		by applying the rounding procedure to an optimal SDP solution,
		and an $\epsilon$-precise HU solution. 
		The fit for the average value is given by $f_{\nu}(\epsilon)=0.32 \epsilon^{0.79}$ with a 95\% confidence interval of $(0.71,0.88)$ for the exponent. The colored areas show the 95\% confidence interval of the fits.
    }
    \label{fig:epsilon_scaling}
\end{figure}

We observe that for the parameters tested, 
taking the best solution obtained from 1000 randomized rounding runs
improves the final objective value significantly more 
than increasing the precision $\epsilon$ from $10^{-2}$ to $10^{-3.4}$.
Because the rounding procedure is relatively computationally cheap, 
it seems advisable to work with a very high number of randomized roundings.
We leave the task of determining the optimal tradeoff between the number of roundins and $\epsilon$ open for future work.

From the second part of the analysis, we find that the average of $\nu$ scales with $\O(\epsilon^{0.79})$. 
In contrast, the lower bounds for the two summands in Eq.~(\ref{eq:nu}) differ by a term of order $\O(\epsilon^{1/3})$.
Therefore, the numerically observed scaling of $\nu$ as a function of $\epsilon$ is significantly better than a naive estimate based on the difference of the lower bounds would have suggested.

\section{Improved Gibbs state simulation}
\label{sec:quant_imp}

In this section, we present improvements to the quantum version of the HU algorithm, which was originally given in Ref.~\cite{Brandao2022fasterquantum}.
On a high level, the idea of the quantum implementation is to realize the Gibbs state $\rho$ as the \emph{physical} state of a quantum system.
The advantage of this method is that the dimension of matrices representable in this way scales exponentially with the number of qubits.
The disadvantage is that information about the violation of the constraints and about the objective value have to be estimated statistically from physical measurements.
We refer to Ref.~\cite{Brandao2022fasterquantum} for a thorough description.
In this section, we only account for the parts of the method for which we suggest improvements.

The quantum version of the HU algorithm delegates a number of subroutines to a quantum computer.
More specifically,
given a classical descriptions of $H$, $P_c$, and $\Delta H$, it uses quantum subroutines to estimate the trace inner products $\tr(P_c\rho)$ and $\tr(\Delta H\rho)$, and the main diagonal elements $\rho_{ii}$. 
Recall that $\rho$ is the Gibbs state for $H$.

As already stated in \cite[Lem.~3.3]{Brandao2022fasterquantum}, the quantum routine for estimating the diagonal elements dominates  the running time.
In their approach,
obtaining an estimate for the probability distribution $\rho_{ii}$ up to an error of $\O(\epsilon)$ in $\ell_1$ norm
for the Gibbs states $\rho$ that appear in the HU algorithm requires a number of gates that scales as
\begin{align*}
	\tilde\O(n^{3/2}s^{1/2+o(1)}\epsilon^{-5+o(1)}),
\end{align*}
where the $\tilde \O$ notation hides logarithmic factors.
In contrast, we will argue that this scaling can be improved to
\begin{align*}
    \tilde \O(n^{3/2}s^{1/2+o(1)}\epsilon^{-3+o(1)}).
\end{align*}

The improvement is mainly achieved by invoking newer and more optimal methods for working with Gibbs states on a quantum computer.
Specifically, \cite{Brandao2022fasterquantum} was based on Ref.~\cite{PoulinGibbs2009}, while we switch to 
Ref.~\cite{vanApeldoorn2020quantumsdpsolvers}.

That reference, in turn, builds on a subroutine for \emph{Hamiltonian $\epsilon$-simulation}, i.e.\ for the task of implementing a unitary $U$ that is $\epsilon$-close to $e^{i t H}$ in operator norm,
given access to an $s$-sparse matrix $H\in\RR^{n\times n}$ stored in QRAM, and a time $t\in\RR$. 

The results of Ref.~\cite{vanApeldoorn2020quantumsdpsolvers} are stated explicitly based on the Hamiltonian simulation algorithm of Ref.~\cite{Berry_2015}. 
However, it is a straight-forward (if lengthy) exercise to swap in an improved method.
For our analysis, we have done just that, using the 
routine from Ref.~\cite{LowBlockEncoding}, which has complexity $\O(t\sqrt{s}\|H\|_{\ell_1\rightarrow\ell_2})^{1+o(1)}/\epsilon^{o(1)})$. 
Because  $\|H\|_{\ell_1\rightarrow\ell_2}\leq\|H\|$, this allows to achieve a similar scaling in the sparsity as the original quantum HU routine from Ref.~\cite{Brandao2022fasterquantum}, while simultaneously having an improved dependence on the precision $\epsilon$.  
The Hamiltonian Simulation subroutine enters the analysis of 
Ref.~\cite{vanApeldoorn2020quantumsdpsolvers} in their Lemma~36.
Plugging in the version of Ref.~\cite{LowBlockEncoding} and retracing the rest of the argument then gives the following performance for preparation of a Gibbs state:

\begin{lemma}[{\cite[Lem. 44]{vanApeldoorn2020quantumsdpsolvers}}] \label{lem:gibbs_sim}
    Given QRAM access to an $s$-sparse matrix $H\in\RR^{n\times n}$ satisfying $\Id \prec H$ and $2\Id\nprec H$, we can probabilistically prepare a purified Gibbs state $|\tilde\rho\rangle_{AB}$,
    s.t.\ with high probability ${\trnorm{\Tr_B(|\tilde\rho\rangle\langle\tilde\rho|_{AB}) - e^{-H}/\tr(e^{-H})} \leq \frac\epsilon8}$ holds, using
    \begin{align*}
            \tilde \O((\norm{H}\sqrt{s})^{1+o(1)}\sqrt{n})
    \end{align*}
    queries and gates.
\end{lemma}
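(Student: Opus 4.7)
The plan is to re-run the proof of Lemma~44 in Ref.~\cite{vanApeldoorn2020quantumsdpsolvers}, substituting the Hamiltonian simulation routine of Ref.~\cite{LowBlockEncoding} wherever the original argument invokes Ref.~\cite{Berry_2015}. The overall strategy has three parts, and the main obstacle is verifying that this substitution is modular.

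First, I would construct a block encoding of $e^{-H/2}$. From QRAM access to the $s$-sparse matrix $H$, a standard sparse-access construction produces a block encoding of $H/\alpha$ with subnormalization $\alpha = O(\sqrt{s}\,\|H\|_{\ell_1\to\ell_2})$, at the cost of $O(1)$ queries and polylogarithmic gates per call. Applying the quantum singular value transformation with a polynomial approximation of $x\mapsto e^{-x/2}$ on the spectrum of $H$ then yields a block encoding of $e^{-H/2}$ with subnormalization $O(1)$, since $\|e^{-H/2}\|\leq 1$ under the hypothesis $H\succ 0$. Equivalently, one may invoke Hamiltonian simulation at imaginary time $1/2$; using Ref.~\cite{LowBlockEncoding}, the query complexity is $\tilde\O((\|H\|\sqrt{s})^{1+o(1)})$ with polylogarithmic dependence on the target precision.

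Second, I would apply this block encoding to the maximally entangled state
\begin{align*}
    |\Phi\rangle_{AB} = \frac{1}{\sqrt{n}}\sum_{i=1}^n |i\rangle_A \otimes |i\rangle_B,
\end{align*}
so that post-selection on the block-encoding ancillas being in $|0\rangle$ produces a purification of $e^{-H}/\tr(e^{-H})$. The success probability of this post-selection is $p = \Theta(\tr(e^{-H})/n)$. The hypothesis $2\Id\nprec H$ guarantees that $H$ has at least one eigenvalue bounded above by $2$, so $\tr(e^{-H})\geq e^{-2}$ and $p = \Omega(1/n)$. I would then invoke $O(\sqrt{1/p}) = O(\sqrt{n})$ rounds of fixed-point amplitude amplification to boost the success probability to $\Omega(1)$. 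Multiplying the per-round cost $\tilde\O((\|H\|\sqrt{s})^{1+o(1)})$ from the first step by the $O(\sqrt{n})$ amplification rounds yields the claimed total
\begin{align*}
    \tilde\O\!\left((\|H\|\sqrt{s})^{1+o(1)}\sqrt{n}\right)
\end{align*}
queries and gates. The trace-norm accuracy $\epsilon/8$ then follows by propagating the polynomial-approximation error through the amplification and partial trace, as in Ref.~\cite{vanApeldoorn2020quantumsdpsolvers}.

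The hard part will be confirming modularity of the subroutine substitution. Concretely, one must trace through the proofs of Lemmas~36 and~44 of Ref.~\cite{vanApeldoorn2020quantumsdpsolvers} and check that the only use of Ref.~\cite{Berry_2015} there is as a black box producing a unitary $\epsilon$-close to $e^{iHt}$ in operator norm, so that plugging in the sharper complexity of Ref.~\cite{LowBlockEncoding} enters only through a clean multiplicative replacement in the final bound. The excerpt asserts that this is indeed the case, and the remaining work is a careful line-by-line bookkeeping of the $\epsilon$-dependence through the amplitude amplification loop.
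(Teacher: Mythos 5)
Your proposal is correct, and its second half coincides with the paper's (i.e.\ van Apeldoorn--Gily\'en's Lemma~44) argument exactly: apply a block encoding of the Gibbs operator to the maximally entangled state $\tfrac{1}{\sqrt n}\sum_i|i\rangle|i\rangle$, note that $2\Id\nprec H$ forces $\lambda_{\min}(H)\le 2$ and hence $\tr(e^{-H})\ge e^{-2}$, so the post-selection probability is $\Omega(1/n)$ and $\O(\sqrt n)$ rounds of amplitude amplification suffice (the paper's Appendix~B even makes this quantitative, with $0.69\sqrt{n/z}$, $z=e^{-2}$). Where you diverge is in how the block encoding of $e^{-H/2}$ is synthesized. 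You propose QSVT with a polynomial approximation of $x\mapsto e^{-x/2}$ applied directly to the sparse-access block encoding of $H$; this is a legitimate and arguably more direct route, and the degree $\tilde\O(\alpha)$ of the required polynomial reproduces the $(\norm{H}\sqrt s)^{1+o(1)}$ factor. The paper instead follows the lemma chain it cites (Lem.~36 $\to$ Thm.~40/43 $\to$ Lem.~44 of Ref.~\cite{vanApeldoorn2020quantumsdpsolvers}): build a \emph{controlled real-time simulation} $W=\sum_m|m\rangle\langle m|\otimes e^{im\gamma H}$ and implement $e^{-H}$ via an LCU over a truncated Fourier series. That detour through real-time simulation is not incidental — it is precisely the point at which the Berry et al.\ subroutine can be swapped for the one of Ref.~\cite{LowBlockEncoding} as a black box (the "modularity" your last paragraph correctly identifies as the crux), and it is the route along which the paper's non-asymptotic constants in Appendix~B (e.g.\ $M\gamma\ge 2\pi\ln(7.8\epsilon^{-1})$) are derived. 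Your phrase "equivalently, invoke Hamiltonian simulation at imaginary time $1/2$" is loose — Ref.~\cite{LowBlockEncoding} provides real-time simulation only, and the imaginary-time step must go either through QSVT on the block encoding (your first option) or through the Fourier-LCU construction (the paper's) — but this does not affect the correctness of the final bound.
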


Lem.~\ref{lem:gibbs_sim} requires $H$ to satisfy satisfies $\Id \prec H$ and $2\Id\nprec H$. We achieve this by employing a trick from Ref.~\cite[Cor.~14]{vanApeldoorn2020quantumsdpsolvers}, where we compute an estimate $\tilde\lambda_{\mathrm{min}}$ of the minimum eigenvalue of $H$ with an additive error $1/2$. Using this estimate, we apply Lem.~\ref{lem:gibbs_sim} to a shifted Hamiltonian 
\begin{align}\label{eq:H_+}
    H_+:=H-(\tilde\lambda_{\mathrm{min}} - 3/2)\Id   
\end{align}
that fulfills the requirements. Shifting a Hamiltonian by a multiple of the identity does not change the corresponding Gibbs state. Computing $\tilde\lambda_{\mathrm{min}}$ can be achieved using $\tilde \O(\norm{H} s \sqrt{n})$ queries and gates (c.f.\ Ref.~\cite[Lem.\ 50]{vanApeldoorn2020quantumsdpsolvers} with $\epsilon=1/2$ ).

As argued in Ref.~\cite[Sec.~3.4]{Brandao2022fasterquantum}, for Hamiltonian matrices $H$ that occur in the HU algorithm, one has the operator norm bound $\|H\|=\O(\log(n)\epsilon^{-1})$. 
Furthermore, they point out that, given $\O(n\epsilon^{-2})$ preparations of $\rho$ with precision $\epsilon/8$, one can acquire estimate $\tilde\rho_{ii}$ for the diagonal entries fulfilling $\sum_i |\tilde\rho_{ii} - \rho_{ii}| = \epsilon/4$. Thus, the cost per iteration of the HU algorithm is
\begin{align*}
    \tilde \O(n^{3/2}s^{1/2+o(1)}\epsilon^{-3+o(1)}),
\end{align*}
as claimed.

\section{Asymptotic performance of the improved Hamiltonian Updates} \label{sec:asymptotic_improvements}

In this section, we summarize the improvement in the asymptotic performance of the HU procedure that we have achieved compared to Ref.~\cite{Brandao2022fasterquantum}. 

The first table below gives estimates for the asymptotic complexity of solving the problem (\ref{eq:approx_feasibility_constraint}) as a function of $\epsilon$.
In this paper, we did not attempt to find improved rigorous asymptotic estimates for this scaling behavior.
However, in Sec.~\ref{sec:class_imp}, we have described several practical ways to speed up convergence.
The numerical results presented in Sec.~\ref{sec:numerics_improvements} show that these lead to speedups by very large constant factors, but the numerically observed asymptotic scaling matches the original exponent within the margin of error.

\begin{table}[H]
	\centering \scriptsize
	\begin{tabular}{|p{1.5cm}||p{2.3cm}|p{2cm}|p{1.8cm}|p{2.1cm}|}
	\hline \rule{0pt}{2.5ex}
		  & Previous theor.\ bounds & New theor.\ bounds  & Numerical scaling  & 95\% CI for exponent \\ \hline 
		\raggedright Number of \mbox{iterations}
        & \multirow{2}{*}{$\tilde \O(\epsilon^{-2})$}  
        & \multirow{2}{*}{(no new results)}   
        & \multirow{2}{*}{$\tilde \O(\epsilon^{-2.02})$}  & 
		\multirow{2}{*}{$[-2.39,  -1.74]$}
		\\ \hline 
		\end{tabular}
\end{table}

The next table summarizes the scaling of the precision $\nu$ of the rounded solution (compared to an ideal SDP solution, see Eq.~(\ref{eq:nu})) as a function of $\epsilon$, as detailed in Sec.~\ref{sec:rounding}.

\begin{table}[H]
    \centering \scriptsize
    \begin{tabular}{|p{1.5cm}||p{2.3cm}|p{2cm}|p{1.8cm}|p{2.1cm}|}
        \hline \rule{0pt}{2.5ex}
        & Previous theor.\ bounds & New theor.\ bounds & Numerical scaling & 95\% CI for exponent\\ \hline 
        \raggedright Precision after rounding 
        & \multirow{2}{*}{$\tilde \O(\epsilon^{0.25})$} 
        & \multirow{2}{*}{$\tilde \O(\epsilon^{0.33})$} 
        & \multirow{2}{*}{$\tilde \O(\epsilon^{0.79})$} 
        & \multirow{2}{*}{$[0.71, 0.88]$} \\
          \hline
    \end{tabular}
\end{table}

The improved scaling of the total running time of each iteration as a function of $\epsilon$, as detailed in Sec.~\ref{sec:quant_imp} is as follows

\begin{table}[H]
    \centering \scriptsize
    \begin{tabular}{|p{2cm}||p{3.2cm}|p{3.2cm}|}
	\hline \rule{0pt}{2.5ex}
		& Previous theor.\ bounds        & New theor.\ bounds \\ \hline
        Time per iteration:   &                              &         \\ 
				\quad classical         &$\tilde\O(\min\{n^3, n^2 s \epsilon^{-1}\})$                      & (no new results)                            \\
        \quad quantum           & $\tilde \O(n^{1.5} s^{0.5+o(1)} \epsilon^{-5+o(1)})$ & $\tilde \O(n^{1.5} s^{0.5+o(1)} \epsilon^{-3+o(1)})$  \\ \hline 
    \end{tabular}
\end{table}

The previous tables can now be combined, to estimate the scaling of the total running time required to achieve a given precision $\nu$ of the solution after rounding.
For the final column, we have combined the analytic estimates on the running time from the previous table,
with the numerically found scaling of the precision $\nu$ from the table above.

\begin{table}[H]
    \centering \scriptsize
    \begin{tabular}{|p{0.99cm}||p{3.10cm}|p{3.10cm}|p{3.50cm}|}
	\hline \rule{0pt}{2.5ex}
		& Previous theor.\ bounds        & New theor.\ bounds          & Combined theor.\ / num.\ scaling
		\\ \hline 
        Total time            &                               &                   &                \\
        \,classical         & $\tilde \O(\min\{n^3 \nu^{-8}, n^2s\nu^{-12}\})$      & $\tilde \O(\min\{n^3 \nu^{-6}, n^2s\nu^{-9}\})$         & $\tilde \O(\min\{n^3 \nu^{-2.56}, n^2s \nu^{-3.82}\})$ \\ 
        \,quantum           & $\tilde \O(n^{1.5}s^{0.5+o(1)}\nu^{-28 +o(1)})$   & $\tilde \O(n^{1.5}s^{0.5+o(1)}\nu^{-15+o(1)})$ & $\tilde \O(n^{1.5}s^{0.5+o(1)}\nu^{-6.35})$ \\ \hline
    \end{tabular}
\end{table}

Theorem~1 of \cite{Brandao2022fasterquantum} (reproduced in our introduction)
stated their asymptotic performance in terms of the precision $\mu$ of the SDP solution,
not in terms of the practically more relevant precision $\nu$ of after rounding.
However, as we found in Sec.~\ref{sec:rounding_section}, the two quantities display the same scaling behavior as a function of $\epsilon$.
Hence, conversely, the running time scaling as a function of $\nu$ matches the running time scaling as a function of $\mu$.
Therefore, the results of the previous table are directly comparable with Theorem~1 of \cite{Brandao2022fasterquantum}.

\section{Non-asymptotic benchmarking of quantum implementations} \label{sec:benchmarking}
In this section, we compare the quantum implementation of HU to its classical counterpart by estimating the required number of quantum gates and tracking the classical computation time. 
We then extrapolate these results to larger problem instances to explore for which problem sizes one can expect a future quantum computer to beat a classical one, under optimistic assumptions.
This approach mirrors previous studies for the non-asymptotic behavior of quantum algorithms that are too large to be simulated in the gate model; see, e.g.~\cite{Cade_2023, grover_rudolph, ammann2023realisticruntimeanalysisquantum, qubrabench_repository}.
We assume that the reader is familiar with standard methods in quantum computing. For further background consult e.g.\ Ref.~\cite{Nielsen_Chuang_2010}. 

\subsection{Gate counting}

To compare the classical and quantum running times, we generate random problem instances and solve them using the classical version of the HU algorithm. We estimate the number of gates needed for the quantum subroutines and compare this with the classical running time. From this comparison, we determine the maximum allowable gate time for each instance that would enable a quantum computer to outperform its classical counterpart.
For our estimates of the quantum gate count, we consistently make assumptions that are favorable to the quantum computer.
This approach will be justified in retrospect:
It will turn out that even these optimistic estimates put the threshold for a quantum advantage far beyond the capabilities of realistic hardware.

As discussed in Sec.~\ref{sec:quant_imp}, the running time of the quantum part is dominated by the task of estimating the diagonal elements $\rho_{ii}$ of the Gibbs state.
We will only estimate the cost of this part, and neglect all other quantum sub-routines.
(Recall that this is justified, as we aim for an \emph{optimistic} assessment of the quantum complexity).
In the benchmarks, the precision
of the Gibbs state simulations is set to $\epsilon/8$, in line with Lem.~\ref{lem:diag_estimation} and Ref.~\cite[Lem.\ 3.3]{Brandao2022fasterquantum}. 
For simplicity, and again being generous to the quantum approach, we nevertheless assume that the subroutines return exact values. 
Additionally, we compute the free energy directly using Eq.~\eqref{eq:free_energy} instead of bounding it via Eq.~\eqref{eq:F_approx}.
Finally, we will count solely  logical two-qubit gates, neglecting single-qubit gates and error-correction overheads.

As detailed in Appendix~\ref{sec:quantumCircuits},
in this framework, we find the following result for the complexity of preparing Gibbs states.

\begin{restatable}{estimate}{gateCost} 
    \label{thm:gate_cost}
    Let $H\in\RR^{n\times n}$ be an $s$-sparse matrix and $b$ be the number of bits per entry used to store $H$.
    The expected number of two-qubit gates used to prepare an approximation of the Gibbs state $\rho_{H}$ with precision $\epsilon$ using the constructions in Refs.~\cite{LowBlockEncoding, cuccaro2004Adder, Low2019Qubitization, vanApeldoorn2020quantumsdpsolvers, BoyerSearching} is at least
    \begin{align} \label{eq:gate_cost_sample}
         (32b+32\log_2(n)-18)\,(4.5 \ln(7.8\epsilon^{-1})\, n^{1/2} s\norm{H_+}_\mathrm{max} -1),
    \end{align}
    with $H_+$ constructed from $H$ as defined in Eq.~\eqref{eq:H_+}.
\end{restatable}

The estimates for the diagonal entries of $\rho$ are obtained statistically by measuring multiple prepared Gibbs states in the computational basis. The number of Gibbs state samples required can be bounded as follows:
\begin{restatable}{estimate}{numberSamples} 
    \label{est:number_samples}
    Let $\epsilon\leq 1/4$. Let $\rho\in\RR^{n\times n}$ be a quantum state that can be prepared with precision $\epsilon/8$ in trace distance on a quantum computer. The expected number of preparations needed to compute estimates $\hat\rho_{ii}$ for the diagonal entries of $\rho$, s.t.\ $\sum_i|\rho_{ii}-\hat\rho_{ii}|\leq\frac\epsilon 4$ using the construction in Appendix~\ref{sec:quantumCircuits}, is at least
    \begin{align}  \label{eq:sample_number}
        128\ln(2)\epsilon^{-2}n.
    \end{align}
\end{restatable}

A detailed account for how to arrive at Estimates~\ref{thm:gate_cost} and \ref{est:number_samples} is provided in Appendix~\ref{sec:quantumCircuits}. 

We compute the total number of two-qubit gates required for a diagonal update step in the HU routine by multiplying the gate cost per Gibbs state sample \eqref{eq:gate_cost_sample} with the expected number of samples needed to estimate the diagonal entries of $\rho$ \eqref{eq:sample_number}.
 
We avoid directly computing $\norm{H_+}_\mathrm{max} = \|H-(\tilde\lambda_\mathrm{min} - 3/2)\Id\|_\mathrm{max}$ in the benchmark, as determining the minimum eigenvalue classically for each iteration is computationally expensive. Instead, we use a conservative estimate $\norm{H_+}_\mathrm{max}\gtrapprox \norm{H}_\mathrm{max}$, based on our numerical observation $\norm{H_+}_\mathrm{max}\approx 2\norm{H}_\mathrm{max}$.
Additionally, we assume that the matrix $H$ is represented on the quantum architecture with just $b=8$ logical bits per element.

We generate random cost matrices with a block form as defined in \eqref{eq:C_block} with normally distributed non-zero entries in each block and a sparsity of $s=16$.
A total of 256 instances are created, with dimensions uniformly distributed between $512 \leq n \leq 4096$. 
We then solve the instances using the HU algorithm with $\epsilon=0.01$ and a binary search with a final maximum gap of $0.01$.
The matrix exponentials -- the most computationally expensive part of the process -- are executed on an Nvidia GeForce RTX 4090 GPU card, while an Intel i7-13700 CPU handles the other operations in the routine.

\subsection{Results}

Here, we compare the quantum and the classical running times for the benchmarked non-asymptotic instances.

In a first step, we ignore the possibility of parallelizing the quantum implementation.
With this in mind, we divide the time the classical implementation took by the lower bound on the number of quantum gates given in the last section.
A necessary condition for a quantum architecture to be preferable to classical consumer hardware would be that
its two-qubit gates run in time at most equal to that quotient.

The results are displayed in the left panel of Fig.~\ref{fig:benchmark}.
We find that even for large instances up to dimension $n=4096$ a quantum computer requires a two-qubit gate time of less than $10^{-19}$s to be able to break even. 
This is more than ten orders of magnitude away from the current single-qubit gate speed record of $6.5\times 10^{-9}$s \cite{Chew2022}.

There are two parts of the quantum algorithm where parallelization can provide an advantage:
\begin{itemize}
    \item 
			\emph{Parallel execution of quantum gate operations}.
			Gates acting on different qubits can be applied simultaneously rather than sequentially.
			The smallest number of layers required to execute a circuit taking this possibility into account is called its \emph{depth}.  
			Unfortunately, in our case, it is not known how to reduce the circuit depth significantly below the gate count.
			Indeed, the main building block of the circuit is an adder, for which
			the state-of-the-art two-qubit gate depth is only a factor of two lower than the total two-qubit gate count \cite{cuccaro2004Adder}.
    \item 
			\emph{Parallel sampling of Gibbs states}.
			This step offers significant parallelization potential, as the required number of samples is more than $128\ln(n)\epsilon^{-2}n$.
			From a computer science perspective, this reduction in algorithmic depth might be of theoretical interest.
			However, each simultaneous computation would require its own dedicated quantum computer, limiting the practical relevance.
\end{itemize}

\subsubsection{Extrapolation}

Due to the superior asymptotic scaling of the quantum HU algorithm compared to its classical counterpart, larger problem instances make it easier for the quantum approach to break even. 
We now investigate whether this favorable scaling is sufficient to achieve a quantum advantage for instances that remain computationally feasible at all.
For this we use two different approaches: 
(1) 
Extrapolate using the exponents that have been derived theoretically in Sec.~\ref{sec:asymptotic_improvements}, and let the benchmark data determine just the prefactor. 
(2) Fit a powerlaw function of the form $f(t)=a_1t^{a_2}$ with free parameters $a_1,a_2$ to the data. 

Both approaches are displayed in Fig.~\ref{fig:benchmark}.
We have extrapolated the quantum gate counts for instance sizes that would take more than 100 years to be solved with the given classical hardware. 
Still, even at this scale, the required two-qubit gate time is more than a factor of $10^7$ less than current quantum gate times. 
For a powerlaw extrapolation (2) the gap is even larger with a factor of more than $10^8$.

\begin{figure}[H]
    \centering
    \begin{subfigure}[b]{0.99\textwidth}
        \includegraphics[width=\textwidth]{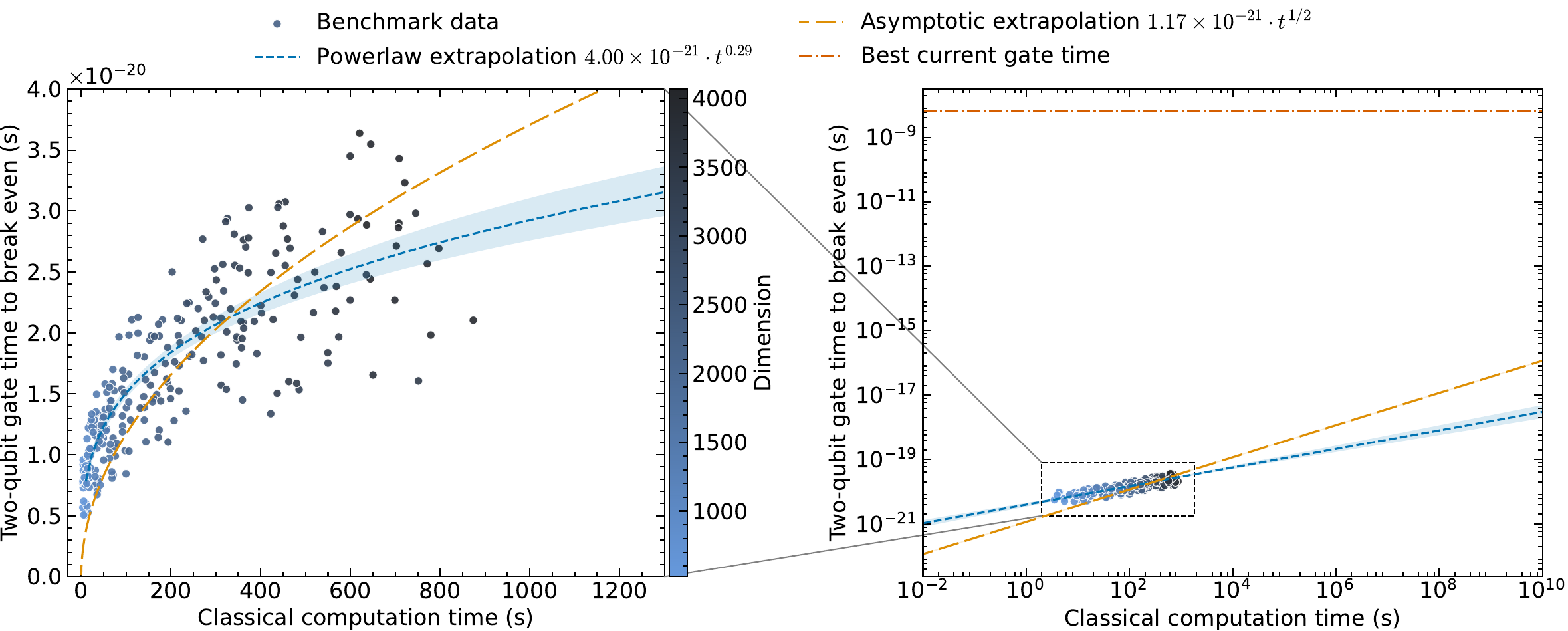}
    \end{subfigure}
    \caption{Allowed maximum two-qubit gate time for a quantum computer to break even to a classical simulation. The benchmark data is extrapolated based on the known asymptotic scaling behaviors (orange) and a power law fit (blue). The constant line (red) shows the best current classical single-qubit gate time of $6.5$ns. The colored areas show the 95\% confidence interval of the fits.
    We see that with this gate speed, even for running times of more than 100 years, the quantum implementation would be more than $10^7$ times slower than the classical implementation.
    }
    \label{fig:benchmark}
\end{figure}

\section{Conclusion}

In this paper, we have investigated whether the theoretically proven asymptotic speed-ups provided by quantum SDP solvers can be used to achieve a practical advantage for solving convex relaxations of combinatorial optimization problems.
A priori, combinatorial problems seem to be a good fit for quantum SDP methods, because their main drawback -- the unfavorable scaling in the precision -- is less important for applications where the solution will be subjected to a rounding procedure.

Unfortunately, our work has found no indication that advantages manifest in regimes that are remotely realistic.
This finding holds in spite of the fact that
we have spent significant efforts to improve the Hamiltonian Updates algorithm before benchmarking it,
and
that we have made a large number of approximations and assumptions in favor of a quantum architecture
(in particular, ignoring all quantum error correction overhead, and only estimating the cost of a subset of the routines a quantum computer would have to run).

It therefore seems that, to the best of our current knowledge, the proposed quantum SDP methods for combinatorial optimization may constitute a \emph{galactic algorithm} -- advantageous in theory, but such that their benefit requires instance sizes that are far beyond what can be realistically solved.

We note that the present work does not rigorously prove the absence of a realistic quantum advantage.
Indeed, any such result would require rigorous and non-asymptotic lower bounds on the classical complexity of the problem.
But unconditional complexity lower bounds are notoriously hard to obtain \cite{arora2009computational}, so that the results of thorough benchmarks of state-of-the-art implementations seem to constitute the best evidence that is realistically achievable.

\bigbreak
\emph{Data availability} - The data that supports the findings of this article is openly available \cite{code_repository}.

\subsection{Acknowledgements}

We thank Brandon Augustino, Johannes Berg, Lionel Dmello, and Sebastian Stiller for insightful discussions.

This work was supported by the 
Federal Ministry for Economic Affairs and Climate Action (BMWK), project ProvideQ,
and the
German Federal Ministry of Education and Research (BMBF), project QuBRA. 
FH and DG are also supported by Germany's
Excellence Strategy -- Cluster of Excellence Matter and Light for Quantum Computing (ML4Q) EXC 2004/1 (390534769).

\printbibliography

\newpage

\appendix

\section{Proofs}

\subsection{Free energy tracking} \label{sec:EntropyProof}

Lem.~\ref{thm:entropyBoundLinear} describes a well-known concept in statistical mechanics \cite[Ch.~2.11]{feynman1998statistical}. 
We still give  the calculation for completeness. 
\entropyBoundLinear*
\begin{proof}
We have
    \begin{align}\begin{split}
        \partial_{\lambda} \tr (e^{-H-\lambda \Delta H})  
		=&
		\sum_{k=0}^\infty 
		\frac1{k!} \tr \left(
			\partial_{\lambda} (-H-\lambda \Delta H)^k 
		\right) \\
		=&
		\sum_{k=1}^\infty 
		\frac{-1}{k!} \sum_{i=0}^{k-1} \tr
		\left(
			(-H-\lambda \Delta H)^i  \Delta H (-H -\lambda \Delta H)^{k-i-1} 
		\right) \\
		=&
		\sum_{k=1}^\infty 
		\frac{-1}{(k-1)!} \tr
		\left(
			(-H-\lambda \Delta H)^{k-1} \Delta H
		\right)
		\\
		=&
		- \tr
		\left(
			e^{-H-\lambda \Delta H}  \Delta H  
		\right),
    \end{split}\end{align}
    where the key step was to use the cyclicity of the trace to combine the terms $(-H-\lambda \Delta H)$. Then
    \begin{align}\begin{split}
        \partial_{\lambda} F(H+\lambda \Delta H) 
		=&
        -\partial_{\lambda} \ln \tr (e^{-H-\lambda \Delta H}) \\
        =& 
        - \frac{\partial_{\lambda} \tr (e^{-H-\lambda \Delta H})}{\tr (e^{-H-\lambda \Delta H})} \\
        =&
         \frac{\tr(e^{-H-\lambda \Delta H}  \Delta H) }{\tr (e^{-H-\lambda \Delta H})} \\
        =& 
        \tr(\rho_{H+\lambda\Delta H} \lambda \Delta H).
    \end{split}\end{align}
\end{proof}

\label{sec:EntropyProof2}
Next, we prove Thm.~\ref{thm:entropyBoundConvergence}.  In preparation for this, we need the following Lemmas \ref{lem:bound_R} and \ref{lem:a_priori_entropy}.
\begin{lemma}\label{lem:bound_R}
    Let $0\leq\beta<1$ be the momentum hyperparameter, $\Delta H\in\RR^{n\times n}$ be an update matrix in the HU routine and
 \begin{align}
        \lambda = \frac{(1-\beta)^2}{2}\tr(\rho_H\Delta H)
 \end{align}
 the step length of an update. Then, 
    \begin{align}
        \inf_{c\in\RR}\norm{\Delta H - c\Id} \leq \frac{1}{1-\beta}.   
    \end{align}
\end{lemma}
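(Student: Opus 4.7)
The plan is to unroll the momentum recursion and then apply a triangle inequality for the distance functional $A \mapsto \inf_{c \in \RR}\norm{A - c\Id}$. First I would observe that the factor of $\lambda^{(k-1)}$ in the momentum term cancels cleanly: since $M^{(k-1)} = \lambda^{(k-1)} (\Delta H)^{(k-1)}$ by the definition in Sec.~\ref{sec:momentum}, we have $\frac{\beta}{\lambda^{(k-1)}}M^{(k-1)} = \beta (\Delta H)^{(k-1)}$. Consequently, the update rule \eqref{eq:Delta_H} reduces to the scalar recursion $(\Delta H)^{(k)} = P^{(k)} + \beta (\Delta H)^{(k-1)}$, where $P^{(k)} \in \{P_c,\, P_d^{\ell_1}\}$ depending on which type of update is performed in step $k$. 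With the convention $(\Delta H)^{(0)} = P^{(0)}$, iterating yields
\begin{align*}
    (\Delta H)^{(k)} = \sum_{j=0}^{k} \beta^{k-j}\, P^{(j)}.
\end{align*}

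Next I would exploit the fact that $A \mapsto \inf_{c}\norm{A - c\Id}$ is a seminorm on symmetric matrices. For any decomposition $A = A_1 + A_2$ and any $c_1, c_2 \in \RR$, the operator-norm triangle inequality gives $\norm{A - (c_1+c_2)\Id} \leq \norm{A_1 - c_1\Id} + \norm{A_2 - c_2\Id}$, and taking the infimum over $c_1, c_2$ yields $\inf_{c}\norm{A - c\Id} \leq \inf_{c_1}\norm{A_1 - c_1\Id} + \inf_{c_2}\norm{A_2 - c_2\Id}$. Applying this iteratively to the geometric sum gives
\begin{align*}
    \inf_{c}\norm{(\Delta H)^{(k)} - c\Id} \leq \sum_{j=0}^{k} \beta^{k-j}\, \inf_{c_j}\norm{P^{(j)} - c_j\Id}.
\end{align*}

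The third step is to bound each summand $\inf_c\norm{P - c\Id}$ by $1$. For $P_c = \gamma\Id - C$, choosing $c = \gamma$ leaves $-C$, which has operator norm $\norm{C} = 1$ by our standing normalization assumption. For $P_d^{\ell_1}$ as defined in \eqref{eq:P_d_1}, the term $\frac{1}{n}\tr(\sgn(\tilde\rho - \Id/n))\Id$ is a scalar multiple of $\Id$ and can be absorbed into $c$, leaving $\sgn(\tilde\rho - \Id/n)$. This matrix is diagonal with entries in $\{-1, 0, 1\}$ and hence has operator norm at most $1$. Summing the resulting geometric series $\sum_{j=0}^{k}\beta^{k-j} = \frac{1 - \beta^{k+1}}{1 - \beta} \leq \frac{1}{1-\beta}$ completes the argument.

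There is no substantial obstacle here; the only point that requires a moment of care is verifying the $\lambda$-cancellation in the momentum term, which is unambiguous in the setting of Thm.~\ref{thm:entropyBoundConvergence} since that theorem does not incorporate the adaptive step-size heuristic of Sec.~\ref{sec:adaptive_step_size}. Note that the specific choice $\lambda = \frac{(1-\beta)^2}{2}\tr(\rho_H \Delta H)$ appearing in the hypothesis is not actually invoked within this lemma itself; it is stated only because the lemma is formulated in the same setup as the downstream application in Thm.~\ref{thm:entropyBoundConvergence}.
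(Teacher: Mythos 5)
Your proposal is correct and follows essentially the same route as the paper: unroll the momentum recursion into a geometric sum $\sum_j \beta^{k-j} P^{(j)}$, use the triangle inequality for $\inf_c\norm{\cdot - c\Id}$, bound each $\inf_c\norm{P^{(j)} - c\Id}$ by $1$, and sum the geometric series. The only (immaterial) difference is that the paper's proof also explicitly covers the case $P = P_d^{\ell_2}$, which is handled identically since that matrix is diagonal and normalized to have operator norm $1$.
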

\begin{proof}
    Let $K$ be the current iteration of HU and $(\Delta H)^{(k)}$, $(P)^{(k)}$ the operators corresponding to the $k^\mathrm{th}$ iteration. The chosen step length $\lambda$ is independent of the update type (i.e.\ cost or diagonal update). Then, following from the definitions of the momentum in Sec.~\ref{sec:momentum}, the update term takes the form
    \begin{align}
        (\Delta H)^{(K)} 
        &=
        \sum_{k=0}^{K-1} \beta^{k} (P)^{(K-k)}, 
    \end{align}
    where, depending on the type of update in the respective iteration, $(P)^{(K-k)}$ takes the form of one of the following:
    \begin{align}\begin{split}
        P_c &= \gamma\Id - C \\
        (P_d^{\ell_1})^{(K-k)} &= \sgn\left((\mathrm{diag}\left(\rho^{(K-k)}\right)-\Id/n\right)\\
        (P_d^{\ell_2})^{(K-k)} &= \frac{\mathrm{diag}\left(\rho^{(K-k)}\right)-\Id/n}{\|(\mathrm{diag}\left(\rho^{(K-k)}\right)-\Id/n\|}
    \end{split}\end{align}
    In either case, this fulfills
    \begin{align}
        \inf_{c\in\RR}\norm{(P)^{(K-k)} - c\Id} \leq 1.
    \end{align}
    Then, from a geometric series argument follows
    \begin{align}\begin{split}
        &\inf_{c\in\RR}\norm{(\Delta H)^{(K)} - c\Id} \\
        \leq&
        \sum_{k=0}^{K-1} \beta^{k} \inf_{c\in\RR}\norm{(P)^{(K-k)} - c\Id}  \\
        \leq&
        \sum_{k=0}^{K-1} \beta^{k}
        \leq
        \frac{1}{1-\beta}.
    \end{split}\end{align}
\end{proof}

Next, we show that for a suitable step length the change in free energy (and thus also the decrease in relative entropy distance) can be lower-bounded in terms of $\tr(\rho_H\Delta H)$:
\begin{lemma} \label{lem:a_priori_entropy}
	Let $0\leq\beta<1$ be the momentum hyperparameter, $\Delta H\in\RR^{n\times n}$ an update matrix in the HU routine, $\rho_H$ a Gibbs state for a Hamiltonian $H\in\RR^{n\times n}$, $F(H) = -\ln(\tr(\exp(-H)))$
    the free energy and
 \begin{align}
        \lambda = \frac{(1-\beta)^2}{2}\tr(\rho_H\Delta H)
 \end{align}
 the step length of an update.
 If $\tr(\rho_H\Delta H)\geq 0$, then
    \begin{align}
        F(H+\lambda\Delta H) - F(H)  \geq \frac{\lambda^2(1-\beta)^2}{4}.    
    \end{align}
\end{lemma}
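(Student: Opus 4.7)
The plan is to obtain a quadratic lower bound on $\Delta F := F(H+\lambda\Delta H) - F(H)$ by a second-order expansion of $F$ in $\lambda$, relying on concavity together with an operator-norm bound on $F''$. Substituting the prescribed step length $\lambda = \frac{(1-\beta)^2}{2}\tr(\rho_H\Delta H)$ will then yield the stated inequality. Lem.~\ref{thm:entropyBoundLinear} already gives $F'(\lambda) = \tr(\rho_{H+\lambda\Delta H}\Delta H)$, so in particular $F'(0) = t := \tr(\rho_H\Delta H) \geq 0$; the Bogoliubov concavity mentioned in the body guarantees that $F'$ is non-increasing, so it suffices to control how fast it decreases.

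To that end, I would differentiate once more via Duhamel's identity to obtain $F''(\lambda) = -\int_0^1 \tr(\rho^s \Delta H_0 \rho^{1-s}\Delta H_0)\,ds$, with $\rho = \rho_{H+\lambda\Delta H}$ and $\Delta H_0 = \Delta H - \tr(\rho\Delta H)\Id$. This expression is invariant under shifts $\Delta H \mapsto \Delta H - c\Id$ (they absorb into the centering), so I may freely replace $\Delta H$ by any $\Delta H - c\Id$ in the integrand. Diagonalizing $\rho$ with eigenvalues $p_i$ reduces $\int_0^1 p_i^s p_j^{1-s}\,ds$ to the logarithmic mean of $(p_i,p_j)$; dominating the log-mean by the arithmetic mean $(p_i+p_j)/2$ collapses the integral and produces $|F''(\lambda)| \leq \tr(\rho (\Delta H - c\Id)^2) \leq \|\Delta H - c\Id\|^2$ for every $c \in \RR$. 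Optimizing $c$ and invoking Lem.~\ref{lem:bound_R} gives the uniform estimate $|F''(\lambda)| \leq (1-\beta)^{-2}$ on $[0,\lambda]$.

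Taylor's theorem with integral remainder then delivers $\Delta F \geq \lambda t - \frac{\lambda^2}{2(1-\beta)^2}$. Substituting $\lambda = \frac{(1-\beta)^2}{2}t$ simplifies the right-hand side to $\frac{3(1-\beta)^2}{8}\, t^2$, and comparison with the claimed bound $\frac{\lambda^2(1-\beta)^2}{4} = \frac{(1-\beta)^6}{16}\, t^2$ reduces to verifying $(1-\beta)^4 \leq 6$, which is trivial for $\beta \in [0,1)$. The main obstacle I expect is the chain of inequalities dominating the Bogoliubov--Kubo--Mori second derivative by a simple operator norm: the logarithmic-mean-to-arithmetic-mean step is elementary once the integral is evaluated explicitly, but some care is needed with the shift-invariance and with the fact that $\rho$ itself depends on $\lambda$. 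Everything after that is routine substitution.
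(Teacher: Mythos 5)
Your proposal is correct, and it reaches the stated inequality (in fact a slightly stronger one). The overall strategy coincides with the paper's: control the second derivative of $\lambda \mapsto F(H+\lambda\Delta H)$ uniformly, combine with $F'(0)=\tr(\rho_H\Delta H)=:t\ge 0$ and Lem.~\ref{lem:bound_R}, and integrate. Where you genuinely diverge is in \emph{how} the second derivative is bounded. The paper does not compute $F''$ explicitly; it bounds $|\partial_\lambda \tr(\rho_{H+\lambda\Delta H}\Delta H)|$ by a Lipschitz estimate on the Gibbs map, namely the perturbation bound $\trnorm{e^{A}/\tr(e^{A})-e^{A'}/\tr(e^{A'})}\le 2(e^{\norm{A-A'}}-1)$ from Ref.~\cite{BrandaoInequality}, combined with matrix H\"older and the shift freedom $\Delta H\mapsto \Delta H-c\Id$; this yields $|F''|\le 2J^2$ with $J=(1-\beta)^{-1}$, hence $\Delta F\ge t^2/(4J^2)$. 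You instead compute $F''$ exactly as the (negative) Bogoliubov--Kubo--Mori variance via Duhamel and dominate the logarithmic mean by the arithmetic mean, which gives the sharper $|F''|\le \inf_c\norm{\Delta H-c\Id}^2\le J^2$ and hence $\Delta F\ge \tfrac{3}{8}t^2/J^2$ --- a factor $3/2$ better than the paper's intermediate bound, and comfortably above the stated $\lambda^2(1-\beta)^2/4=(1-\beta)^6t^2/16$. Your route buys a cleaner and tighter constant (and makes the concavity of $F$ manifest rather than imported), at the price of invoking the Duhamel/KMB machinery; the paper's route is more elementary in that it only needs an off-the-shelf trace-norm perturbation bound. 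Your shift-invariance observation for the centered $\Delta H_0$ and the variance inequality $\Var_\rho(\Delta H)\le \tr(\rho(\Delta H-c\Id)^2)$ for all $c$ are both sound, so no gap remains.
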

\begin{proof}
    We denote $J:=1/(1-\beta)$. The first step is to show that
    \begin{align} \label{eq:F_double_partial}
        |\partial^2_\lambda F(H+\lambda\Delta H)|
        =
        |\partial_\lambda\tr(\rho_{H+\lambda\Delta H}\Delta H)| 
        \leq 
        2J^2.
    \end{align}
    First, note that Gibbs states are unchanged when adding multiples of identity to the Hamiltonian, so that
	\begin{align}
		\rho_{H+\lambda\Delta H} = 
		\frac
		{e^{-H -\lambda (\Delta H -c\Id)}}
		{\tr\left( e^{-H -\lambda (\Delta H -c\Id)}\right)},
	\end{align}
 and similarly 
 \begin{align}
 \tr \left((\rho_{H+\lambda\Delta H} - \rho_{H+(\lambda+\varepsilon)\Delta H})c\Id\right)=0,
 \end{align}
	where, as shown in Lem.~\ref{lem:bound_R}, we can choose $c$ such that $\|\Delta H-c\Id\|\leq J$.\\
    Next, Ref.~\cite[Lem.~16]{BrandaoInequality} states for Hermitian operators $H$ and  $H'$
    \begin{align}
        \norm{\frac{e^H}{\tr(e^H)}-\frac{e^{H'}}{\tr(e^{H'})}}
        \leq 2 \left(
            e^{\norm{H-H'}}-1
        \right).
    \end{align}
    Then, combining the above with a matrix Hölder inequality gives 
\begin{align}\begin{split}
    |\partial_\lambda \tr\left(\rho_{H+\lambda\Delta H}\Delta H\right)|
    =&
    \lim_{\varepsilon \to 0} \left|
    \frac{\tr \left(\rho_{H+\lambda\Delta H} \Delta H\right) -  \tr \left(\rho_{H+(\lambda+\varepsilon)\Delta H} \Delta H\right)} {\varepsilon}
    \right| \\
    =&\lim_{\varepsilon \to 0} 
    \frac{ \inf_{c\in\RR}\big|\tr\left(
        (\rho_{H+\lambda\Delta H} - \rho_{H+(\lambda+\varepsilon)\Delta H})
        (\Delta H-c\Id)
    \right)\big|}
    {|\varepsilon|}
     \\
    \leq& \inf_{c\in\RR} \norm{\Delta H-c\Id}\lim_{\varepsilon \to 0}  \frac{
    \| \rho_{H+\lambda\Delta H}-\rho_{H+(\lambda+\varepsilon)\Delta H}\|_{\tr}}
    {|\varepsilon|}\\
    \leq&
    2J \lim_{\varepsilon \to 0} \frac{e^{\inf_{c\in\RR}\|\varepsilon(\Delta H-c\Id)\|}  - 1}
    {|\varepsilon|} \\
    =&
    2J 
    \inf_{c\in\RR}\|(\Delta H-c\Id)\|
    \leq
    2J^2
\end{split}\end{align}
 
	and thus $|\partial_\lambda \tr( \rho_{H+\lambda\Delta H} \Delta H)| \leq 2J^2$. It follows that $\lambda=\tr(\rho_H\Delta H)/(2J^2)$ fulfills $\tr (\rho_{H+\lambda\Delta H} \Delta H)\geq 0$.\\
 
    Next, let $\alpha:=\tr(\rho_H\Delta H)$. Now, for the change in free energy we have
    \begin{align}\begin{split}
        \Delta F
        &=
        F\left(H+\alpha/(2J^2)\Delta H\right) - F(H) \\
        &=
        \int_0^{\alpha/(2J^2)} \partial_{\lambda'} F(H)  d\lambda'
        = \int_0^{\alpha/(2J^2)}
        \tr (\rho_{H+\lambda'\Delta H} \Delta H) d\lambda' \\
        &\geq
        \int_0^{\alpha/(2J^2)}\left( 
        \tr(\rho_H\Delta H) 
        - \lambda' |\partial_{\lambda'} \tr (\rho_{H+\lambda'\Delta H} \Delta H)|
        \right) d\lambda',
    \end{split}\end{align}
    where we bounded the decrease of the integrand by its derivative.
    Then, with \eqref{eq:F_double_partial} follows
    \begin{align}\begin{split}
        \Delta F
        \geq
        \int_0^{\alpha/(2J^2)} 
        (\alpha - 2J^2\lambda')
        d\lambda'  
        =
        \frac{\alpha^2}{4J^2}.
    \end{split}\end{align}
\end{proof}
Now, we have all the tools to bound the maximum number of iterations needed in the HU routine if a feasible solution exists:
\entropyBoundConvergence*
\begin{proof}
     By the definition in Sec.~\ref{sec:momentum}, $\Delta H$ is of the form $P + cM$ with some positive factor $c$ (we can ignore here the rescaling $\tilde P_c = \tr(P_c\rho) P_c$, as this can be absorbed by the $\lambda$).
     Due to the overshoot criterion in each preceding iteration, $\tr(\rho_H M)$ is always nonnegative, and thus, $\tr(\rho_H \Delta H)\geq \tr(\rho_H P)$. With Lem.~\ref{lem:a_priori_entropy} we can lower bound the change in free energy at each step by $\Delta F \geq \frac{\tr(\rho_H \Delta H)^2(1-\beta)^6}{16}$. 
     Then, when using the original $P_d^{\ell_1}$ for diagonal updates, 
     we have $\tr(\rho_H P)\geq\epsilon$ at each update, and thus $\Delta F\geq \frac{\epsilon^2(1-\beta)^6}{16}$ in each iteration. 
     The initial free energy is $F(0)=-\ln(n)$. If a feasible solution exists, the free energy cannot become positive, hence, the maximum number of steps is $T=16 (1-\beta)^{-6} \epsilon^{-2} \ln(n)$. 
\end{proof}

\subsection{Randomized rounding}

\subsubsection{Inexact diagonal constraints}

The following proof of Thm.~\ref{thm:muScaling} is based on the proof of Ref.~\cite[Prop.~3.1]{Brandao2022fasterquantum} with the main difference being Eq.~\eqref{eq:rho22_norm}, which allows us to obtain a scaling of $\O(\epsilon^{1/3})$ instead of $\O(\epsilon^{1/4})$. For completeness we will give the full proof.
\muScaling*
\label{sec:stabilityProof}
\begin{proof}
For convenience we define $\xi:=\epsilon^{1/3}$. Then, by assumption,  
\begin{align}
    \sum_i | \rho_{ii} - 1/n | \leq \xi^3.
\end{align}

The construction of $\rho^\sharp$ consists of two steps. First we adjust the rows and columns of $\rho$ that have large diagonal deviations. For this, we set the diagonal elements in these rows and columns to $1/n$ while setting the corresponding off-diagonal entries to zero, thus ensuring positive semidefiniteness of the resulting matrix. 
This is a rather harsh correction, but $\sum_i|\rho_{ii}-1/n|\leq \xi^3$ guarantees us that only a small number of entries need to be treated this way. In the second step, we set all remaining diagonal entries to $1/n$ and restore positive semidefiniteness by shifting $\rho$ by $\frac{\xi}{n}\Id$ and renormalizing it.

Denote $d_i=\rho_{ii}-1/n$.
Next, we define the index set $B\subset\{1,\dots,n\}$ corresponding to the diagonals with larger deviations:
\begin{align}
    B = 
		\{i:|d_i|>\frac{\xi}{n} \}
		\quad\text{with complement}\quad
		\bar B
		=
		\{i:|d_i|\leq\frac{\xi}{n} \}.
\end{align}
Now, we define the two matrices $\rho', D\in \RR^{n\times n}$ and construct $\rho^\sharp$ from these, given as follows:
\begin{align}
    \text{Step 1:} \qquad \notag     \\
    \rho'_{ij} 
    &=
    \begin{cases} 
        1/n &\mathrm{if} \qquad i=j, \quad i \in B, \\
        0 &\mathrm{if} \qquad  i\neq j, \quad i \in B \vee j \in B, \\
        \rho_{ij} & \mathrm{else}, 
   \end{cases} \label{eq:rho_prime}
    \\
    \text{Step 2:} \qquad \notag     \\
    D 
    &=
    \begin{cases} 
        - d_i &\mathrm{if} \qquad i=j, \quad i \notin B, \\
        0 & \mathrm{else}, 
   \end{cases}
   \\
   \rho^\sharp 
   &=
   \frac{1}{1+\xi}\left(
   \rho' + D + \frac{\xi}{n}\Id
   \right).
\end{align}
Then $\rho'$ is psd, because it arises by first
compressing $\rho$ to the submatrix with indices in $\bar B$,
and then adding $\Id/n$ to submatrix with indices in $B$.
Both steps preserve positive-semidefiniteness.
Next, $D$ is a diagonal matrix with $|D_{ii}|\leq \frac{\xi}{n}$, and therefore, $D+\frac{\xi}{n}\Id$ is psd. 
Thus, $\rho^\sharp$ is psd with diagonal entries $1/n$. 
We now show that these corrections are mild in the sense that $\norm{\rho^\sharp-\rho}_\mathrm{tr} = \O(\xi)$.

From $\sum_i|\rho_{ii}-1/n|\leq \xi^3$ we have
\begin{align}
    |B| \leq \xi^2 n.
\end{align}
Next, write $\rho$ in block matrix form
\begin{align}
    \rho = 
    \begin{pmatrix}
        \rho^{(11)} & \rho^{(12)} \\ 
        \rho^{(21)} & \rho^{(22)}
    \end{pmatrix}, 
\end{align}
with coordinates ordered such that the first block corresponds to the indices in $B$ and the second block to the ones in $\bar B$.
Then, following from the construction in \eqref{eq:rho_prime},  
\begin{align}
	\begin{split}
    \norm{\rho'-\rho}_{\tr} 
    &=
    \norm{
    \begin{pmatrix}
        \Id_B/n -  \rho^{(11)} & -\rho^{(12)} \\ 
        -\rho^{(21)} & 0
    \end{pmatrix}
    }_{\tr} \\
    &\leq 
    \|\rho^{(11)}\|_{\tr} + 2\|\rho^{(12)}\|_{\tr} + \|\Id_B/n\|_{\tr}.
		\label{eqn:rho_block}
	\end{split}
\end{align}
Now, using a result from Ref.~\cite{AuKing2003}, we have
\begin{align} \label{eq:rho_analysis_start}
    \norm{
        \begin{pmatrix}
            \|\rho^{(11)}\|_{\tr} & \|\rho^{(12)}\|_{\tr} \\ 
            \|\rho^{(21)}\|_{\tr} & \|\rho^{(22)}\|_{\tr}
        \end{pmatrix}
    }_2
    &\leq
    \norm{
        \begin{pmatrix}
            \|\rho^{(11)}\|_{\tr} & \|\rho^{(12)}\|_{\tr} \\ 
            \|\rho^{(21)}\|_{\tr} & \|\rho^{(22)}\|_{\tr}
        \end{pmatrix}
    }_{\tr} \\
    &\leq 
    \norm{
    \begin{pmatrix}
        \rho^{(11)} & \rho^{(12)} \\ 
        \rho^{(21)} & \rho^{(22)}
    \end{pmatrix}
    }_{\tr}
    =1,
\end{align}
where $\|\cdot\|_2$ is the Frobenius (or Schatten-2) norm. Then,
\begin{align}
    \norm{\rho^{(11)}}_{\tr}^2 + 2\norm{\rho^{(12)}}_{\tr}^2 + \norm{\rho^{(22)}}_{\tr}^2 \leq 1. \label{eq:rho_square_sum}
\end{align}
Because $\rho^{(22)}$ is a principle submatrix of $\rho$, it is also positive semidefinite. 
Thus,
\begin{align} 
	\begin{split}
    \norm{\rho^{(22)}}_{\tr} 
		 &= \tr\left(\rho^{(22)}\right) 
		= 1-\tr\left(\rho^{(11)}\right) 
    \geq 1-\frac{|B|}{n} - \sum_{i\in B} |d_i| \geq 1-\xi^2-\xi^3  \\
		&= 1-\O(\xi^2). \label{eq:rho22_norm}
	\end{split}
\end{align}
Then, combining \eqref{eq:rho_square_sum} and \eqref{eq:rho22_norm} gives 
$\norm{\rho^{(11)}}_{\tr}^2 + 2\norm{\rho^{(12)}}_{\tr}^2=O(\xi^2)$, and hence 
\begin{align} \label{eq:rho_11_12_bound}
    \norm{\rho^{(11)}}_{\tr} + 2\norm{\rho^{(12)}}_{\tr}=\O(\xi).
\end{align}
Furthermore, we have $\|\Id_B/n\|_{\tr} = \frac{|B|}{n}  \leq \xi^2$.
Then, from (\ref{eqn:rho_block}),
\begin{align}
    \norm{\rho'-\rho}_{\tr}=\O(\xi). \label{eq:step 1}
\end{align}
This concludes the first step. For the second step we have
\begin{align}\begin{split}
    \norm{\rho^\sharp - \rho'}_\mathrm{tr}
    &=
    \norm{
    \left(\frac{1}{1+\xi}-1\right)\rho'
    + \frac{1}{1+\xi}\left(D+\frac{\xi}{n}\Id\right)
    }_\mathrm{tr}  \\
    &=
    \frac{1}{1+\xi} \norm{
        -\xi\rho' + D + \frac{\xi}{n}\Id
    }_\mathrm{tr}  \\
    &\leq \frac{1}{1+\xi} \left( \xi \trnorm{\rho' - \rho} + \xi \trnorm{\rho} + \trnorm{D} + \frac{\xi}{n} \trnorm{\Id}\right) \\
    &= \O(\xi). \label{eq:step 2}
\end{split}\end{align}
Combining \eqref{eq:step 1} and \eqref{eq:step 2} with a triangle inequality gives us
\begin{align}
    \norm{\rho^\sharp - \rho}_\mathrm{tr}=\O(\xi)=\O(\epsilon^{1/3}).
\end{align}
\end{proof}

\subsubsection{Approximation ratio after rounding}
The proof of Lem.~\ref{lem:AN_rounding} closely follows Ref.~\cite[Sec.~4.1]{RoundingGrothendieck}.
Compared to the original version of the proof, we also make a statement about the quality of rounded vectors obtained from solutions that are not strictly feasible.

\begin{lemma} \label{lem:AN_rounding}
    Let $\rho^\star\in\RR^{n\times n}$ be the optimal SDP solution corresponding to a cost matrix $C\in\RR^{n\times n}$ with a block structure as defined in \eqref{eq:C_block}.
    Let $x\in \{-1,1\}^n$ be the vector obtained by applying the randomized rounding procedure to a psd matrix $\rho\in\RR^{n\times n}$ with $\rho_{ii}>0$.
    Let $\sigma\in\RR^{n\times n}$ be the matrix with entries $\sigma_{ij}=\frac{\rho_{ij}}{n\sqrt{\rho_{ii}\rho_{jj}}}$. Then 
    \begin{align}
        \EE \left[ x^T C x \right]
        \geq
        \frac2{\pi} n \tr (C\sigma)  -\left(1 - \frac2{\pi} \right) n \tr (C \rho^\star).
    \end{align}
\end{lemma}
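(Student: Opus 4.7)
The plan is to reduce the expected rounded value to an inner product with a psd matrix that has unit diagonal, expand the arcsin via its Taylor series, and exploit the bipartite structure of $C$ to symmetrically bound the higher-order Hadamard-power contributions.

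First I would set $u_i := \sqrt{\rho}\,e_i/\sqrt{\rho_{ii}}$, which are unit vectors (since $\|\sqrt{\rho}\,e_i\|^2=\rho_{ii}$) whose Gram matrix is $W_{ij}=\langle u_i,u_j\rangle=\rho_{ij}/\sqrt{\rho_{ii}\rho_{jj}}=n\,\sigma_{ij}$. Thus $W$ is psd with unit diagonal, and $\tr(CW)=n\,\tr(C\sigma)$. Applying Sheppard's formula (Grothendieck identity) gives $\EE[x_ix_j]=\tfrac{2}{\pi}\arcsin(W_{ij})$, so
\begin{align*}
\EE\bigl[x^TCx\bigr]=\sum_{ij}C_{ij}\,\tfrac{2}{\pi}\arcsin(W_{ij}).
\end{align*}

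Next I would use the Taylor expansion $\arcsin(t)=\sum_{k\ge 0}c_k\,t^{2k+1}$ with $c_0=1$, $c_k>0$, and $\sum_{k\ge 0}c_k=\arcsin(1)=\pi/2$. Applied entrywise to $W$, this yields
\begin{align*}
\EE\bigl[x^TCx\bigr]
=\tfrac{2}{\pi}\tr(CW)+\tfrac{2}{\pi}\sum_{k\ge 1}c_k\,\tr\!\bigl(C\cdot W^{\circ(2k+1)}\bigr),
\end{align*}
where $W^{\circ m}$ denotes the $m$-fold Hadamard power. The first term is $\tfrac{2}{\pi}n\,\tr(C\sigma)$, matching the leading term of the target inequality.

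The main step is to bound the higher-order terms. By the Schur product theorem, each $W^{\circ(2k+1)}$ is psd, and its diagonal equals $W_{ii}^{2k+1}=1$; hence $W^{\circ(2k+1)}/n$ is feasible for the SDP relaxation \eqref{eqn:relaxed}. Here the bipartite structure $C=\bigl(\begin{smallmatrix}0&B\\ B^T&0\end{smallmatrix}\bigr)$ enters crucially: letting $D=\diag(\Id_{n/2},-\Id_{n/2})$, one checks $DCD=-C$, and $\rho\mapsto D\rho D$ preserves feasibility while flipping the sign of the objective. Therefore the feasible objective values form a symmetric interval, giving the two-sided bound $|\tr(C\rho)|\le \tr(C\rho^\star)$ for every feasible $\rho$; in particular $\tr\!\bigl(C\cdot W^{\circ(2k+1)}\bigr)\ge -n\,\tr(C\rho^\star)$ for all $k\ge 1$.

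To conclude, I would combine these inequalities and evaluate $\tfrac{2}{\pi}\sum_{k\ge 1}c_k=\tfrac{2}{\pi}\bigl(\tfrac{\pi}{2}-1\bigr)=1-\tfrac{2}{\pi}$ to obtain
\begin{align*}
\EE\bigl[x^TCx\bigr]\;\ge\;\tfrac{2}{\pi}n\,\tr(C\sigma)\;-\;\Bigl(1-\tfrac{2}{\pi}\Bigr)n\,\tr(C\rho^\star),
\end{align*}
as claimed. The only mildly subtle point is the symmetry argument for the bipartite $C$; without it one would only get $|\tr(C\cdot W^{\circ(2k+1)})|\le\|C\|\,\|W^{\circ(2k+1)}\|_{\mathrm{tr}}=n\|C\|$, which would give a weaker estimate not matching the target form. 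Everything else is a direct computation using standard psd/Hadamard and rounding facts.
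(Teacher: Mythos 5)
Your proof is correct, but it takes a recognizably different route from the paper's. You invoke the Grothendieck identity $\EE[x_ix_j]=\tfrac{2}{\pi}\arcsin(W_{ij})$, Taylor-expand the arcsine entrywise, and control each higher-order term $\tr(C\,W^{\circ(2k+1)})$ separately: Schur's product theorem gives positive semidefiniteness and unit diagonal of each Hadamard power, so each is (after rescaling) SDP-feasible, and the series of coefficients sums to $\pi/2-1$. The paper instead follows the Alon--Naor residual trick: it writes $\tfrac{\pi}{2}\EE[x_ix_j]=n\sigma_{ij}+\Delta_{ij}$ where $\Delta_{ij}=\EE[(u_i^Tg-\sqrt{\pi/2}\,\sgn(u_i^Tg))(u_j^Tg-\sqrt{\pi/2}\,\sgn(u_j^Tg))]$, so that all higher-order contributions are packaged into a single matrix $\Delta$ that is manifestly psd (an expectation of rank-one outer products) with diagonal computed directly to be $\pi/2-1$; no arcsine expansion or convergence discussion is needed, and the constant emerges from a one-line variance calculation. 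The two arguments are two views of the same object, since $\Delta=\arcsin^{\circ}(W)-W=\sum_{k\ge1}c_k\,W^{\circ(2k+1)}$: your version makes the error matrix explicit term by term at the cost of a series interchange (harmless here, since $\sum_k c_k|W_{ij}|^{2k+1}\le\pi/2$), while the paper's version trades that for a short probabilistic computation. Crucially, both proofs rely on exactly the same bipartite-symmetry step, conjugation by $\diag(\Id,-\Id)$, to convert feasibility of the psd error term(s) into the two-sided bound against $\tr(C\rho^\star)$; without it one would indeed only get the weaker $n\norm{C}$ estimate you mention. One small notational slip: $W^{\circ(2k+1)}$ itself (with unit diagonal) is feasible for \eqref{eqn:relaxed}, while $W^{\circ(2k+1)}/n$ is feasible for the trace-normalized version whose optimum is $\tr(C\rho^\star)$; your final inequality uses the latter normalization consistently, so nothing breaks.
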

\begin{proof}
    Set $u_i:=\frac{(\sqrt{\rho})_i}{\sqrt{\rho_{ii}}}$, so that
    \begin{align}
        u_i^T u_j
				=
				n\sigma_{ij}.
    \end{align}
    Then we have
    \begin{align}
        x_i:=\sgn((\sqrt{\rho})_i^T g) = \sgn(\sqrt{\rho_{ii}}u_i^T g) = \sgn(u_i^T g).
    \end{align}

    We will bound the expected objective value $\EE[ x^T C x ]$  of the rounded solution. Here, the expected value is taken over a standard Gaussian random vector $g\in\RR^n$. 
This will be aided by two short calculations, valid for any two unit vectors $c, b\in\RR^n$.
First, 
\begin{align*}
	\EE[ b^T g g^T c]
	=
	b^T \EE[gg^T] c 
	=
	b^T c.
\end{align*}
Next, using the fact that the distribution of the Gaussian vector is rotationally invariant,
we may assume that $c=e_1$ is equal to the first element of the standard basis, and $b=b_1 \, e_1 + b_2 \, e_2$ is a linear combination of the first two basis vectors. 
Similarly, $g_1$ and $g_2$ denote the first two components of $g$.
Then
\begin{align*}
	\EE[(b^T g) \sgn( c^T g) ]
	&=
	\EE[ (b_1 g_1 + b_2 g_2) \sgn(g_1) ] \\
	&=
	\EE[ b_1 g_1\sgn(g_1)]  + \EE[b_2 g_2]\underbrace{\EE[ \sgn(g_1)]}_{=0}  \\
	&=
	b_1 \EE[ g_1\sgn(g_1)] \\
	&= 
	b_1\,\frac2{\sqrt{2\pi}} \int_{0}^\infty xe^{-x^2/2} \,\mathrm{d}x
	= \sqrt{\frac 2\pi}\,b^T c.
\end{align*}

From this, we can expand 
\begin{align*}
	\frac{\pi}2 
	\EE 
	\left[ x_i x_j \right]
	=&
    \frac{\pi}2 
	\EE 
	\left[ \sgn(u_i^T g) \sgn(u_j^T g) \right] \\
	=&
	\underbrace{
	\EE \left[
		\left(
			u_i^T g
			-
			\sqrt{\frac{\pi}2}
			\sgn(u_i^T g) 
		\right)
		\left(
			u_j^T g
			-
			\sqrt{\frac{\pi}2}
			\sgn(u_j^T g)
		\right)
	\right] 
	}_{=:\Delta_{ij}}  \\
	 &-
	\EE[u_i^T g \, g^T u_j]
	+
	\sqrt{\frac{\pi}2}
	\EE[u_i^T g \, \sgn(u_j^T g)]
	+
	\sqrt{\frac{\pi}2}
	\EE[u_j^T g \, \sgn(u_i^T g)]
	\\
	=&
	\Delta_{ij} - u_i^Tu_j + u_i^Tu_j + u_i^Tu_j
    =
	\Delta_{ij}
	+n\sigma_{ij}.
\end{align*}
As a convex combination of symmetric rank-one matrices, $\Delta$ is psd.
Factoring out, using the two calculations above, and as well as $u_i^Tu_i=n\sigma_{ii}=1$,
we see that its main diagonal elements satisfy
\begin{align}
\begin{split}
	\Delta_{ii}
	&=
	\EE 
	\left[
		\left(
			u_i^T g
			-
			\sqrt{\frac{\pi}2}
			\sgn(u_i^T g) 
		\right)
		\left(
			u_i^T g
			-
			\sqrt{\frac{\pi}2}
			\sgn(u_i^T g)
		\right)
	\right] \\
	&=
	(1 - 2) u_i^Tu_i
	+
	\frac{\pi}2
	=
	\frac{\pi}2 - 1.
\end{split}
\end{align}
Thus, $(n(\frac\pi2-1))^{-1}\Delta$ is feasible for the SDP.
For a cost matrix with the given block form, the spectrum of objective values for the SDP is symmetric. 
This can be seen by considering the block matrix 
\begin{align}
    U = \begin{pmatrix}
            \Id & 0 \\ 0 & -\Id
        \end{pmatrix}. 
\end{align}
Then, $U\rho^\star U^\dagger$ is also feasible for the SDP and 
\begin{align}
    \tr(C U\rho^\star U^\dagger)=\tr(U^\dagger C U \rho^\star) =-\tr(C \rho^\star).    
\end{align}
Thus, we know that $-\tr(C \rho^\star)$ is a lower bound for the SDP and therefore,
\begin{align}
	|\tr(C\Delta)|
	\leq
	\left(\frac\pi2 -1\right) n \tr(C\rho^\star).
\end{align}
This allows us to bound
\begin{align}
	\EE \left[ x^T C x \right]
	&=
	\frac2{\pi}
	\left(
		\tr (C n\sigma)
		+
		\tr (C \Delta)
	\right) \\
	&\geq
	\frac2{\pi}
	\left(
		\tr (C n\sigma)
		-
		|\tr (C \Delta)|
	\right) \\
	&\geq
    \frac2{\pi} n \tr (C \sigma)  -\left(1 - \frac2{\pi} \right)  n\tr (C \rho^\star).
\end{align}
\end{proof}

The proofs of Lem.~\ref{lem:sigma_norm_bound} and Thm.~\ref{thm:rounding} are based on Ref.~\cite[Sec.~3.5]{Brandao2022fasterquantum}. They are adjusted to also consider the improved scaling from Thm.~\ref{thm:muScaling}.

\begin{lemma} \label{lem:sigma_norm_bound}
    Let $0\leq\xi\leq 1/2$ and $\rho\in\RR^{n\times n}$ be a psd matrix such that $\sum_i|\rho_{ii}-1/n|\leq\xi^3$.
		Let $\sigma\in\RR^{n\times n}$ be the matrix with entries $\sigma_{ij}=\frac{\rho_{ij}}{n\sqrt{\rho_{ii}\rho_{jj}}}$. Then,
    \begin{align}
        \norm{\rho-\sigma}_{\tr} = \O(\xi).
    \end{align}
\end{lemma}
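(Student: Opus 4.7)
My plan is to write $\sigma$ as a diagonal conjugation of $\rho$: setting $D = \diag((n\rho_{ii})^{-1/2})$ and $E = D - \Id$, one has $\sigma = D\rho D = (\Id+E)\rho(\Id+E)$, so
\begin{align*}
\rho - \sigma = -(E\rho + \rho E + E\rho E).
\end{align*}
By the triangle inequality (together with $\|\rho E\|_\mathrm{tr} = \|E\rho\|_\mathrm{tr}$, which holds because both factors are symmetric), the task reduces to bounding $\|E\rho\|_\mathrm{tr}$ and $\|E\rho E\|_\mathrm{tr}$.

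The main obstacle will be that the hypothesis $\sum_i|\rho_{ii}-1/n|\leq \xi^3$ is only aggregate: it permits some $\rho_{ii}$ to be arbitrarily close to zero, in which case $E_{ii}=(n\rho_{ii})^{-1/2}-1$ blows up and the naive Hölder estimate $\|E\rho\|_\mathrm{tr}\leq \|E\|\,\|\rho\|_\mathrm{tr}$ is useless. The remedy I intend to use is a Cauchy--Schwarz bound with respect to the Frobenius (Schatten-$2$) norm:
\begin{align*}
\|E\rho\|_\mathrm{tr} = \|E\sqrt\rho\cdot\sqrt\rho\|_\mathrm{tr} \leq \|E\sqrt\rho\|_2\,\|\sqrt\rho\|_2 = \sqrt{\tr(\rho E^2)\,\tr(\rho)}\,,
\end{align*}
which weights each $E_{ii}^2$ by $\rho_{ii}$ and thereby absorbs the small-$\rho_{ii}$ blow-up of $E$ exactly at the points where it occurs.

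It remains to estimate $\tr(\rho E^2) = \sum_i(\sqrt{\rho_{ii}}-1/\sqrt n)^2 = \tr(\rho) - \tfrac{2}{\sqrt n}\sum_i\sqrt{\rho_{ii}} + 1$. I will control $\sum_i\sqrt{\rho_{ii}}$ via the elementary identity $\sqrt{\rho_{ii}}-1/\sqrt n = (\rho_{ii}-1/n)/(\sqrt{\rho_{ii}}+1/\sqrt n)$ together with the trivial lower bound $\sqrt{\rho_{ii}}+1/\sqrt n \geq 1/\sqrt n$, which gives $|\sqrt{\rho_{ii}}-1/\sqrt n|\leq \sqrt n\,|\rho_{ii}-1/n|$; summing and invoking the hypothesis yields $\tfrac{2}{\sqrt n}\sum_i\sqrt{\rho_{ii}} = 2+O(\xi^3)$. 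Combined with $\tr(\rho) = 1 + O(\xi^3)$, this gives $\tr(\rho E^2) = O(\xi^3)$, hence $\|E\rho\|_\mathrm{tr} = O(\xi^{3/2})$. Since $E\rho E$ is psd, $\|E\rho E\|_\mathrm{tr} = \tr(E\rho E) = \tr(\rho E^2) = O(\xi^3)$ as well. Putting everything together yields $\|\rho-\sigma\|_\mathrm{tr} = O(\xi^{3/2}) = O(\xi)$, where the final simplification uses $\xi\leq 1/2$.
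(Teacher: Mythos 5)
Your proof is correct, and it takes a genuinely different route from the paper. The paper's argument partitions the indices into a ``bad'' set $B=\{i:|\rho_{ii}-1/n|>\xi/n\}$ and its complement, invokes the Audenaert--King result on trace norms of $2\times2$ block compressions to control the off-diagonal blocks, and handles the well-conditioned block by bounding $\norm{\Id-\mathcal D}_{\tr\to\tr}$ via duality with the $\infty\to\infty$ norm. You avoid all of this with a single global Cauchy--Schwarz (Schatten--H\"older) step, $\norm{E\rho}_{\tr}\le\norm{E\sqrt\rho}_2\norm{\sqrt\rho}_2=\sqrt{\tr(\rho E^2)\tr(\rho)}$, which weights $E_{ii}^2$ by $\rho_{ii}$ exactly where $E_{ii}$ blows up; the identity $\tr(\rho E^2)=\sum_i(\sqrt{\rho_{ii}}-1/\sqrt n)^2$ together with $(\sqrt a-\sqrt b)^2\le|a-b|$ (equivalently, your bound via $\sqrt n\,|\rho_{ii}-1/n|$) then gives $\tr(\rho E^2)=\O(\xi^3)$. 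Your argument is shorter, needs no external block-matrix lemma, and in fact yields the strictly stronger conclusion $\norm{\rho-\sigma}_{\tr}=\O(\xi^{3/2})$, i.e.\ $\O(\epsilon^{1/2})$ in terms of $\epsilon=\xi^3$ --- which would sharpen the contribution of Lem.~\ref{lem:sigma_norm_bound} to Thm.~\ref{thm:rounding} (though the overall $\O(\epsilon^{1/3})$ there would still be limited by Thm.~\ref{thm:muScaling} and Lem.~\ref{lem:AN_rounding}). The only cosmetic caveats: the lemma implicitly requires $\rho_{ii}>0$ for $\sigma$ (and hence $D$) to be defined, and your step $\norm{E\rho E}_{\tr}=\tr(E\rho E)$ correctly uses that $E\rho E$ is psd; both are fine as you have them.
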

\begin{proof}
    Similar to the proof of Thm.~\ref{thm:muScaling}, we define the index set $B = \{i:|d_i|>\frac{\xi}{n} \}$ with $d_i=\rho_{ii}-1/n$ and write $\sigma$ as block matrix 
    \begin{align}
    \sigma = 
    \begin{pmatrix}
        \sigma^{(11)} & \sigma^{(12)} \\ 
        \sigma^{(21)} & \sigma^{(22)}
    \end{pmatrix}, 
    \end{align}
    where $\sigma^{(22)}$ corresponds to the index set of $\bar B$.

    We now show that $\norm{\rho^{(22)}-\sigma^{(22)}}_{\tr} = O(\xi)$.
		To this end, define the diagonal
		$|\bar B|\times |\bar B|$-matrix by
    \begin{align}
        (D_{\bar B})_{ii} = \frac{1}{\sqrt{n\rho_{ii}}},
				\qquad i \in \bar B.
    \end{align}
		With this we can define a linear map 
		$\mathcal{D}: X \mapsto D_{\bar B} X D_{\bar B}$
		on the space of 
		$|\bar B|\times |\bar B|$-matrices  
		that fulfills $\mathcal{D}(\rho^{(22)})=\sigma^{(22)}$.

		We can then bound
    \begin{align}
        \norm{\rho^{(22)}-\sigma^{(22)}}_{\tr}
        &= 
        \norm{(\Id-\mathcal{D})\rho^{(22)}}_{\tr} \\
        &\leq 
        \norm{\Id-\mathcal{D}}_{\tr\rightarrow\tr}\norm{\rho^{(22)}}_{\tr} \\
        &\leq 
        \norm{\Id-\mathcal{D}}_{\tr\rightarrow\tr},
    \end{align}
		because $\rho^{(22)}$ is a submatrix of $\rho$ and $\norm{\rho}_{\tr}=1$. As both $\Id$ and $\mathcal{D}$ are self-adjoint with respect of the Frobenius inner product $\tr(X^T Y)$, the dualtity of norms implies $\norm{\Id-\mathcal{D}}_{\tr\rightarrow\tr}=\norm{\Id-\mathcal{D}}_{\infty\rightarrow\infty}$.
    
		By the definition of $B$, the diagonal elements of $\rho^{(22)}$ lie in $[(1- \xi)/n, (1+\xi)/n]$. Then
    \begin{align}
        \frac{1}{\sqrt{1+\xi}}
        \leq (D_{\bar B})_{ii} \leq
        \frac{1}{\sqrt{1-\xi}},
    \end{align}
    and thus for $\xi\leq 1/2$:
    \begin{align} \label{eq:D_bound}
        1-\xi
        \leq (D_{\bar B})_{ii} \leq
        1+\xi.
    \end{align}
    Next, set $D_\xi = D_{\bar B}-\Id$. 
    From Eq.~\eqref{eq:D_bound} it then follows that $\norm{D_\xi}\leq\xi$. 
    Now, 
    \begin{align}
        \norm{(\Id - \mathcal{D})(X)}_\infty
        &=
        \norm{X D_\xi + D_\xi X + D_\xi X D_\xi}_\infty \\
        &=
        2\norm{D_\xi}_\infty \norm{X}_\infty
        + \norm{D_\xi}^2_\infty \norm{X}_\infty \\
        &\leq 
        3\xi \norm{X}_\infty,
    \end{align}
    and hence $\norm{\Id - \mathcal{D}}_{\infty\rightarrow\infty}\leq 3\xi$. We therefore have
    \begin{align}
        \norm{\rho^{(22)}-\sigma^{(22)}}_{\tr}
        \leq
        \norm{\Id - \mathcal{D}}_{\tr \rightarrow\tr}
        =
        \norm{\Id - \mathcal{D}}_{\infty \rightarrow\infty}
        =\O(\xi).
    \end{align}
    Next, we bound the remaining blocks of $\sigma$ using a similar analysis as in Eqs.\eqref{eq:rho_analysis_start}-\eqref{eq:rho_11_12_bound} in the proof of Thm.~\ref{thm:muScaling}. We use  that $\sigma$ is psd and its diagonal entries are equal to $1/n$. Then,
    \begin{align}
        \norm{\sigma^{(22)}}_{\tr}
        =
        \tr(\sigma^{(22)})
        = 
        1-\frac{|B|}{n}
        =
        1 - \O(\xi^2),
    \end{align}
    and thus $\sigma^{(11)}+2\sigma^{(12)}=\O(\xi)$. Additionally, from Eq.~\eqref{eq:rho_11_12_bound} we have $\rho^{(11)}+2\rho^{(12)}=\O(\xi)$.
    Combining the above, we have
    \begin{align*}
        \norm{\rho -\sigma}_{\tr}
        \leq&
        \norm{\sigma^{(22)} -\rho^{(22)}}_{\tr}
        + \norm{\rho^{(11)}}_{\tr}+ 2 \norm{\rho^{(12)}}_{\tr}
        + \norm{\sigma^{(11)}}_{\tr}+2 \norm{\sigma^{(12)}}_{\tr} \\
        =&
        \O(\xi)
    \end{align*}
\end{proof}

\rounding*

\begin{proof}\label{sec:rounding_proof}
Set $\xi=\epsilon^{1/3}$ and $\sigma_{ij}=\frac{\rho_{ij}}{n\sqrt{\rho_{ii}\rho_{jj}}}$ for $i,j=1, \dots, n$.
From the assumptions we have $\|C\|=1$.
Then, using Hölder's inequality and applying Lem.~\ref{lem:sigma_norm_bound} gives
\begin{align}
    |\tr(C(\rho-\sigma))|
    \leq
    \norm{\rho-\sigma}_{\tr}
    =
   \O(\xi)
\end{align}
and therefore $|\tr(C(\rho^\star-\sigma))|\leq  \O(\xi) + \epsilon = \O(\epsilon^{1/3})$.
Now, applying Lem.~\ref{lem:AN_rounding} gives
\begin{align}
    \EE \left[ x^T C x \right]
    &\geq
    \frac2{\pi} n \tr (C\sigma)  -\left(1 - \frac2{\pi} \right) n \tr (C \rho^\star) \\
    &\geq
    \frac2{\pi} n \left(\tr (C\rho^\star) -  \O(\epsilon^{1/3})\right)  -\left(1 - \frac2{\pi} \right) n \tr (C \rho^\star) \\
    &=
    \left(\frac4{\pi} - 1 \right) n \tr (C \rho^\star) - \O(\epsilon^{1/3} n ).
\end{align}
\end{proof}

\section{Number of quantum gates for Hamiltonian Updates} \label{sec:quantumCircuits}
In Sec.~\ref{sec:quant_imp}, we gave a procedure for approximating the diagonal entries of a Gibbs state for given Hamiltonian, and its asymptotic scaling. 
Now, we will give lower bounds on the number of two-qubit gates needed for an iteration of HU with selected quantum algorithms \cite{vanApeldoorn2020quantumsdpsolvers,Low2019Qubitization,LowBlockEncoding}. These consists of block encoding, Hamiltonian simulation and Gibbs state preparation. We assume a gate model, where two-qubit gates are realized by CNOT gates. 

\subsection{Block encoding} \label{sec:block_encoding}

This section is based on Ref.~\cite{LowBlockEncoding}, and reproduces some of the their construction in order to make the present document self-contained.

The input is a Hamiltonian $H\in\RR^{n\times n}$.
We assume that it is $s$-sparse in the sense that exactly $s$ elements in each column of $H$ are explicitly specified, with the rest being equal to $0$.
(We refer to the specified elements as the \emph{non-zero} ones, though ``potentially non-zero'' would be more accurate).

The matrix is assumed to be stored in QRAM, 
accessible via two unitary quantum oracles:
\begin{itemize}
    \item $O_F$ takes two indices $i\in [n]$, $l\in [s]$ and maps them to the index of the $l^\mathrm{th}$ non-zero entry in the $i^\mathrm{th}$ row of $H$, denoted by $f(i,l)$:
    \begin{align}
        O_F|i\rangle |l\rangle = |i\rangle |f(i,l) \rangle.
    \end{align}
    
    \item $O_H$ takes two indices $i,j\in [n]$ and returns the matrix entry $H_{ij}$ represented as a $b$-bit number:
    \begin{align}
        O_H|i\rangle |j\rangle |z\rangle = |i\rangle |j\rangle |z\oplus H_{ij} \rangle.
    \end{align}
\end{itemize}

\begin{definition}[{\cite[Def.~6]{LowBlockEncoding}}]
	A unitary $U$ \emph{block-encodes} a Hamiltonian $H$ if
    \begin{align}
        U = 
        \begin{pmatrix}
            H/\alpha & \cdot \\ \cdot & \cdot
        \end{pmatrix} , \quad
        (\langle 0|_a \otimes \Id_d) U (|0\rangle_a \otimes \Id_d) = \frac H \alpha,
    \end{align}
    where $d$ is a $b$-qubit register
		and $\alpha$ a suitable scaling factor.
\end{definition}
In the following encoding $\alpha$ is bounded from below by $s\norm{H}_\mathrm{max}$.

Reference \cite[Thm.~10]{LowBlockEncoding} realize a block encoding $U$ with the following properties:
\begin{itemize}
	\item{ }
		\cite[Lem.~13]{LowBlockEncoding}
		The encoding $U$ is decomposed as a product $U=U_\mathrm{row}^\dagger U_\mathrm{col}$.
	\item{ }
		\cite[Lem.~13]{LowBlockEncoding}
		Both $U_{\mathrm{row}}$ and $U_{\mathrm{col}}$ consist of one invocation each of $O_F, O_H, O_F^{-1}$, as well as a further unitary. Additionally, $U_{\mathrm{col}}$ uses two controlled SWAPs on $\log_2(n)$ qubits each.
	\item{ }
		\cite[Lem.~15]{LowBlockEncoding}
		This additional unitary is a product of 
		single-qubit gates and an $b$-bit comparer. 
\end{itemize}

To lower-bound the number of two-qubit gates used in this construction, we use the results of Ref.~\cite{cuccaro2004Adder}.
They show that a $b$-bit comparer can be implemented using
$(2b-1)$ Toffoli gates and $4b-3$ CNOTs.
In turn, each Toffoli may be realized using $6$ CNOTs and a number of single-qubit gates \cite{ToffoliCost}.
For further steps, we need a controlled implementation of $U$. This can be achieved by just controlling the SWAP gates, as without these, the quantum circuit just simplifies to the identity (c.f.\ Fig.~\ref{fig:c-U}). 
A controlled SWAP on $\log_2(n)$ qubits can be realized with $\log_2(n)$ Tofolli gates and $2\log_2(n)$ CNOTs.

\begin{estimate}
	The controlled implementation of the block encoding $U$ given in Ref.~\cite{LowBlockEncoding} requires at least $16b+16\log_2(n)-9$ two-qubit gates.
\end{estimate}

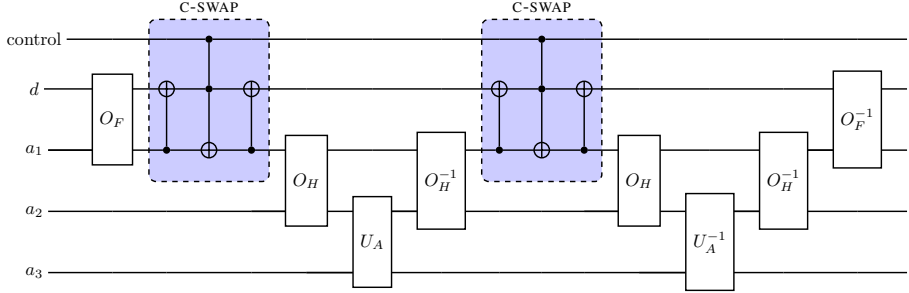
\begin{figure}[H]
    \centering
    \begin{adjustbox}{width=1\textwidth}
    \begin{quantikz}
        \mathrm{control}\; &\qw 
        & \qw \gategroup[3,steps=3,style={dashed, rounded corners,fill=blue!20, inner xsep=2pt}, background]{{\sc c-swap}} &\ctrl{1} &\qw 
        &\qw &\qw &\qw 
        & \qw \gategroup[3,steps=3,style={dashed, rounded corners,fill=blue!20, inner xsep=2pt}, background]{{\sc c-swap}} & \ctrl{1} & \qw
        &\qw &\qw &\qw  
        & \qw  &\qw 
        \\
        d\;& \gate[2]{O_F} 
        &\targ{} & \ctrl{1} & \targ{}
        &\qw &\qw &\qw 
        &\targ{} & \ctrl{1} & \targ{}
        &\qw &\qw &\qw 
        & \gate[2]{O_F^{-1}} &\qw 
        \\
        a_1\;&\qw 
        &\ctrl{-1} &\targ{1} &\ctrl{-1}
        &\gate[2]{O_H} &\qw & \gate[2]{O_H^{-1}} 
        &\ctrl{-1} &\targ{1} &\ctrl{-1}
        &\gate[2]{O_H} &\qw &\gate[2]{O_H^{-1}} 
        &\qw &\qw 
        \\
        a_2\;&\qw &\qw &\qw &\qw &\qw &\gate[2]{U_A} &\qw &\qw &\qw &\qw &\qw &\gate[2]{U_A^{-1}} &\qw  &\qw &\qw
        \\
        a_3\;&\qw &\qw &\qw &\qw &\qw &\qw &\qw &\qw &\qw &\qw &\qw &\qw &\qw &\qw &\qw
    \end{quantikz}
    \end{adjustbox}
    \caption{Circuit for a controlled $U$, resulting by combining $U_\mathrm{col}$ and $U^\dagger_\mathrm{row}$. The SWAP gates are part of $U_\mathrm{col}$, controlling them allows to control the whole circuit of $U$. The registers $d$ and $a_1$ encode the index of row and column condisting of $\log_2(n)$ qubits each, $a_2$ encodes the corresponding entry of $H$ with $b$ qubits, and $a_3$ is an $O(1)$ qubit ancillary register.}
    \label{fig:c-U}
\end{figure}

\subsection{Hamiltonian simulation via Qubitization}
The Hamiltonian simulation uses a procedure called \emph{qubitization}, which is explained in Ref.~\cite{Low2019Qubitization}. It has the following properties:
\begin{itemize}
	\item \cite[Thm~.1, Lem.~6]{Low2019Qubitization} 
		The procedure simulates an encoded Hamiltonian $\frac{H}{s\norm{H}_\mathrm{max}}$ for some time $t$.
    \item \cite[Sec.~5.2, Thm.~4]{Low2019Qubitization} The circuit consists of $Q$ unitaries $V_i$, where $Q$ is the smallest integer $Q=q-1$  s.t.
    \begin{align}
        \epsilon
        \geq
        \frac{4t^q} {2^q q!}. \label{eq:Q_inequality}
    \end{align}
    \item \cite[Lem.~10]{Low2019Qubitization}  Each $V_i$ consists of both the controlled unitaries $U$ and $U^{-1}$ defined in Sec.~\ref{sec:block_encoding}, besides further single- and two-qubit gates. 
\end{itemize}
Equation~\eqref{eq:Q_inequality} can be written as
\begin{align} \label{eq:q_condition}
    q + \log_2 (q!) + \log_2(\epsilon) \geq 2 + q\log_2(t).
\end{align}
To simplify this expression, 
we use an upper bound on Stirling's approximation 
\begin{align}  \label{eq:stirling_upper_bound}
    \log_2(q!) \leq q\log_2(q) - q\log_2(e) + \frac12 \log_2(q) +\log_2(\sqrt{2\pi}) + \frac{1}{12}\log_2(e).
\end{align}
where $e$ is Euler's number. Inserting \eqref{eq:stirling_upper_bound} into \eqref{eq:q_condition} gives
\begin{align}
    q\log_2(q) + q(1-\log_2(e)) + \log_2(\epsilon) + 1.5 
    \geq
    q + \log_2 (q!) + \log_2(\epsilon)
    \geq
    2 + q\log_2(t),
\end{align}
and thus,
\begin{align}
    q (\log_2(q) - \log_2(t) + 1  - \log_2(e))
    \geq 
    0.5 - \log_2(\epsilon). \label{eq:Q_rewritten}
\end{align}
As the right-hand side is positive, we require at least,
\begin{align}
    \log_2(q) - \log_2(t) + 1  - \log_2(e) \geq 0,   
\end{align}
and thus, we can give a lower bound on a $q$ fulfilling Eq.~\eqref{eq:Q_inequality}:
\begin{align}
    q \geq 0.73 t.
\end{align}   

As mentioned above, this circuit only simulates the normalized Hamiltonian $\frac{H}{s\norm{H}_\mathrm{max}}$ for time $t$. We simulate $H$ for some time $\tau$ by setting $t=s\norm{H}_\mathrm{max}\tau$.

\begin{estimate} \label{obs:sim_cost}
    Simulating $H$ for some time $\tau$ using the above construction requires at least $Q$ implementations of $U$ and $U^{-1}$ each. The required number of two-qubit gates is at least 
    \begin{align}\label{eq:H_sim_cost}\begin{split} 
        C_H(\tau, \epsilon) 
        &=(32b+32\log_2(n)-18)\,Q
        \\
        &\geq (32b+32\log_2(n)-18)\,(0.73s\norm{H}_\mathrm{max}\tau -1).
    \end{split}\end{align}
\end{estimate}

\subsection{Gibbs state sampling}
We mainly focus on the estimation of the diagonal of $\rho$ as this is $O(n)$ times more expensive than computing a trace product $\tr(A \rho)$. For this part of the gate count we only consider the gates that result from the Hamiltonian simulation subroutine described above and neglect the additional overhead. The following is based on Ref.~\cite{vanApeldoorn2020quantumsdpsolvers}:

As described in Sec.~\ref{sec:quant_imp}, we need to compute an estimate $\tilde\lambda_\mathrm{min}$ of the smallest eigenvalue of $H$ up to additive error $1/2$. This needs to be done only once per $H$ and thus does not influence the complexity. Using this, we implement a modified version of $O_H$, denoted as $O_{H_+}$ with $H_+ = H-(\tilde\lambda_\mathrm{min} -3/2) \Id$, 
and denote by $C_{H_+}(\tau, \epsilon)$ the corresponding lower bound on the number of two-qubit gates for simulating $H_+$ with the given construction analogous to Eq.~\eqref{eq:H_sim_cost}.

\begin{definition}[{\cite[Def.~35]{vanApeldoorn2020quantumsdpsolvers}}]
    We call $\tilde{W}$ a controlled $(M,\gamma,\epsilon)$-simulation of $H$ if 
    \begin{align}
        \norm{\tilde{W}-W} \leq \epsilon,
    \end{align}
    where 
    \begin{align}
        W := \sum_{m=-M}^{M-1} |m \rangle\langle m| \otimes e^{im\gamma H}
    \end{align}
    is a controlled simulation of $H$.
\end{definition}

\begin{itemize}
	\item{ }\cite[Lem.~36]{vanApeldoorn2020quantumsdpsolvers}
		The implementation of
		a controlled $(M,\gamma,\epsilon)$-simulation of $H_+$ 
		using the construction of
		Ref.~\cite{vanApeldoorn2020quantumsdpsolvers} 
		requires at least
    \begin{align}
        C_{\mathrm{cont}-(M,\gamma,\epsilon)}
        &=
        \sum_{j=0}^{\log_2(M)} C_{H_+}(2^j\gamma, 2^{j-\log_2(M)-1}\epsilon)
				\geq
				C_{H_+}(M\gamma, \epsilon/2) \label{eq:sim_cost_bound}
    \end{align}
    two-qubit gates.
\end{itemize}

\begin{lemma}[{\cite[Thm.~43]{vanApeldoorn2020quantumsdpsolvers}}] \label{lem:sub_gibbs}
    Computing the sub normalized Gibbs state $e^{-H_+}$ to $\epsilon$-precision using the constructions in Ref.~\cite{vanApeldoorn2020quantumsdpsolvers,Low2019Qubitization,LowBlockEncoding} requires at least
    \begin{align}
        C_{(e^{-H_+})} \geq C_{H_+}(2\pi \ln(7.8\epsilon^{-1}), \epsilon/4)
    \end{align}
    two-qubit gates.
\end{lemma}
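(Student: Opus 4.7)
The plan is to follow the Gibbs state preparation strategy of Ref.~\cite[Thm.~43]{vanApeldoorn2020quantumsdpsolvers} at the level of its structure, and to reduce everything to the already-established lower bound~\eqref{eq:sim_cost_bound} on the cost of a controlled Hamiltonian simulation. The underlying quantum construction is a linear combination of unitaries (LCU) for the operator $e^{-H_+}$: one approximates $e^{-x}$ on the spectral interval $[1,2]$ (which contains the spectrum of $H_+$ by our construction in \eqref{eq:H_+}) by a truncated Fourier-type expansion
\begin{align*}
e^{-x} \;\approx\; \sum_{m=-M}^{M-1} c_m\, e^{i m \gamma x},
\end{align*}
and then implements the corresponding linear combination of the unitaries $e^{im\gamma H_+}$ via the SELECT operator $W = \sum_m |m\rangle\langle m|\otimes e^{im\gamma H_+}$ together with cheap PREP rotations on a $\log_2(M)$-qubit ancilla register.

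First, I would invoke the analytic truncation estimate: for an appropriate choice of coefficients $c_m$, the total evolution span can be fixed at $M\gamma = 2\pi \ln(7.8\epsilon^{-1})$ while keeping the truncation error below $\epsilon/2$ uniformly on $[1,2]$. The specific constant $7.8$ comes from optimizing the Fourier-series remainder for $e^{-x}$ restricted to that interval and is precisely the constant reported in \cite[Thm.~43]{vanApeldoorn2020quantumsdpsolvers}. At this stage the only genuinely quantum subroutine whose gate count scales with the problem is the controlled evolution $W$; PREP and the LCU post-selection contribute only $O(\mathrm{polylog}(M))$ two-qubit gates and thus do not affect the asymptotic lower bound.

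Next, I would split the error budget so that $W$ is implemented to precision $\epsilon/2$, i.e.\ as a controlled $(M,\gamma,\epsilon/2)$-simulation of $H_+$ in the sense of \cite[Def.~35]{vanApeldoorn2020quantumsdpsolvers}. Combining this with the truncation error by the triangle inequality gives a total error of at most $\epsilon$ in operator norm, which in turn produces an $\epsilon$-approximation of the subnormalized Gibbs state $e^{-H_+}$. The gate count of this controlled simulation is bounded below by applying \eqref{eq:sim_cost_bound} with precision parameter $\epsilon/2$, yielding
\begin{align*}
C_{(e^{-H_+})} \;\geq\; C_{\mathrm{cont}-(M,\gamma,\epsilon/2)} \;\geq\; C_{H_+}\!\bigl(M\gamma,\,\epsilon/4\bigr) \;=\; C_{H_+}\!\bigl(2\pi \ln(7.8\epsilon^{-1}),\,\epsilon/4\bigr),
\end{align*}
which is the claimed inequality.

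The main obstacle is the Fourier-truncation step, where one must pin down the numerical constant $7.8$ and verify that the chosen $M\gamma$ really does suffice to attain precision $\epsilon/2$ on $[1,2]$; this requires the explicit remainder estimate for the truncated expansion of $e^{-x}$ (essentially a Jacobi--Anger/Chebyshev-type bound) carried out in \cite{vanApeldoorn2020quantumsdpsolvers}. Everything downstream is then a routine application of the LCU framework together with the already-established simulation cost bound~\eqref{eq:sim_cost_bound}.
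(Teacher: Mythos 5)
Your high-level reduction is the right one and matches the paper's final chain of inequalities: implement $e^{-H_+}$ via an LCU over the controlled evolutions $e^{im\gamma H_+}$, budget $\epsilon/2$ for the truncation and $\epsilon/2$ for the controlled $(M,\gamma,\cdot)$-simulation, and then apply Eq.~\eqref{eq:sim_cost_bound} to get $C_{(e^{-H_+})}\geq C_{\mathrm{cont}-(M,\gamma,\epsilon/2)}\geq C_{H_+}(M\gamma,\epsilon/4)$. However, there is a genuine gap at the step you label as "the main obstacle" and then defer to the reference: the value $M\gamma \geq 2\pi\ln(7.8\epsilon^{-1})$, and in particular the constant $7.8$, is \emph{not} reported in \cite[Thm.~43]{vanApeldoorn2020quantumsdpsolvers} --- that reference only gives $\tilde\O$ statements. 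Deriving the explicit constant is precisely the content of the lemma's proof here. Concretely, one has to trace the parameter choices $x_0=r+\delta$, $\delta=1/2$, $\delta'=\delta/(r+\delta)$, $L=\lceil\ln(8/\epsilon)/\delta'\rceil$ and $M\geq 2\ln(4\|b\|_1/\epsilon')/\delta'$ through \cite[Lem.~37, Thm.~40, Thm.~43]{vanApeldoorn2020quantumsdpsolvers}, and the one nontrivial quantitative step is a \emph{lower} bound on the $\ell_1$-norm $\|b\|_1=e^{-x_0}\sum_{l=0}^{L}x_0^l/l!$ of the truncated Taylor coefficients. The paper establishes $\|b\|_1\geq 0.98$ by comparing the series tail to a geometric series (using $x_0<(L+2)/2$) and applying Stirling's lower bound on $(L+1)!$; only then does $4\|b\|_1/(\epsilon/2)\geq 7.8\,\epsilon^{-1}$ follow, and together with $1/\delta'\geq\norm{H_+}+1$ and $\gamma=\pi/(\norm{H_+}+1)$ the $\norm{H_+}$-dependence cancels to give $M\gamma\geq 2\pi\ln(7.8\epsilon^{-1})$. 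Without this computation the stated constant is unsupported.

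A secondary inaccuracy: the spectral interval on which $e^{-x}$ must be approximated is not $[1,2]$. The shift \eqref{eq:H_+} only guarantees that the \emph{minimum} eigenvalue of $H_+$ lies in $(1,2]$; the top of the spectrum is of order $\norm{H_+}=\O(\log(n)\epsilon^{-1})$, which is why $r$ with $\norm{H_+}\leq 2r$ enters the construction and why $\gamma$ scales like $(\norm{H_+}+1)^{-1}$ rather than being a constant. Your framing would make $M=\O(\ln(1/\epsilon))$ with constant $\gamma$, which is not how the cost bound arises; it is only the product $M\gamma$ that ends up independent of $\norm{H_+}$.
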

\begin{proof}
The constants for the non-asymptotic scaling in Lem.~\ref{lem:sub_gibbs} are retrieved from the proofs of Ref.~\cite[Lem.~37, Thm.~40, 43]{vanApeldoorn2020quantumsdpsolvers}:
\begin{itemize}
    \item Ref.~\cite[Thm.~43]{vanApeldoorn2020quantumsdpsolvers} 
    requires $\norm{H_+} \leq 2r$, $\delta=1/2$.
    \item Ref.~\cite[Thm.~43]{vanApeldoorn2020quantumsdpsolvers} 
    defines $f(x+x_0)=\sum_{l=0}^\infty a_lx^l=e^{x-x_0}$, $x_0=r+\delta$. \\
    Thus, $a_l=e^{-x_0}\frac{(-1)^l}{l!}$. 
    \item Ref.~\cite[Thm.~40]{vanApeldoorn2020quantumsdpsolvers}
    defines $g\left(\frac{x}{r+\delta}\right)=f(x+x_0)$ and $g(y)=\sum_{l=0}^\infty b_l y^l.$ \\
    Thus, $b_l=a_l(r+\delta)^l$.
    \item Ref.~\cite[Thm.~40]{vanApeldoorn2020quantumsdpsolvers}
    sets $L=\lceil \frac{1}{\delta'}\ln(8/\epsilon)\rceil$ with $\delta'=\delta/(r+\delta)$ and defines the $L$-truncated polynomial approximation of $g$ and it's corresponding $M$-truncated Fourier approximation given by \cite[Lem.~37]{vanApeldoorn2020quantumsdpsolvers}.
    \item Ref.~\cite[Lem.~37]{vanApeldoorn2020quantumsdpsolvers}
    requires 
    \begin{align} \label{eq:M}
        M=\mathrm{max}\left(2\lceil \ln\left(\frac{4\|b\|_1}{\epsilon'}\right) \frac 1{\delta'}\rceil, 0 \right),     
    \end{align}
    where 
    \begin{align}
        \| b\|_1=\sum_{l=0}^L |b_l| = e^{-x_0} \sum_{l=0}^L\frac{x_0^l}{l!}. \label{eq:b_norm}
    \end{align}
    \item Ref.~\cite[Thm.~40]{vanApeldoorn2020quantumsdpsolvers}
    requires a controlled $(M, \gamma= \frac{\pi}{2r+2\delta}, \frac\epsilon 2)$ simulation.
\end{itemize}
We can directly form an upper bound for $\delta'$:
\begin{align} \label{eq:delta_prime}
    \delta' 
    =
    \frac{\frac12}{r+\frac12}
    \leq
    \frac{1}{\norm{H_+}+1}.
\end{align}
We now want to find a lower bound for $\|b\|_1$. We start with rewriting Eq.~\eqref{eq:b_norm}:
\begin{align}\begin{split}
    \|b\|_1 \
    &=
    e^{-x_0} \left(
    \sum_{l=0}^\infty\frac{x_0^l}{l!}
    - \sum_{l=L+1}^\infty\frac{x_0^l}{l!}
    \right) \\
    &=
    1 - e^{-x_0} \sum_{l=L+1}^\infty\frac{x_0^l}{l!} \\
    &=
    1 - e^{-x_0} \sum_{t=0}^\infty\frac{x_0^{L+1+t}}{(L+1+t)!}. \label{eq:remainder_term}
\end{split}\end{align}
Next, we use
\begin{align}
    L+1
    \geq 
    \lceil \frac{1}{\delta'}\ln(8/\epsilon)\rceil
    \geq
    (2r+1)\ln(8/\epsilon) = 2x_0 \ln(8/\epsilon),
\end{align}
and therefore, $x_0<(L+2)/2$. This allows us to compare Eq.~\eqref{eq:remainder_term} to a geometric series:
\begin{align}\begin{split}
    \|b\|_1 \
    &\geq
    1 - e^{-x_0} \frac{x_0^{L+1}}{(L+1)!} \sum_{t=0}^\infty \left(\frac{x_0}{L+2}\right)^t \\
    &\geq
    1 - e^{-x_0} \frac{x_0^{L+1}}{(L+1)!} \sum_{t=0}^\infty \left(\frac{1}{2}\right)^t \\
    &=
    1 - 2 e^{-x_0} \frac{x_0^{L+1}}{(L+1)!}.
\end{split}\end{align}
Next, we use a lower bound on Stirling's approximation
\begin{align}
    z! \geq e^{z\ln(z) - z},
\end{align}
giving us
\begin{align}\begin{split}
    \|b\|_1 
    &\geq 
    1 - 2e^{-x_0 + \ln(x_0)(L+1) + (L+1) - \ln(L+1)(L+1)}, \\
    &= 
    1 - 2e^{- x_0 - (L+1)\left(\ln\left(\frac{L+1}{x_0}\right)-1\right)}.
\end{split}\end{align}
Then, with $x_0\geq 1/2$ and $\epsilon\leq 1$, 
\begin{align}
    \ln\left(\frac{L+1}{x_0}\right)-1 
    \geq
    \ln(2\ln(8))-1,
\end{align}
and thus,
\begin{align}
    \|b\|_1
    \geq 
    1 - 2 e^{-1/2 - 2\ln(8)(2\ln(8)-1)}
    \geq
    0.98.
\end{align}
 Now, that we know a lower bound on $\|b\|_1$, we can finally bound $M$ using \eqref{eq:M} and \eqref{eq:delta_prime}:
\begin{align}
    M 
    \geq 
    2\ln\left( \frac{4\|b\|_1}{\epsilon'}\right) \frac{1}{\delta'}
    \geq
    2 (\norm{H_+}+1)\ln(7.8\epsilon^{-1}).
\end{align}
With $\gamma=\frac{\pi}{2r+2\delta}=\frac{\pi}{\norm{H_+}+1}$, we thus have
\begin{align}
    M\gamma 
    \geq
    2\pi \ln(7.8\epsilon^{-1}).
\end{align}
Then combining Eq.~\eqref{eq:sim_cost_bound} and Lem.~\ref{lem:sub_gibbs} gives
\begin{align}
     C_{(e^{-H_+})} 
     \geq 
     C_{\mathrm{cont}-(M,\gamma,\epsilon/2)} 
     \geq
     C_{H_+}(2\pi \ln(7.8\epsilon^{-1}), \epsilon/4).
\end{align}
\end{proof}
The procedure above gives a sub normalized Gibbs state $e^{-H_+}$. To obtain a normalized one \emph{amplitude amplification} is applied to increase the amplitude by a factor $1/\tr(e^{-H_+})$. The expected number of amplitude amplification iterations using the construction in Ref.~\cite[Sec.~3]{BoyerSearching} is $0.69\sqrt{n/z}$, where $z$ is a lower bound of $\tr(e^{-H_+})$. By our preparation of $H_+$ we ensured that its minimum eigenvalue is $\lambda_\mathrm{min}\leq 2$ and thus $z=e^{-2}$ suffices. Thus, the expected cost for preparing the Gibbs state is
\begin{align} \label{eq:Gibbs_cost}
    C_{\rho} 
    =
    0.69e\, n^{1/2} C_{(e^{-H_+})}
    \geq 
    1.87 n^{1/2} C_{H_+}(2\pi \ln(7.8\epsilon^{-1}), \epsilon/4).
\end{align}
Finally, combining Eq.~\eqref{eq:Gibbs_cost} and Obs.~\ref{obs:sim_cost} gives the following conservative estimate:
\gateCost*

\subsection{Diagonal entry sampling}
\label{sec:DiagonalSamplingProof}
We now derive the result of Estimate~\ref{est:number_samples} that states the required number of Gibbs state preparations on a quantum computer needed to estimate the diagonal entries. The diagonal entries are estimated statistically by measuring the Gibbs states in the computational basis.
For technical reasons, we randomize the total number of samples according to a Poisson distribution Pois$(m)$, where $m$ is the expected number of samples needed. Then, our estimate for each diagonal entry is given by $\hat\rho_{ii}=N_i/m$, where $N_i$ represents the number of measurements resulting in the $i^\mathrm{th}$ state. Note that we are dividing by the expected number of measurements, not the actual one. 

\begin{lemma} \label{lem:sampling1}
    Let $\eta\leq 1/8$. 
		Given $N$ preparations of $\tilde\rho\in\RR^{n\times n}$, where $N$ is drawn from a Poisson distribution with mean $m=2.13\eta^{-2}(n\ln 2 + \ln(1/p))$, we can get estimates $\{\hat\rho_{ii}\}_{i\in\{1,\dots,n\}}$ s.t.\ $\sum_i|\tilde\rho_{ii}-\hat\rho_{ii}|\leq\eta$ with probability $1-p$.
\end{lemma}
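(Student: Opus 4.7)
The plan is to exploit the standard \emph{Poissonization} trick: when the total number of measurements is itself $\mathrm{Pois}(m)$, the counts $N_i$ for the different computational-basis outcomes are \emph{mutually independent}, with $N_i\sim\mathrm{Pois}(m\tilde\rho_{ii})$. This decouples the $n$ counts and lets us apply a product Chernoff bound, which is technically cleaner than the multinomial concentration one would get from fixing $N$.

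Given independence, I would attack the $\ell_1$ error by dualizing:
\begin{align*}
\sum_{i=1}^n\Big|\tfrac{N_i}{m}-\tilde\rho_{ii}\Big|
= \max_{s\in\{\pm1\}^n}\;\sum_{i=1}^n s_i\Big(\tfrac{N_i}{m}-\tilde\rho_{ii}\Big),
\end{align*}
and control each linear functional on the right with a Chernoff bound, then take a union bound over the $2^n$ sign patterns. For a single fixed $s$, set $Z_s=\sum_i s_i(N_i-m\tilde\rho_{ii})$. Using independence and the Poisson MGF,
\begin{align*}
\EE[e^{tZ_s}]=\prod_i\exp\!\big(m\tilde\rho_{ii}(e^{ts_i}-1-ts_i)\big).
\end{align*}
Combining the standard Bernstein-type bound $e^{u}-1-u\le\tfrac{u^2}{2(1-|u|/3)}$ (valid for $|u|<3$) with $\sum_i\tilde\rho_{ii}=1$ gives $\EE[e^{tZ_s}]\le\exp\!\big(\tfrac{mt^2}{2(1-|t|/3)}\big)$. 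A Markov argument, together with an optimization of $t$ taking into account $\eta\le1/8$, then yields a bound of the form $\Pr[Z_s>m\eta]\le\exp(-\alpha m\eta^2)$ with $\alpha$ just a touch below $1/2$; the condition $\eta\le1/8$ is what lets $\alpha$ be pushed up to the value $1/2.13$ needed for the stated constant.

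The final step is the union bound: requiring $2^n\exp(-\alpha m\eta^2)\le p$ translates into $m\ge\alpha^{-1}\eta^{-2}(n\ln 2+\ln(1/p))$, and the assumption $\eta\le1/8$ is precisely what makes $\alpha^{-1}\le 2.13$. The main obstacle I anticipate is purely quantitative: the sub-Gaussian bound with the naive constant $e^{u}-1-u\le u^2$ gives $4\eta^{-2}$ rather than $2.13\eta^{-2}$, so one must be careful to use the Bernstein refinement and then verify that the resulting one-variable optimization over $t$, subject to $\eta\le1/8$, indeed saturates the constant $2.13$. Everything else (independence from Poissonization, the $\pm1$ dualization of the $\ell_1$ norm, and the union bound contributing the $n\ln 2$ term) is standard.
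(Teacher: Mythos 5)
Your proposal is correct and follows essentially the same route as the paper: Poissonization to make the counts $N_i\sim\Pois(m\tilde\rho_{ii})$ independent, an exponential-moment/Chernoff bound, and a factor $2^n$ from the signs (your union bound over $s\in\{\pm1\}^n$ is algebraically the same as the paper's bound $\EE[e^{\lambda|\delta_i|}]\leq 2M_{\delta_i}(\lambda)$ expanded as a product). The only difference is in the final calculus: the paper takes the exact optimizer $t=\ln(1+\eta)$ and Taylor-expands to third order, while your Bernstein refinement yields the exponent $-m\eta^2/(2(1+\eta/3))$, and since $2(1+\eta/3)\leq 25/12<2.13$ for $\eta\leq 1/8$, this indeed delivers the stated constant.
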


\begin{proof}
	Let $N_i$ be the number of times the outcome $i$ was obtained, so that the total $N$ satisfies $N=\sum_i N_i$.
	The randomization of $N$ makes the $N_i$ independent:
	\begin{align}\begin{split}
			\Pr[N_i=x_i\, \forall i\in[n]]
			&=
			\frac{m^{\sum_i x_i} e^{-m}}{(\sum_i x_i)!} 
			(\sum_i x_i)!
			\prod_i \frac{p_i^{x_i}}{x_i!}  \\
			&=
			\prod_i \frac{(mp_i)^{x_i} e^{-mp_i}}{x_i!}  \\
			&=
			\prod_i \Pr[\Pois(mp_i) = x_i].
	\end{split}\end{align}
	The moment generating function for $N_i$ is given by
    \begin{align}
				M_{N_i}(\lambda)
				=
        \EE \left[ e^{\lambda N_i} \right]
        =
        \exp(m p_i (e^{\lambda} -1)). 
    \end{align}
    Define $\delta_i=\hat p_{i}-p_i$. 
		Then
    \begin{align}
			\begin{split}\label{eqn:delta_mgf}
				M_{\delta_i}(\lambda)
				&=
        \EE \left[ e^{\lambda\delta_i} \right] 
				=
				\EE \left[ e^{\lambda(N_i/m - p_i)} \right] \\
				&=
				e^{-\lambda p_i}
				\EE \left[ e^{\lambda/m N_i} \right] \\
				&=
				e^{-\lambda p_i}
				M_{N_i}(\lambda/m) \\
				&= 
        \exp(-\lambda p_i +m p_i (e^{\lambda/m} -1)).
			\end{split}
    \end{align}
		This moment generating function fulfills
		$
		M_{\delta_i}(|\lambda|) 
		\geq 
		M_{\delta_i}(\lambda) 
		$.
		Indeed, for $\lambda > 0$ a series expansion gives
		\begin{align}\begin{split}
			\ln M_{\delta_i}(\lambda) 
			&=
			-\lambda p_i +m p_i (e^{\lambda/m} -1) \\
			&=
			m p_i \sum_{k=2}^\infty \frac{(\lambda/m)^k}{k!} \\
			&\geq 
			m p_i \sum_{k=2}^\infty \frac{(-\lambda/m)^k}{k!}
			=
			\ln M_{\delta_i}(-\lambda) .
		\end{split}\end{align}
    Then, arguing as in the standard proof of the Chernoff bound,
    \begin{align}\begin{split}
        \Pr \left[ \sum_{i=1}^n |\delta_i| \geq \eta \right]
				&\leq e^{-\lambda \eta} \EE \left[ e^{\lambda \sum_{i=1}^n |\delta_i|} \right] \\
        &= e^{-\lambda \eta} \prod_{i=1}^n \EE \left[ e^{\lambda  |\delta_i|} \right]  \\
        &\leq 
        e^{-\lambda \eta} \prod_{i=1}^n \EE \left[ e^{\lambda \delta_i} + e^{-\lambda \delta_i} \right]  \\
        &\leq
        e^{-\lambda \eta} \prod_{i=1}^n 2 M_{\delta_i}(\lambda)  \\
        &= 2^n \exp\left( -\lambda\eta -\lambda \sum_{i=1}^n p_i + m\sum_{i=1}^n p_i(e^{\frac{\lambda}{m}} - 1) \right) \\
		&= 2^n \exp\left( -\lambda (\eta + 1) + m(e^{\frac{\lambda}{m}} - 1) \right). 
    \end{split}\end{align}
		Choosing $\lambda=m\ln(1+\eta)$ gives us
    \begin{align}
        \Pr \left[ \sum_{i=1}^n |\delta_i| \geq \eta \right]
        \leq
        2^n e^{m(\eta-(1+\eta)\ln(1+\eta))}.
    \end{align}
    Then, with $\ln(1+\eta)\geq\eta-\frac{\eta^2}{2}+\frac{29}{96}\eta^3$ for $\eta\leq\frac{1}{8}$,
    \begin{align}\begin{split}
        \Pr \left[ \sum_{i=1}^n |\delta_i| \geq \eta \right]
        &\leq
        2^n e^{m(\eta-(1+\eta)(\eta-\frac{\eta^2}{2}+\frac{29}{96}\eta^3))}\\
        &=
        2^n e^{m(-\frac{\eta^2}{2}+(\frac{1}{2}-\frac{29}{96})\eta^3+\frac{29}{96}\eta^4)} \\
        &\leq
        2^n e^{-\frac{m\eta^2}{2.13}}.
    \end{split}\end{align}
    Demanding $\Pr \left[ \sum_{i=1}^n |\delta_i| \geq \eta \right]\leq p$ and solving for $m$ concludes the proof.
\end{proof}

\begin{lemma}\label{lem:diag_estimation}
    Let $\epsilon\leq 1/4$. 
		Given $N$ approximate preparations $\tilde\rho$ of $\rho$ on a quantum computer with ${\sum_i|\tilde\rho_{ii} - \rho_{ii}| \leq \frac\epsilon 8}$, where $N$ is drawn from a Poisson distribution with mean $m=137\epsilon^{-2}(n\ln(2) + \ln(1/p))$, we can obtain classical estimates $\hat\rho_{ii}$ s.t.\ $\sum_i|\tilde\rho_{ii}-\hat\rho_{ii}|\leq\frac\epsilon 4$ with probability $1-p$.
\end{lemma}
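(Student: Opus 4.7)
The plan is to deduce Lemma~\ref{lem:diag_estimation} as a direct corollary of Lemma~\ref{lem:sampling1}, simply by interpreting $\tilde\rho$ as the object whose diagonal is being empirically estimated. Lemma~\ref{lem:sampling1} is a statement about sampling from a probability distribution $\{\tilde p_i\}$ associated with some state $\tilde\rho$ and gives guarantees on how close the empirical frequencies are to the $\tilde p_i$. Since the quantum circuit prepares $\tilde\rho$ and computational-basis measurements draw outcomes $i$ with exactly probability $\tilde\rho_{ii}$, the hypothesis of Lemma~\ref{lem:sampling1} is met with no further work, regardless of how $\tilde\rho$ compares to the ideal $\rho$. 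In particular, the condition $\sum_i|\tilde\rho_{ii}-\rho_{ii}|\le \epsilon/8$ is not used for the claimed bound of $\sum_i|\tilde\rho_{ii}-\hat\rho_{ii}|\le \epsilon/4$; it would only enter if one wanted to additionally bound the distance between $\hat\rho$ and the ideal $\rho$ via a triangle inequality.

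Concretely, I would proceed as follows. First, let $N \sim \mathrm{Pois}(m)$ be the total number of Gibbs-state preparations with $m = 137\epsilon^{-2}(n\ln 2 + \ln(1/p))$, and let $N_i$ count how many of the $N$ computational-basis measurements yielded outcome $i$; define the estimator $\hat\rho_{ii} := N_i/m$. Second, apply Lemma~\ref{lem:sampling1} to the state $\tilde\rho$ with the parameter $\eta := \epsilon/4$. The assumption $\epsilon\le 1/4$ gives $\eta \le 1/16 \le 1/8$, satisfying the range hypothesis of Lemma~\ref{lem:sampling1}. Third, verify that the sample budget $m$ is large enough: Lemma~\ref{lem:sampling1} requires
\begin{equation*}
m \;\ge\; 2.13\,\eta^{-2}(n\ln 2 + \ln(1/p)) \;=\; 2.13\cdot 16\,\epsilon^{-2}(n\ln 2 + \ln(1/p)) \;\approx\; 34.08\,\epsilon^{-2}(n\ln 2 + \ln(1/p)),
\end{equation*}
and $137 \ge 34.08$, so the hypothesis holds with significant slack. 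Applying Lemma~\ref{lem:sampling1} then yields $\sum_i|\tilde\rho_{ii}-\hat\rho_{ii}|\le \epsilon/4$ with probability at least $1-p$, which is exactly the conclusion.

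The only subtlety worth flagging is that the constant $137$ is wastefully large for the literal statement as written: the bound $\eta=\epsilon/4$ requires only $m\approx 34\epsilon^{-2}(\cdots)$. The oversized prefactor is presumably present because the lemma is meant to be combined downstream with the hypothesis $\sum_i|\tilde\rho_{ii}-\rho_{ii}|\le \epsilon/8$ via a triangle inequality, in which case one would apply Lemma~\ref{lem:sampling1} with the tighter $\eta=\epsilon/8$ and obtain $m \ge 2.13\cdot 64\,\epsilon^{-2}(n\ln 2 + \ln(1/p))\approx 136.3\,\epsilon^{-2}(n\ln 2 + \ln(1/p))$, matching the stated $137$. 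Either way, since $137$ exceeds the value required by Lemma~\ref{lem:sampling1} for $\eta=\epsilon/4$, the literal statement follows immediately; there is no genuine obstacle in the argument beyond bookkeeping of the constants and a reminder that the measurement statistics on $\tilde\rho$ are determined by $\tilde\rho$ itself.
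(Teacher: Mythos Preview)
Your proof of the literal statement is correct, and your diagnosis of the constant is exactly right. In fact, the lemma as stated contains a typo: the intended conclusion is $\sum_i|\hat\rho_{ii}-\rho_{ii}|\le \epsilon/4$ (with the ideal $\rho$, not $\tilde\rho$), which is precisely what the paper's proof establishes. The paper applies Lemma~\ref{lem:sampling1} with $\eta=\epsilon/8$ to obtain $\sum_i|\hat\rho_{ii}-\tilde\rho_{ii}|\le \epsilon/8$, and then combines this with the hypothesis $\sum_i|\tilde\rho_{ii}-\rho_{ii}|\le \epsilon/8$ via the triangle inequality to reach $\epsilon/4$. This is exactly the ``downstream'' use you anticipated, and it explains the constant: $2.13\cdot 64 \approx 136.3$, rounded up to $137$.

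So while your argument for the statement as written is valid (and your observation that the hypothesis on $\tilde\rho$ versus $\rho$ is unused there is correct), the paper's own proof addresses the corrected version, and your remark about the intended triangle-inequality step matches the paper's approach precisely.
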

\begin{proof}
    Using Lem.~\ref{lem:sampling1} with $\eta=\epsilon/8$ gives $\sum_i|\hat\rho_{ii} - \tilde\rho_{ii}| \leq \frac\epsilon 8$. Then, by triangular inequality we get
    \begin{align}
        \sum_i|\hat\rho_{ii} - \rho_{ii}| 
        \leq 
        \sum_i|\hat\rho_{ii}- \tilde\rho_{ii}| + \sum_i|\tilde\rho_{ii} - \rho_{ii}| 
        \leq \frac{\epsilon}{8} + \frac{\epsilon}{8} = \frac\epsilon 4.
    \end{align}
\end{proof}

For small $\epsilon$ the required mean number of samples approaches
\begin{align}
    m=128\epsilon^{-2}(n\ln(2) + \ln(1/p)).
\end{align}
For the benchmarking, we assume $\ln(1/p))\geq4$.

\subsection{Quantum HU algorithm}
Here we give the high-level HU routine that uses quantum subroutines for estimates involving the Gibbs state $\rho$. The quantum algorithms are discussed in Sec.~\ref{sec:quant_imp}.
\begin{algorithmCap}
\begin{algorithm}[H]
    \caption{Hamiltonian Updates with a quantum computer}\label{alg:qc_HU}
    \begin{algorithmic}[1]
        \Require Query access to $\epsilon/4$-precise subroutines 
        \begin{align*}
            &\textsc{qc\_gibbs\_trace\_product}: H, A\in\RR^{n\times n} \mapsto \tr(A\rho_H),   \\
            &\textsc{qc\_gibbs\_diagonals}: H\in\RR^{n\times n} \mapsto \{(\rho_H)_{ii}\}_{i=1,\dots,n},
        \end{align*}
        normalized cost matrix $C\in\RR^{n\times n}$, threshold objective value $\gamma$, precision parameter $\epsilon$, initial step lengths $\lambda_c$ and $\lambda_d$, momentum hyperparameter $\beta$  
        \Statex
        \noindent
        \Ensure \quad \begin{tabular}[t]{l|l}  
              \textbf{Condition} & \textbf{Output} \\
            \hline
            (\ref{eqn:exact_program}) is feasible & $H$, such that $\rho_H$ is $\epsilon$-feasible\\
            (\ref{eqn:exact_program}) is not $\epsilon$-feasible & false \\
            else & undefined ($H$, such that $\rho_H$ is $\epsilon$-feasible, or false)
        \end{tabular}	
        \Statex
        \Function{quantum\_hamiltonian\_updates}{$C, \gamma, \epsilon, \lambda_c, \lambda_d, \beta$}
        \State $P_c \gets - C + \gamma \Id$
        \State $H \gets 0_{n\times n}$
        \State {$M \gets 0_{n\times n}$}
        \State $F=-\ln(n)$ 
        \\
        \While{$F\leq 0$}
            \State $\tr(P_c\rho) \gets \textsc{qc\_gibbs\_trace\_product}(H, P_c)$
            \If{$\tr(P_c\rho)>\frac 3 4\epsilon$}
                \State $\Delta H \gets \tr(P_c \rho) P_c  + \frac{\beta}{\lambda_c} M$
                
                \State $H, F, \lambda_c \gets \textsc{quantum\_update}(H,\Delta H, F, \epsilon, \lambda_c)$
                \Comment{Cost update}
                \State $M \gets \lambda_c\Delta H$
                \State \textbf{continue}
            \EndIf
            \\
            \State $\{\rho_{ii}\} \gets \textsc{qc\_gibbs\_diagonals}(H)$ 
            \Comment{Estimate diagonal of $\rho$ on QC}
            \If{$\sum_i|\rho_{ii}-1/n|> \frac 3 4\epsilon$}
                \State $P_d^{\ell_2} \gets \sum_i (\rho_{ii} - 1/n) |i\rangle\langle i| / \max_i|\rho_{ii} - 1/n|$
                \State $\Delta H \gets P_d^{\ell_2} + \frac{\beta}{\lambda_d} M$
                
                \State $H, F, \lambda_d \gets \textsc{quantum\_update}(H,\Delta H, F, \epsilon, \lambda_d)$
                \Comment{Diag.\ update}
                \State $M \gets \lambda_d\Delta H$
                \State \textbf{continue}
            \EndIf
            \\
            \State \Return $H$
        \EndWhile
        \State \Return false
        \EndFunction
    \end{algorithmic}
\end{algorithm}
\end{algorithmCap}

\begin{algorithmCap}
\begin{algorithm}[H]
    \caption{Update function with a quantum computer}\label{alg:qc_HU_update}
    \begin{algorithmic}[1]
        \Require Query access to $\epsilon/4$-precise subroutine 
        \begin{align*}
            &\textsc{qc\_gibbs\_trace\_product}: H, A\in\RR^{n\times n} \mapsto \tr(A\rho_H),
        \end{align*}
        Hamiltonian $H$, update matrix $\Delta H$, free energy bound $F$, precision parameter $\epsilon$, current step length $\lambda_c$ or $\lambda_d$, number of trace estimations for free energy bound $J$
        \Statex
        \Ensure Updated Hamiltonian $H_\mathrm{new}$, updated free energy bound $F$, updated step length $\lambda_c$ or $\lambda_d$
        \Statex
        \Function{quantum\_update}{$H, \Delta H, F, \epsilon, \lambda$}
        \State $H_\mathrm{new} \gets H + \lambda \Delta H$ 
        \\
        \State $\tr(\Delta H \rho_\mathrm{new}) \gets \textsc{qc\_gibbs\_trace\_product}(H_\mathrm{new}, \Delta H)$
        \While{$\tr(\Delta H \rho_\mathrm{new}) < \epsilon/4$}: 
        \Comment{Check for overshoots; finite precision}
        \State $\lambda \gets 0.5 \lambda$
        \State $H_\mathrm{new} \gets H + \lambda \Delta H$ 
        \State $\tr(\Delta H \rho_\mathrm{new}) \gets \textsc{qc\_gibbs\_trace\_product}(H_\mathrm{new}, \Delta H)$
        \EndWhile
        \\
        \For{$j \in [J]$}:
        \Comment{Estimate the change in free energy}
        \State $\tr(\rho_{H+ \frac{j}{J}\lambda\Delta H} \Delta H) 
        \gets \textbf{qc\_gibbs\_trace\_product}(H+ \frac{j}{J}\lambda\Delta H, \Delta H)$
        \State $ F \gets F + \frac 1J \left(\tr(\rho_{H+ \frac{j}{J}\lambda\Delta H} \Delta H) - \frac \epsilon 4\right)$
        \Comment{Account for finite precision}
        \EndFor
        \\
        \State $\lambda \gets 1.3 \lambda$
        \\
        \State \Return $H_\mathrm{new}, F, \lambda$
        \EndFunction
    \end{algorithmic}
\end{algorithm}
\end{algorithmCap}

\end{document}